\numberwithin{equation}{section}
\theoremstyle{plain}
\newtheorem{theorem}{Theorem}
\numberwithin{theorem}{section}
\newtheorem{lemma}[theorem]{Lemma}                              
\newtheorem{proposition}[theorem]{Proposition}
\newtheorem{corollary}[theorem]{Corollary}
\theoremstyle{definition}
\newtheorem{definition}[theorem]{Definition}
\newtheorem{example}[theorem]{Example}
\newtheorem{remark}[theorem]{Remark}
\newtheorem{assumption}[theorem]{Assumption}
\def \s {{\sigma}}
\def \g {{\gamma}}
\def \b {{\beta}}
\def \xbar {\bar{x}}
\def \zbar {\bar{z}}
\def \R {\mathbb{R}}
\def \p {\partial}
\def \t {\tau}
\newcommand{\<}{\langle}
\renewcommand{\>}{\rangle}
\renewcommand{\(}{\left(}
\renewcommand{\)}{\right)}
\renewcommand{\[}{\left[}
\renewcommand{\]}{\right]}
\newcommand\Eb{\mathbb{E}}
\newcommand\Pb{\mathbb{P}}
\newcommand\Rb{\mathbb{R}}
\newcommand\Ac{\mathscr{A}}
\newcommand\Bc{\mathscr{B}}
\newcommand\Fc{\mathscr{F}}
\newcommand\Gc{\mathscr{G}}
\newcommand\Lc{\mathscr{L}}
\newcommand\Mc{\mathscr{M}}
\newcommand\Nc{\mathscr{N}}
\newcommand\Oc{\mathscr{O}}
\newcommand\Pc{\mathscr{P}}
\newcommand\Sc{\mathscr{S}}
\newcommand\Jc{\mathscr{J}}
\newcommand\Om{\Omega}
\newcommand\sig{\sigma}
\newcommand\gam{\gamma}
\newcommand\Gam{\Gamma}
\newcommand\lam{\lambda}
\newcommand\del{\delta}
\newcommand\xb{\bar{x}}
\newcommand\yb{\bar{y}}
\newcommand\zb{\bar{z}}
\newcommand\rhob{\overline{\rho}}
\newcommand\Hv{\mathbf{H}}
\newcommand\Cv{\mathbf{C}}
\newcommand\mv{\mathbf{m}}
\newcommand\Xt{\widetilde{X}}
\newcommand\xt{\widetilde{x}}
\newcommand\ut{\widetilde{u}}
\renewcommand\d{\partial}
\newcommand\ii{\mathtt{i}}
\newcommand\dd{\mathrm{d}}
\newcommand\ee{\mathrm{e}}
\newcommand\BS{\mathrm{BS}}
\begin{document}

\title{Explicit implied volatilities for multifactor local-stochastic volatility models}

\author{
Matthew Lorig
\thanks{Department of Applied Mathematics, University of Washington, Seattle, USA.
\textbf{e-mail}: mattlorig@gmail.com.}
\and
Stefano Pagliarani
\thanks{CMAP, Ecole Polytechnique Route de Saclay, 91128 Palaiseau Cedex, France.
\textbf{e-mail}: stepagliara1@gmail.com. Work partially supported by the Chair {\it Financial Risks} of the {\it Risk Foundation}.}
\and
Andrea Pascucci
\thanks{Dipartimento di Matematica, Universit\`a di Bologna, Bologna, Italy.
\textbf{e-mail}: andrea.pascucci@unibo.it}
}

\date{This version: \today}

\maketitle

\begin{abstract}
We consider an asset whose risk-neutral dynamics are described by a general class of local-stochastic
volatility models and derive a family of asymptotic expansions for European-style option prices and implied volatilities.
{We also establish rigorous error estimates for these quantities.}
Our implied volatility expansions are explicit; they do not require any special functions nor do they require numerical integration.  To illustrate the accuracy and versatility of our method, we implement it under
four
different model dynamics: CEV local volatility,
Heston stochastic volatility, $3/2$ stochastic volatility, and SABR local-stochastic volatility.
\end{abstract}

\noindent
\textbf{Keywords}:  implied volatility, local-stochastic volatility, CEV, Heston, SABR.

%
%

\section{Introduction}
\label{sec:intro}
Local-stochastic volatility (LSV) models combine the features of local volatility (LV) and
stochastic volatility (SV) models by describing the instantaneous volatility of an underlying $S$
by a function $\sig(t,S_t,Y_t)$ where $Y$ is some auxiliary, possibly multidimensional, stochastic
process (see, for instance, \cite{lipton2002}, \cite{AlexanderNogueira2004}, \cite{Ewald2005},
\cite{Henry-Labordere2009} and \cite{Clark2010}). Typically, unobservable LSV (or SV or LV) model
parameters are obtained by calibrating to implied volatilities that are observed on the market.
For this reason closed-form approximations for model-induced implied volatilities are useful.  A
number of different approaches have been taken for computing approximate implied volatilities in LV,
SV and LSV models.  We review some of these approaches below.


Concerning LV models, perhaps the earliest and most well-known implied volatility result is due to
\cite{hagan-woodward}, who use singular perturbation methods to obtain an implied volatility
expansion for general LV models.  For certain models (e.g., CEV) they obtain closed-form
approximations. More recently, \cite{lorigCEV} uses regular perturbation methods to obtain an
implied volatility expansion when a LV model can be written as a regular perturbation around
Black-Scholes. \cite{lorig-jacquier} extend and refine the results of \cite{lorigCEV} to find
closed-form approximations of implied volatility for local L\'evy-type models with jumps.
\cite{gatherallocal} examines the small-time asymptotics of implied volatility for LV models
using heat kernel methods.


There is no shortage of implied volatility results for SV models either.
\cite{lorigfouquesircar} (see also \cite{fpss}) derive an asymptotic expansion for general
multiscale stochastic volatility models using combined singular and regular perturbation theory.
\cite{forde2011small} use the Freidlin-Wentzell theory of large deviations for SDEs to obtain
the small-time behavior of implied volatility for general stochastic volatility models with zero
correlation. Their work adds mathematical rigor to previous work by
\cite{lewis-geometry}. \cite{forde2009small} use large deviation techniques to obtain the
small-time behavior of implied volatility in the Heston model (with correlation).  They further
refine these results in \cite{forde-jacquier-lee}.
{More recently, \cite{lorig-jacquier-2} provide an explicit implied volatility approximation for any model with an analytically tractable characteristic function, which includes both affine stochastic volatility and exponential L\'evy models.}


Concerning LSV models, perhaps the most well-known implied volatility result is due to
\cite{sabr}, who use WKB approximation methods to obtain implied volatility asymptotics in a LSV
model with a CEV-like factor of local volatility and a GBM-like auxiliary factor of volatility
(i.e., the SABR model). Another important contribution is due to \cite{berestycki-busca-florent},
who show that implied volatility in an LSV setting can be obtained by solving a quasi-linear
parabolic partial differential equation. More recently, \cite{henry2005general} uses a heat kernel
expansion on a Riemann manifold to derive first order asymptotics for implied volatility for any
LSV model.  As an example, he introduces the $\lambda$-SABR model, which is a LSV model with a
mean reverting auxiliary factor of volatility, and obtains closed form asymptotic formulas for
implied volatility in this setting.  See also \cite{laborderebook}. Finally, we mention
\cite{Watanabe87} and the recent work of \cite{BenhamouGobetMiri2010} and \cite{BompisGobet2012}
who use Malliavin calculus techniques to derive closed-form approximations for implied volatility
in an LSV setting.
There are also some model-free results concerning the extreme-strike behavior of implied
volatility.  Most notably, we mention the work of \cite{lee2004moment} and
\cite{lee2011asymptotics}.

In this paper, we provide closed-form approximations for implied volatility for a very general
class of LSV models. We show (through a series of numerical experiments) that our 
approximation performs favorably when compared to other well-known implied volatility
approximations (e.g., \cite{hagan-woodward} for CEV, \cite{forde-jacquier-lee} for Heston, and
\cite{sabr} for SABR).
Additionally, we prove that our implied volatility expansion satisfies some rigorous error bounds for short-maturities.  As a byproduct of the
implied volatility analysis, we obtain some 
results concerning the short-maturity behavior of the Black-Scholes price, which appear to be new
and of some independent interest.
All of our results are consistent with the
previously derived short-maturity asymptotic results appearing in
{\cite{berestycki2002asymptotics}, \cite{berestycki-busca-florent} and
\cite{BompisGobet2012}.}
The methodology presented in this paper builds upon the asymptotic pricing formulas {first derived in
\cite{pagliarani2011analytical} for scalar diffusions and later extended in \cite{pascucci} and
\cite{lorig-pagliarani-pascucci-1} for scalar L\'evy-type processes.}
\par
The rest of this paper proceeds as follows:
In Section \ref{sec:model}, we introduce a general class of local-stochastic volatility models.  We also derive a family of asymptotic expansions for European option prices and, under certain assumptions, provide rigorous error bounds for our pricing approximation.  In Section \ref{sec:impvol} we translate our asymptotic price expansion into an asymptotic expansion of implied volatility.
{In Section \ref{sec:error} we establish rigorous error estimates for both our pricing and implied volatility expansions.}
Finally, in Section \ref{sec:examples} we test our implied volatility approximation on
four
well-known models: CEV local volatility,
Heston stochastic volatility, three-halves stochastic volatility and SABR local-stochastic
volatility. {Appendix \ref{sec:lemmas} contains the results for the Black-Scholes price at short
maturities.
}

%
%

\section{Asymptotic pricing for a general class of LSV models}
\label{sec:model} For simplicity, we assume a frictionless market, no arbitrage, zero interest
rates and no dividends.  We take, as given, an equivalent martingale measure $\Pb$, chosen by the
market on a complete filtered probability space $(\Om,\Fc,\{\Fc_t,t\geq0\},\Pb)$.  The filtration
$\{\Fc_t,t\geq0\}$ represents the history of the market.  All stochastic processes defined below
live on this probability space and all expectations are taken with respect to $\Pb$.  We consider
a strictly positive asset $S$ whose risk-neutral dynamics are given by $S  =  \exp (X)$ where $X=Z^{(1)}$ is the first component of
a $d$-dimensional diffusion $Z=(X,Y)$, which solves
\begin{align}
\left.
\begin{aligned}
 \dd Z^{(i)}_t
    &=  \mu_{i}(t,Z_t) \dd t + \sig_{i}(t,Z_t) \dd W^{(i)}_t , & Z_0 &= z \in \Rb^{d} , \\
\dd \< W^{(i)}, W^{(j)} \>_t
    &=  \rho_{ij}(t,Z_t) \, \dd t , &
| \rho_{ij} |  &< 1 .
\end{aligned} \right\}
\label{eq:StochVol}
\end{align}
We assume that SDE \eqref{eq:StochVol} has a unique strong solution. 
Sufficient conditions for the existence of a unique strong solution can be found, for example, in
\cite{Ikedabook} or \cite{pascuccibook}.  We also assume that the coefficients are such that $\Eb
[ S_t ]< \infty$ for all $t \in [0,{T_{0}}]$ for some positive $T_{0}$.

Let $V_t$ be the time $t$ value of a European derivative, expiring at time
$T>t$ with payoff $\varphi(X_T)$. Using risk-neutral pricing, 
to value a European-style option we must compute functions of the form
\begin{align}
u(t,x,y)
    &:= \Eb [\varphi(X_T) | X_t = x , Y_t = y] . \label{eq:v}
\end{align}
It is well-known that, under mild assumptions, the function $u$ satisfies the Kolmogorov backward equation
\begin{align}
(\d_t + \Ac(t)) u(t,x,y)
    &=  0 , &
u(T,x,y)
    &=  \varphi(x), \label{eq:v.pde}
\end{align}
where the operator $\Ac(t)$ is given explicitly by
\begin{align}
\Ac(t)
    &=  \frac{1}{2}\sum_{i,j=1}^{d}\rho_{ij}(t,z)\sig_{i}(t,z)\sig_{j}(t,z)\d_{z_{i}z_{j}} + \sum_{i=1}^{d} \mu_{i}(t,z) \d_{z_i}. \label{eq:A}
\end{align}
As a standing assumption, we impose $\mu_{1}=-\tfrac{1}{2} \sig_{1}^2$ so as to ensure that
$S=\ee^X$ is a martingale. 
For many models in finance, the dimension of the diffusion is $d=1$ (e.g., CEV) or $d=2$ (e.g., Heston, SABR).
For the special cases $d=1,2$, we write {$\Ac(t)$ as} 
\begin{align}
\Ac(t)
    &=  a(t,x,y) ( \d_x^2 - \d_x ) + f(t,x,y) \d_{y} + b(t,x,y) \d_y^2 + c(t,x,y) \d_x \d_y ,\qquad (x,y)\in\mathbb{R}^{2} \label{eq:A-2d}
\end{align}
where
\begin{align}
 a &:=  \frac{\sig_{1}^2}{2} ,  &
f   &:= \mu_2 , &
b &:=  \frac{\sig_{2}^2}{2} ,&
 c  & :=  \rho \sig_{1} \sig_{2}. \label{eq:abc}
\end{align}
When $d=1$ (i.e., local volatility models) only $a$ appears.
\begin{remark}[Deterministic interest rates] For deterministic
interest rates $r(t)$ one must compute expectations of the form
\begin{align}
\ut(t,\xt,y)
    &:= \Eb \[\ee^{- \int_t^T r(s) \dd s}  \varphi(\Xt_T) | \Xt_t = \xt , Y_t = y \] , &
    &\text{where}&
\dd \Xt_t
    &= \dd X_t + r(t) \dd t.
\end{align}
In this case a simple change of variables
\begin{align}
u(t,x(t,\xt),y)
    &:= \ee^{\int_t^T r(s)} \ut(t,\xt,y)  , &
x(t,\xt)
    &:= \xt + \int_t^T r(s) \dd s, \label{eq:new.u}
\end{align}
reveals that the function $u$, as defined as in \eqref{eq:new.u}, satisfies \eqref{eq:v.pde}.
\end{remark}

\subsection{Polynomial expansions of $\Ac(t)$}
\label{sec:approximating}
We note that \eqref{eq:A} is a special case of the more general $d$-dimensional second order differential operator
\begin{align}\label{operator_AA}
 \Ac(t) &= \sum_{i,j=1}^{d}a_{ij}(t,z)\p_{z_{i}z_{j}}+\sum_{i=1}^{d}a_{i}(t,z)\p_{z_{i}},\qquad t\in\R_+,\ z\in
\mathbb{R}^d .
\end{align}
Equivalently, we can also write the operator $\Ac(t)$ in a more compact form, i.e.
\begin{align}
\Ac(t)
    &:= \sum_{|\alpha |\leq 2} a_{\alpha}(t,z) D^{\alpha}_{z} , \qquad
t
    \in \Rb_+,
z
    \in \mathbb{R}^d , \label{operator_A}
\end{align}
where, using standard multi-index notation we have
\begin{align}
\alpha
    &=  (\alpha_1,\cdots,\alpha_d)\in \mathbb{N}^{d}_{0}, &
|\alpha|
    &=  \sum_{i=1}^{d}\alpha_i, &
D_{z}^{\alpha}
    =       \partial^{\alpha_1}_{z_1}\cdots \partial^{\alpha_d}_{z_d} ,
\end{align}
In this section we introduce a family of expansion schemes for {$\Ac(t)$}, which we shall use to construct closed-form approximate solutions (one for each family) of Cauchy problem \eqref{eq:v.pde}.
\begin{definition}
\label{def:expansion}
Let $N\in\mathbb{N}_0$. We say that 
$(\Ac_n(t))_{0\leq n\leq N}$ is {\it an $N$th order polynomial expansion 
}
if
\begin{align}
\Ac_n(t,z)\equiv\Ac_n(t)
    :=  \sum_{|\alpha |\leq 2}  a_{\alpha,n}(t,z) D_z^{\alpha} \label{eq:A.expand_bis}
\end{align}
where
\begin{enumerate}
\item[(i)] for any $t \in [0,T]$ the functions $a_{\alpha,n}(t,\cdot)$ are polynomials, and for any $z\in\Rb^{d}$ the functions $a_{\alpha,n}(\cdot,z)$ belong to $L^{\infty}([0,T])$,
\item[(ii)] for any $t \in [0,T]$ we have $a_{\alpha,0}(t,\cdot) =a_{\alpha,0}(t)$, and the constant-in-space coefficients second order operator $\Ac_0(t)$
is elliptic.
\end{enumerate}
\end{definition}
\noindent
The idea behind our approximation method is to choose a polynomial expansion such that the
sequences of partial sums $\sum_{n=0}^N a_{\alpha,n}(t)$ approximate the coefficients
$a_{\alpha}(t,z)$, either pointwise or in some norm.
Below, we present some examples.
\begin{example}[Taylor polynomial expansion]
\label{example:Taylor}
Assume the coefficients $a_{\alpha}(t,\cdot)\in C^N(\mathbb{R}^d)$. Then, for any fixed
$\bar{z}\in\Rb^{d}$, $n\leq N$, we define $a_{\alpha,n}$ as the $n$-th order term of the Taylor
expansion of $a_{\alpha}$ in the spatial variables around $\zb$.  That is, we set
\begin{align}\label{eq:taylor_polyn}
a_{\alpha,n}(\cdot,z)
  &=        \sum_{|\beta|=n}\frac{D_z^{\beta}a_{\alpha}(\cdot,\bar{z})}{\beta!}(z-\bar{z})^{\beta}, &
n
    &\leq N, &
|\alpha|
    &\leq 2 ,
\end{align}
where as usual {$\beta!=\beta_{1}!\cdots\beta_{d}!$ and $z^\beta = z_1^{\beta_1} \cdots z_d^{\beta_d}$}.
\end{example}

\begin{example}[Time-dependent Taylor polynomial expansion]
\label{example:TimeTaylor}
Assume the coefficients $a_{\alpha}(t,\cdot)\in C^N(\mathbb{R}^d)$. Then, for any fixed
$\bar{z}:\Rb_+ \to \R^{d}$, we define $a_{\alpha,n}$ as the $n$-th order term of the Taylor
expansion of $a_{\alpha}$ in the spatial variables around $\zbar(\cdot)$.  That is, we set
\begin{align}
 a_{\alpha,n}(\cdot,z)
    &=\sum_{|\b|=n}\frac{D_z^{\b}a_{\alpha}(\cdot,\bar{z}(\cdot))}{\b!}(z-\bar{z}(\cdot))^{\b}, &
n
    &\leq N, &
|\alpha|
    &\leq 2 .
\end{align}
\end{example}

\begin{example}[Hermite polynomial expansion]\label{example:Hilbert}
Hermite expansions can be useful when the diffusion coefficients are not smooth.  A remarkable
example in financial mathematics is given by the Dupire's local volatility formula for models with
jumps (see \cite{frizyor2013}). In some cases, e.g., the well-known Variance-Gamma model, the
fundamental solution (i.e., the transition density of the underlying stochastic model) has
singularities.  In such cases, it is natural to approximate it in some $L^{p}$ norm rather than in
the pointwise sense. For the Hermite expansion centered at $\zb$, one sets
\begin{align}
a_{\alpha,n}(t,z)
    &=  \sum_{|\b|=n} \< \Hv_\b(\cdot-\bar{z}) , a_{\alpha}(t,\cdot) \>_{\Gam} \Hv_\b(z-\bar{z}),&
n
    &\geq 0, &
|\alpha|
    &\leq 2 .
\end{align}
where the inner product $\<\cdot,\cdot\>_\Gam$ is an integral over $\Rb^d$ with a Gaussian
weighting centered at $\zb$ and the functions $\Hv_\beta(z) = H_{\beta_1}(z_1) \cdots
H_{\beta_d}(z_d)$ where $H_n$ is the $n$-th one-dimensional Hermite polynomial (properly
normalized so that $\< \Hv_\alpha, \Hv_\beta \>_\Gam = \del_{\alpha,\beta}$ with
$\del_{\alpha,\beta}$ being the Kronecker's delta function).
\end{example}

%
%

\subsection{Formal solution}
\label{sec:dyson}
In this section, we introduce a heuristic procedure to construct an approximate solution of the backward Cauchy problem \eqref{eq:v.pde}.  Hereafter we will explicitly indicate $t$-dependence in all operators.  On the other hand, we will generally hide $z$-dependence, except where it is needed for clarity.

Let us consider a polynomial expansion $(\Ac_n(t))_{n\geq 0}$, and assume that the operator $\Ac(t)$ in \eqref{operator_A} can be formally written as
\begin{align}
\Ac(t)
    &=  \Ac_0(t) + \Bc(t) , &
\Bc(t)
    &=  \sum_{n=1}^\infty \Ac_n(t), \label{eq:A.expand} 
\end{align}
Inserting expansion \eqref{eq:A.expand} for $\Ac(t)$ into Cauchy problem \eqref{eq:v.pde} we find
\begin{align}
( \d_t + \Ac_0(t) ) u(t)
    &=  - \Bc(t) u(t) , &
u(T)
    &=\varphi .
\end{align}
By Duhamel's principle, we have
\begin{align}
u(t)
    &=  \Pc_0(t,T)\varphi + \int_t^T \dd t_1 \, \Pc_0(t,t_1) \Bc(t_1) u(t_1) , \label{eq:u.duhemel}
\end{align}
where $\Pc_0(t,T)$ is the \emph{semigroup} of operators generated by $\Ac_0(t)$; we will explicitly
define $\Pc_0(t,T)$ in \eqref{eq:u0}.
Inserting expression \eqref{eq:u.duhemel} for $u$ into the right-hand side of \eqref{eq:u.duhemel} and iterating we obtain
\begin{align}
u(t_0)
    &=  \Pc_0(t_0,T)\varphi
            + \int_{t_0}^T \dd t_1 \, \Pc_0(t_0,t_1) \Bc(t_1) \Pc_0(t_1,T)\varphi \\ & \qquad
            + \int_{t_0}^T \dd t_1 \int_{t_1}^T \dd t_2 \, \Pc_0(t_0,t_1) \Bc(t_1)
                \Pc_0(t_1,t_2) \Bc(t_2) u(t_2) \\
    &=  \cdots \\
    &=  \Pc_0(t_0,T)\varphi + \sum_{k=1}^\infty
            \int_{t_0}^T \dd t_1 \int_{t_1}^T \dd t_2 \cdots \int_{t_{k-1}}^T \dd t_k
            \\ & \qquad
            \Pc_0(t_0,t_1) \Bc(t_1)
            \Pc_0(t_1,t_2) \Bc(t_2) \cdots
            \Pc_0(t_{k-1},t_k) \Bc(t_k)
            \Pc_0(t_k,T)\varphi
            \label{eq:dyson} \\
    &=  \Pc_0(t_0,T)\varphi + \sum_{n=1}^\infty \sum_{k=1}^n
            \int_{t_0}^T \dd t_1 \int_{t_1}^T \dd t_2 \cdots \int_{t_{k-1}}^T \dd t_k
            \\ & \qquad \sum_{i \in I_{n,k}}
            \Pc_0(t_0,t_1) \Ac_{i_1}(t_1)
            \Pc_0(t_1,t_2) \Ac_{i_2}(t_2) \cdots
            \Pc_0(t_{k-1},t_k) \Ac_{i_k}(t_k)
            \Pc_0(t_k,T)\varphi,
            \label{eq:u.dyson} \\
I_{n,k}
    &= \{ i = (i_1, i_2, \cdots , i_k ) \in \mathbb{N}^k : i_1 + i_2 + \cdots + i_k = n \}.
            \label{eq:Ink}
\end{align}
To obtain \eqref{eq:u.dyson} from \eqref{eq:dyson} we have used the fact that from \eqref{eq:A.expand} the operator $\Bc(t)$ is an infinite sum, and we have partitioned on the sum $(i_1 + i_2 + \cdots + i_k)$ of the subscripts of the $(\Ac_{i_k}(t))$.  In light of expansion \eqref{eq:u.dyson} we set
\begin{align}
u
    &=  \sum_{n=0}^\infty u_n , \label{eq:u.sum}
\end{align}
where we have defined
\begin{align}
u_0(t_0)
    &:= \Pc_0(t_0,T) \varphi, \label{eq:u0.def} \\
u_n(t_0)
    &:= \sum_{k=1}^n
            \int_{t_0}^T \dd t_1 \int_{t_1}^T \dd t_2 \cdots \int_{t_{k-1}}^T \dd t_k
            \\ & \qquad \sum_{i \in I_{n,k}}
            \Pc_0(t_0,t_1) \Ac_{i_1}(t_1)
            \Pc_0(t_1,t_2) \Ac_{i_2}(t_2) \cdots
            \Pc_0(t_{k-1},t_k) \Ac_{i_k}(t_k)
            \Pc_0(t_k,T) \varphi\label{eq:un.def} .
\end{align}
\subsection{Expression for $u_0$}\label{sec:u0}
By assumption, the functions {$a_{\alpha,0}$} depend only on $t$.  Therefore, the operator $\Ac_{0}(t)$ is
the generator of a diffusion with time-dependent parameters.  It will be useful to write the operator $\Ac_0(t)$ in the following form:
\begin{align}\label{def_A0_bis}
\Ac_0(t)
    &=\frac{1}{2} \sum_{i,j=1}^{d} C_{ij}(t)\p_{z_{i}z_{j}} + \< m(t), \nabla_{z} \>, &
\< m(t), \nabla_{z} \>
    &=  { \sum_{i=1}^d m_i(t) \d_{z_i} } .
\end{align}
Here the $d\times d$-matrix $C(t)$ is positive definite, for any $t\in[0,T]$, and $m$ is a $d$-dimensional vector.
The action of the semigroup of operators $\Pc_0(t_0,T)$ generated by $\Ac_0(t)$ is well-known.  For any measurable function $\varphi$ that is at most exponentially growing we have
\begin{align}
u_0(t_0)
    :=  \Pc_0(t_0,T) \varphi    &=  \int_{\Rb^d}  \Gam_{0}(t,\cdot;T,\zeta)\varphi(\zeta) \,\dd \zeta , \label{eq:u0}
\end{align}
where $\Gam_0(t,z;T,\zeta)$ is the $d$-dimensional Gaussian density
\begin{align}
\Gam_{0}(t,z;T,\zeta)
        &=  \frac{1}{  \sqrt{(2\pi)^{d}|\Cv(t,T)|} }
    \exp\left(-\frac{1}{2}\langle\Cv^{-1}(t,T) (\zeta - z-\mv(t,T)),
    (\zeta -z-\mv(t,T))\rangle\right) \label{e22and}
\end{align}
with covariance matrix $\Cv(t,T)$ and mean vector $z+\mv(t,T)$ given by:
\begin{align}
 \Cv(t,T)
    &=  \int_t^T \dd s \, C(s) , &
\mv(t,T)
    &=  \int_t^T \dd s \, m(s) . \label{eq:covariance_mean}
\end{align}
Note that the function $u_0$ as it is defined in \eqref{eq:u0} is the unique non-rapidly
increasing solution of the homogeneous backward Cauchy problem $(\d_t + \Ac_0(t))u_0=0$ with
terminal condition $u_0(T)=\varphi$.

\subsection{Expression for $u_n$}
Remarkably, as the following theorem shows, every $u_n(t)$ can be expressed as a differential
operator $\Lc_n(t,T)$ acting on $u_0(t)$.
\begin{theorem}
\label{thm:dyson}
Assume $\varphi \in \Sc(\Rb^d)$, the Schwartz space of rapidly decreasing functions on $\Rb^d$.
Then the function $u_n$ defined in \eqref{eq:un.def} is given explicitly by
\begin{align}
u_n(t_0)
    &=  \Lc_n(t_0,T) u_0(t_0) , \label{eq:un}
\end{align}
where $u_0$ is given by \eqref{eq:u0} and
\begin{align}
\Lc_n(t_0,T)
    &=  \sum_{k=1}^n
            \int_{t_0}^T \dd t_1 \int_{t_1}^T \dd t_2 \cdots \int_{t_{k-1}}^T \dd t_k
            \sum_{i \in I_{n,k}}
            \Gc_{i_1}(t_0,t_1)
            \Gc_{i_2}(t_0,t_2) \cdots
            \Gc_{i_k}(t_0,t_k) , \label{eq:Ln}
\end{align}
with $I_{n,k}$ as defined in \eqref{eq:Ink}, 
\begin{align}
\Gc_i(t_0,t_k) &:= \Ac_{i}(t_k,\Mc(t_0,t_k))
= \sum_{|\alpha |\leq 2}  a_{\alpha,i}(t_k,\Mc(t_0,t_k)) D_z^\alpha ,   \label{eq:G.def}
\end{align}
with $\Ac_{i}(t,z)$ as in \eqref{eq:A.expand_bis}, and
\begin{align}
\Mc(t,s)
    &:= z+\mv(t,s)+ \Cv(t,s)\nabla_{z}. \label{eq:M.def}
\end{align}
\end{theorem}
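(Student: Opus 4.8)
The plan is to reduce the assertion, via the propagator (semigroup) property of $\Pc_0$, to a single \emph{intertwining identity} between $\Pc_0$ and multiplication by a coordinate. The first step establishes, for $s\leq t$ and each $i\in\{1,\dots,d\}$, that
\begin{align}
\Pc_0(s,t)\, z_i &= \Mc_i(s,t)\, \Pc_0(s,t) , & \Pc_0(s,t)\, \d_{z_i} &= \d_{z_i}\, \Pc_0(s,t) ,
\end{align}
where $\Mc_i(s,t) = z_i + \mv_i(s,t) + [\Cv(s,t)\nabla_z]_i$ is the $i$-th component of the operator $\Mc(s,t)$ in \eqref{eq:M.def} and $z_i$ on the left acts by multiplication. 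The second identity is immediate because $\Ac_0(t)$ has $z$-independent coefficients, so $\d_{z_i}$ commutes with $\Ac_0(t)$ for every $t$ and hence with the propagator it generates; equivalently, one integrates by parts in \eqref{eq:u0}, using that $\Gam_0(s,z;t,\zeta)$ depends on $(z,\zeta)$ only through $\zeta-z$. The first identity follows from the Gaussian algebraic identity $\zeta_i\,\Gam_0(s,z;t,\zeta) = (z_i+\mv_i(s,t))\,\Gam_0(s,z;t,\zeta) + [\Cv(s,t)\nabla_z\Gam_0(s,z;t,\zeta)]_i$, obtained by differentiating \eqref{e22and}, together with differentiation under the integral sign in \eqref{eq:u0}; both operations are licit since $\varphi\in\Sc(\Rb^d)$ and $\Gam_0$ and its $z$-derivatives decay rapidly in $\zeta$.

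Next I would iterate. Since $\Cv(s,t)$ is symmetric one checks $[\Mc_i(s,t),\Mc_j(s,t)]=0$ for all $i,j$, so for every polynomial $p$ the operator $p(\Mc(s,t))$ is well defined, and induction on $\deg p$ gives
\begin{align}
\Pc_0(s,t)\, p(z)\, D_z^{\alpha} &= p(\Mc(s,t))\, D_z^{\alpha}\, \Pc_0(s,t) , & |\alpha| &\leq 2 .
\end{align}
Summing this against the polynomial coefficients $a_{\alpha,i}(t,\cdot)$ and recalling \eqref{eq:G.def} yields the compact intertwining relation $\Pc_0(s,t)\,\Ac_i(t) = \Gc_i(s,t)\,\Pc_0(s,t)$.

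The core of the argument is then a term-by-term manipulation of \eqref{eq:un.def}. Fix $k$ and $i\in I_{n,k}$. Using $\Pc_0(t_0,t_1)\Ac_{i_1}(t_1) = \Gc_{i_1}(t_0,t_1)\Pc_0(t_0,t_1)$, then the propagator identity $\Pc_0(t_0,t_1)\Pc_0(t_1,t_2)=\Pc_0(t_0,t_2)$, then $\Pc_0(t_0,t_2)\Ac_{i_2}(t_2)=\Gc_{i_2}(t_0,t_2)\Pc_0(t_0,t_2)$, and iterating these two moves $k$ times, every propagator migrates to the right and collapses, leaving
\begin{align}
\Pc_0(t_0,t_1)\Ac_{i_1}(t_1)\Pc_0(t_1,t_2)\Ac_{i_2}(t_2)\cdots \Pc_0(t_{k-1},t_k)\Ac_{i_k}(t_k)\Pc_0(t_k,T)\varphi &= \Gc_{i_1}(t_0,t_1)\Gc_{i_2}(t_0,t_2)\cdots\Gc_{i_k}(t_0,t_k)\, u_0(t_0) ,
\end{align}
since $\Pc_0(t_0,t_k)\Pc_0(t_k,T)\varphi = \Pc_0(t_0,T)\varphi = u_0(t_0)$. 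Substituting into \eqref{eq:un.def} and observing that $u_0(t_0)$ no longer depends on $t_1,\dots,t_k$, so that the iterated time integrals act only on the operator string $\Gc_{i_1}(t_0,t_1)\cdots\Gc_{i_k}(t_0,t_k)$, produces exactly \eqref{eq:un}--\eqref{eq:Ln}.

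I expect the only genuine content to be the intertwining identity of the first step; the rest is bookkeeping of time arguments and operator orderings. The point requiring care is that all these operator identities hold as honest equalities of functions, not merely formally: the hypothesis $\varphi\in\Sc(\Rb^d)$ is tailored to this, since $\Pc_0(s,t)$, being convolution with a Gaussian, maps $\Sc(\Rb^d)$ into itself, so each intermediate expression $\Gc_{i_1}(t_0,t_1)\cdots\Gc_{i_j}(t_0,t_j)\Pc_0(t_0,T)\varphi$ is again Schwartz, the polynomial-coefficient operators $\Gc_i$ act on it without domain subtleties, and Fubini and differentiation under the integral sign are unproblematic; the sums over $I_{n,k}$ and over $1\leq k\leq n$ being finite, there is no convergence issue at fixed order $n$.
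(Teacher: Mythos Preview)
Your proof is correct and follows the same overall architecture as the paper's: reduce everything to the intertwining identity $\Pc_0(s,t)\Ac_i(t)=\Gc_i(s,t)\Pc_0(s,t)$, then collapse the Dyson series using the propagator property. The one genuine difference is in how you establish that identity. The paper works in Fourier space, acting on oscillating exponentials $\ee_\lam(z)=\ee^{\ii\langle\lam,z\rangle}$ and exploiting the relation $\Mc_i(t_0,t_k)=M_i(t_0,t_k,-\ii\nabla_z)$ with $M_i=-\ii\d_{\lam_i}(\Phi_0+\ii\langle\lam,z\rangle)$; you instead prove the same thing directly in physical space by differentiating the Gaussian kernel and reading off $\zeta_i\Gam_0=(z_i+\mv_i)\Gam_0+[\Cv\nabla_z\Gam_0]_i$. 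Your route is slightly more elementary (no Fourier transform needed, just a one-line Gaussian calculus identity), while the paper's Fourier argument makes the origin of the operator $\Mc$ as a symbol substitution more transparent and generalizes more readily to settings where the leading-order propagator is known through its symbol rather than its kernel. Both buy the same result with comparable effort.
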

\begin{proof}
The main idea of the proof is to show that the operator $\Gc_i(t_0,t_k)$ in \eqref{eq:G.def} satisfies
\begin{align}
\Pc_0(t_0,t_k) \Ac_{i}(t_k)
    &=  \Gc_i(t_0,t_k) \Pc_0(t_0,t_k) . \label{eq:PA=GP}
\end{align}
Assuming \eqref{eq:PA=GP} holds, we can use the fact that $\Pc_0(t_k,t_{k+1})$ is a semigroup
\begin{align}
\Pc_0(t_0,T)
    &=  \Pc_0(t_0,t_1) \Pc_0(t_1,t_2) \cdots \Pc_0(t_{k-1},t_k) \Pc_0(t_k,T) ,
\end{align}
and we can re-write \eqref{eq:un.def} as
\begin{align}
u_n(t_0)
    &=  \sum_{k=1}^n
            \int_{t_0}^T \dd t_1 \int_{t_1}^T \dd t_2 \cdots \int_{t_{k-1}}^T \dd t_k
            \sum_{i \in I_{n,k}}
            \Gc_{i_1}(t_0,t_1)
            \Gc_{i_2}(t_0,t_2) \cdots
            \Gc_{i_k}(t_0,t_k)
            \Pc_0(t_0,T) \varphi , \label{eq:u.dyson2}
\end{align}
from which \eqref{eq:un}-\eqref{eq:Ln} follows directly.  Thus, we only need to show that
$\Gc_i(t_0,t_k)$ satisfies \eqref{eq:PA=GP}. 
The condition $\varphi \in \Sc(\Rb^d)$ guarantees that $u_0(t,\cdot)$ belongs to the Schwartz class of rapidly decaying
functions for all $t<T$.  Therefore, any function of the form $p(z) D_z^\beta u_0(t,z)$, where $p$ is a polynomial, has a
Fourier representation.  Thus, without loss of generality, we can investigate how the operator
$\Pc_0(t_0,t_k) \Ac_{i}(t_k)$ acts on the oscillating exponential
$\ee_\lam(x):= \ee^{ \ii \< \lam , x\> }$.
We note that
\begin{align}
\Pc_0(t_0,t_k) \ee_\lam(z)
    &=  \ee^{\Phi_0(t_0,t_k,\lam)}\ee_\lam(z) ,         &
    &\text{where}&
\Phi_0(t_0,t_k,\lam)
    &=  \sum_{|\alpha |\leq 2} (\ii \lam)^\alpha \int_{t_0}^{t_k} \dd t \, a_{\alpha,0}(t).\label{eq:P.exp}
\end{align}
Next, we observe that the operator $\Mc_i(t_0,t_k)$, the $i$-th component of $\Mc(t_0,t_k)$ in \eqref{eq:M.def} can be written
\begin{align}
\Mc_i(t_0,t_k)
    &=  M_i(t_0,t_k, -\ii \nabla_z ) , &
M_i(t_0,t_k, \lam )
    &=  - \ii \d_{\lam_i} \( \Phi_0(t_0,t_k,\lam) + \ii \< \lam, z \> \) . \label{eq:M}
\end{align}
Using \eqref{eq:M} we observe that for any natural number $n$ we have
\begin{align}
(-\ii \d_{\lam_i})^n \ee^{\Phi_0(t_0,t_k,\lam)}\ee_\lam(z)
    &=  (-\ii \d_{\lam_i})^{n-1} M_i(t_0,t_k,\lam )
            \ee^{\Phi_0(t_0,t_k,\lam)}\ee_\lam(z) \\
    &=  \Mc_i(t_0,t_k) (-\ii \d_{\lam_i})^{n-1}
            \ee^{\Phi_0(t_0,t_k,\lam)}\ee_\lam(z) \\
    &=  \cdots \\
    &=  [ \Mc_i(t_0,t_k) ]^n \ee^{\Phi_0(t_0,t_k,\lam)}\ee_\lam(z) .
\end{align}
Noting the $\d_{\lam_i}$ and $\d_{\lam_j}$ commute, it is clear that $\Mc_i(t_0,t_k)$ and $\Mc_j(t_0,t_k)$ also commute.
Thus, for any multi-index $\beta$ we have
\begin{align}
(-\ii \nabla_\lam)^\beta \ee^{\Phi_0(t_0,t_k,\lam)}\ee_\lam(z)
    &=  (\Mc(t_0,t_k))^\beta \ee^{\Phi_0(t_0,t_k,\lam)}\ee_\lam(z) , \label{eq:a.M}
\end{align}
Finally, we compute
\begin{align}
\Pc_0(t_0,t_k) \Ac_{i}(t_k) \ee_\lam(z)
    &=  \sum_{|\alpha |\leq 2}  \Pc_0(t_0,t_k)
            a_{\alpha,i}(t_k,z) D^{\alpha}_{z} \ee_\lam(z) &
    &\text{(by \eqref{eq:A.expand})} \\
    &=  \sum_{|\alpha |\leq 2}  (\ii \lam)^{\alpha} \Pc_0(t_0,t_k)
            a_{\alpha,i}(t_k,z) \ee_\lam(z) \\
    &=  \sum_{|\alpha |\leq 2}  (\ii \lam)^{\alpha} a_{\alpha,i}(t_k,-\ii \nabla_\lam)
            \Pc_0(t_0,t_k) \ee_\lam(z) \\
    &=  \sum_{|\alpha |\leq 2}  (\ii \lam)^{\alpha} a_{\alpha,i}(t_k,-\ii \nabla_\lam)
            \ee^{\Phi_0(t_0,t_k,\lam)}\ee_\lam(z) &
    &\text{(by \eqref{eq:P.exp})} \\
    &=  \sum_{|\alpha |\leq 2}  (\ii \lam)^{\alpha} a_{\alpha,i}(t_k,\Mc(t_0,t_k))
            \ee^{\Phi_0(t_0,t_k,\lam)}\ee_\lam(z) &
    &\text{(by \eqref{eq:a.M})} \\
    &=  \sum_{|\alpha |\leq 2}  a_{\alpha,i}(t_k,\Mc(t_0,t_k)) D_z^\alpha
            \ee^{\Phi_0(t_0,t_k,\lam)}\ee_\lam(z)   \\
    &=  \sum_{|\alpha |\leq 2}  a_{\alpha,i}(t_k,\Mc(t_0,t_k)) D_z^\alpha
            \Pc_0(t_0,t_k) \ee_\lam(z)   &
    &\text{(by \eqref{eq:P.exp})} \\
    &=  \Gc_i(t_0,t_k) \Pc_0(t_0,t_k) \ee_\lam(z) , &
    &\text{(by \eqref{eq:A.expand_bis} and \eqref{eq:G.def})}
\end{align}
which concludes the proof.
\end{proof}
\begin{remark}[Call payoffs]
\label{rmk:call} {As we will show in Section \ref{sec:price.error} the functions $(u_n)$ can be
alternatively characterized as solutions of a nested sequence of Cauchy problems (see equation
\eqref{eq:v.n.pide.xbar} for the case when $(\Ac_n(t))$ is expanded in a Taylor series as in
Example \ref{example:Taylor}).  One can check directly that when $\varphi(x) = (\ee^x - \ee^k)$,
the functions $(u_n)$ with each $u_n$ given by $u_n(t) = \Lc_n(t,T) u_0(t)$ satisfy the nested
Cauchy problems.  Thus, Theorem \ref{thm:dyson} also holds for Call option payoffs.  This is true
for any expansion $(\Ac_n(t))$ satisfying Definition \ref{def:expansion}.}
\end{remark}
\begin{remark}
\label{rmk:terms}
The number of terms in $\Lc_n(t,T)$ grows faster than $n!$, which presents a computational challenge for large $n$.  Nevertheless, we shall see in the numerical example provided in Section \ref{sec:examples} that excellent approximations can be achieved with $n = 3$.
\end{remark}
\begin{remark}
When $d=1,2$, the operator $\Ac(t)$ is given by \eqref{eq:A-2d}.  In this case, we write $\Ac_i(t)$ as
\begin{align}
\Ac_i(t)
    &:= a_{i}(t,x,y) (\d_x^2-\d_x)
            + f_i(t,x,y) \d_y
            + b_i(t,x,y) \d_y^2
            + c_i(t,x,y) \d_x \d_y,
\end{align}
and we have explicitly
\begin{align}
\Gc_i(t,s)
    &:= a_{i}\(s,\Mc_x(t,s),\Mc_y(t,s)\) (\d_x^2-\d_x)
            + f_i\(s,\Mc_x(t,s),\Mc_y(t,s)\) \d_y \\ &\qquad
            + b_i\(s,\Mc_x(t,s),\Mc_y(t,s)\) \d_y^2
            + c_i\(s,\Mc_x(t,s),\Mc_y(t,s)\) \d_x \d_y, \label{eq:G.lsv} \\
\Mc_x(t,s)
    &=  x - \int_t^s \dd q \,  a_0(q) + 2 \int_t^s \dd q \,  a_0(q) \d_x + \int_t^s \dd q \,  c_0(q) \d_y , \\
\Mc_y(t,s)
    &=  y + \int_t^s \dd q \,  f_0(q) + 2 \int_t^s \dd q \,  b_0(q) \d_y + \int_t^s \dd q \,  c_0(q) \d_x .
\end{align}
\end{remark}

%
%


%
%

\section{Implied volatility expansion}
\label{sec:impvol}
In this section, we derive an explicit implied volatility approximation from the asymptotic
pricing expansion developed in the previous section.  To begin our analysis, we fix a multifactor
LSV model for $X = \log S$ as in \eqref{eq:StochVol}, a time $t$, a maturity date $T>t$, the
initial values {$(X_t,Y_t)=(x,y)\in\mathbb{R}\times\mathbb{R}^{d-1}$} and a Call option
payoff $\varphi(X_T)=(\ee^{X_T} - \ee^k)^+$. Our goal is to find the implied volatility for
\emph{this particular Call option}. To ease notation, we will sometimes suppress the dependence on
$(t,T,x,y,k)$. However, the reader should keep in mind that the implied volatility of the option
under consideration \emph{does} depend on $(t,T,x,y,k)$, even if this is not explicitly indicated.
Below, we provide definitions of the \emph{Black-Scholes price} and \emph{implied volatility},
which will be fundamental throughout this section.
\begin{definition}\label{def:BS}
The \emph{Black-Scholes price} $u^{\BS}$
is given by
\begin{align}
u^{\BS}(\s;\t,x,k)
    &:= \ee^x \Nc(d_{+}) - \ee^k \Nc(d_{-}) , &
d_{\pm}
    &:= \frac{1}{\sig \sqrt{\tau}} \( x - k \pm \frac{\sig^2 \tau}{2}  \) , &
\tau
        &:= T-t , \label{eq:BS}
\end{align}
where $\Nc$ is the CDF of a standard normal random variable.
\end{definition}
\begin{remark}
\label{rmk:u0=uBS}  It follows from \eqref{eq:u0} that when $\varphi(x)=(\ee^x-\ee^k)^+$ we have
\begin{align}
u_0(t,x)
    &= u^\BS(\sig_0;T-t,x,k) , &
    &\text{where}&
\sig_0
    &=  \sqrt{ \frac{2}{T-t} \int_t^T a_{0}(s) \dd s},\label{eq:sigma_0}
\end{align}
where $a_0=C_{1,1}$ as in \eqref{def_A0_bis}, or according to the multi-index notation, $a_0=a_{(2,0,\ldots,0),0}$.
\end{remark}
\begin{definition}
\label{def:imp.vol.def} {For fixed $(\t,x,k)$}, the \emph{implied volatility} corresponding to
a Call price $u\in\,((\ee^{x}-\ee^{k})^+,\ee^x)$ is defined as the unique strictly positive real
solution $\sig$ of the equation
\begin{align}
 u^{\BS}(\sig;\t,x,k)    &= u.   \label{eq:imp.vol.def}
\end{align}
\end{definition}

\subsection{Formal derivation}\label{subsec:imp_vol_formal}
We present here a formal derivation of our implied volatility expansion, which is based on the price expansion presented in Section \ref{sec:model}.
Throughout this section {$(t,T,x,k)$ {\it are fixed} and thus we use the short notation
$$u^{\BS}(\s)=u^{\BS}(\s;T-t,x,k)$$} for the Black-Scholes price. Consider the family of approximate Call
prices indexed by $\del$
\begin{align}
 u(\delta)
    &=  \sum_{n=0}^N \delta^n  u_{n}
     =  u^{\BS}(\sigma_0)+\sum_{n=1}^N \delta^n u_{n} , &
\del
    &\in    [0,1] , \label{eq:u.expand.again}
\end{align}
with $\sigma_0$ as in \eqref{eq:sigma_0} and the functions {$u_n(t)=\Lc_n(t,T) u_0(t)$} as given
in Theorem \ref{thm:dyson}. Note that setting $\del=1$ yields our price expansion.
Defining
\begin{align}
g(\del)
    &:=  (u^\BS)^{-1}(u(\del)),&\del
    &\in    [0,1] . \label{eq:matt}
\end{align}
we seek the implied volatility $\sig=g(1)$.
We will show in Section \ref{sec:error.impvol}, Lemma \ref{lem:imp_vol_u_delta}, that under
suitable assumptions $u(\delta)\in\,((\ee^{x}-\ee^{k})^+,\ee^x)$ for any $\del\in[0,1]$. This
guarantees that $g(\del)$ in \eqref{eq:matt} exists. 
By expanding both sides of \eqref{eq:matt} as a Taylor series in $\del$, we see that $\sigma$
admits an expansion of the form
\begin{align}
\sig=g(1)
  &=  \sig_0 + \sum_{n=1}^\infty  \sig_{n}, &
\sig_{n}
    &=  \frac{1}{n!} \d_{\delta}^{n} g(\delta) \vert_{\delta=0}
   . \label{ae6}
\end{align}
Note that, by \eqref{eq:u.expand.again} we also have
\begin{align}\label{ae8}
u_{n}
    &=  \frac{1}{n!}\partial_{\delta}^{n}u^{\BS}(g(\delta))\vert_{\delta=0} , &
1
    &\leq n \leq N .
\end{align}
The right-hand side of \eqref{ae8} can be computed by
applying the Bell polynomial version of the Faa di Bruno's formula, which is given in Appendix \ref{append:faa_bell}: 
\begin{align}\label{ae7}
 u_{n}
    &=  \frac{1}{n!}\sum_{h=1}^{n}
            \partial^{h}_{\sig}u^{\BS}(\sig_{0})
            \mathbf{B}_{n,h}\left(\partial_{\delta}g(\delta),\partial_{\delta}^{2}g(\delta),\dots,
            \partial_{\delta}^{n-h+1}g(\delta)\right)\vert_{\delta=0} , &
1
    &\leq n \leq N .
\end{align}
Combining \eqref{ae7} with \eqref{ae6}, one can solve for $\sig_n$ explicitly in terms of
$(\sig_k)_{0 \leq k \leq n-1}$, which yields
\begin{align}
\sig_n
    =  \frac{u_n}{\d_\sig u^\BS(\sig_0)}
            - \frac{1}{n!} \sum_{h=2}^{n}
       \mathbf{B}_{n,h}\(1!\,\sig_{1},2!\,\sig_{2},\dots,(n-h+1)!\, \sig_{n-h+1}\)
            \frac{\d_\sig^h u^\BS(\sig_0)}{\d_\sig u^\BS(\sig_0)}, \qquad
1
    \leq n \leq N . \label{eq:sig.n_bis}
\end{align}
Note that expression \eqref{eq:sig.n_bis} for $\sig_n$ involves two sorts of terms: $u_n/\d_\sig
u^\BS(\sig_0)$ and $\d_\sig^n u^\BS(\sig_0)/\d_\sig u^\BS(\sig_0)$.  We will prove that these
terms can be computed explicitly without any numerical integration or special functions.  The
proof will rely on the following lemma.
\begin{lemma}
\label{lemma:ratio}
Let $m\geq 0$ and fix $(t,T,k,\sig_0)$.  Then
\begin{align}
\frac{\d_x^m (\d_x^2 - \d_x ) u^{\BS}(\sig_0)}{(\d_x^2 - \d_x ) u^{\BS}(\sig_0)}
    = \(-\frac{1}{\sig_0\sqrt{2\tau}}\)^{m} H_{m}\(\zeta\) , \qquad
\zeta
    := \frac{x-k-\frac{1}{2}\sig_0^2 \tau}{\sig_0\sqrt{2\tau}} , \qquad
\tau
    := T-t, \label{eq:result}
\end{align}
where $H_n(\zeta) := (-1)^n \ee^{\zeta^2} \d_\zeta^n \ee^{-\zeta^2}$ is the $n$-th Hermite polynomial.
\end{lemma}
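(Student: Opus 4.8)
The plan is to reduce both sides to an explicit elementary computation by working with the Black-Scholes price in a convenient parametrization. First I would introduce the standard heat-equation change of variables: writing $u^{\BS}(\sig_0;\tau,x,k)$, recall that after shifting to log-moneyness and using the normalized time variable $s:=\tfrac12\sig_0^2\tau$, the function $v(s,x):=u^{\BS}$ solves the backward heat-type equation $\partial_\tau u^{\BS} + a_0(\partial_x^2-\partial_x)u^{\BS}=0$ with terminal data the Call payoff; equivalently, $u^{\BS}$ is the convolution of the payoff with the Gaussian kernel $\Gamma_0$ of \eqref{e22and} in the one-dimensional case. The key structural observation is that the operator $(\partial_x^2-\partial_x)$ is, up to conjugation by $\ee^{x/2}$ (or directly by inspecting the Gaussian kernel), essentially a second $x$-derivative: indeed $(\partial_x^2-\partial_x)u^{\BS}$ equals $\ee^x$ times the Black-Scholes Gamma-type quantity, which up to constants is the Gaussian density evaluated at the appropriate argument. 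Concretely, a direct differentiation of \eqref{eq:BS} gives
\begin{align}
(\partial_x^2 - \partial_x) u^{\BS}(\sig_0;\tau,x,k)
 &= \frac{\ee^x}{\sig_0\sqrt{2\pi\tau}}\,\ee^{-d_+^2/2},
\end{align}
and one checks that $d_+/\sqrt{2} = -\zeta + \sig_0\sqrt{\tau/2}$, so that $\ee^{-d_+^2/2}$ is a Gaussian in $x$ whose exponent is quadratic with leading coefficient $-1/(2\sig_0^2\tau)$. Hence $(\partial_x^2-\partial_x)u^{\BS}$ is of the form $C\,\ee^{x}\ee^{-(x-\xbar)^2/(2\sig_0^2\tau)}$ for suitable constant $\xbar = k + \tfrac12\sig_0^2\tau$ (the $\ee^x$ and the Gaussian combine neatly, the cross terms producing exactly the stated $\zeta$).

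Next I would exploit the Hermite/Gaussian identity. Writing $(\partial_x^2-\partial_x)u^{\BS} = C\,\ee^{-\zeta^2}$ up to an $x$-independent factor absorbed into $C$ — after pulling the $\ee^x$ into the completion of the square — and noting that $\partial_x$ acts on $\zeta = (x-k-\tfrac12\sig_0^2\tau)/(\sig_0\sqrt{2\tau})$ by $\partial_x = \tfrac{1}{\sig_0\sqrt{2\tau}}\partial_\zeta$, we get
\begin{align}
\partial_x^m\big[(\partial_x^2-\partial_x)u^{\BS}\big]
 &= C\Big(\frac{1}{\sig_0\sqrt{2\tau}}\Big)^m \partial_\zeta^m \ee^{-\zeta^2}
  = C\Big(\frac{1}{\sig_0\sqrt{2\tau}}\Big)^m (-1)^m \ee^{-\zeta^2} H_m(\zeta),
\end{align}
using the Rodrigues-type definition $H_m(\zeta)=(-1)^m\ee^{\zeta^2}\partial_\zeta^m\ee^{-\zeta^2}$ supplied in the statement. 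Dividing by $(\partial_x^2-\partial_x)u^{\BS} = C\ee^{-\zeta^2}$ the constant $C$ and the $\ee^{-\zeta^2}$ cancel, leaving exactly $\big(-\tfrac{1}{\sig_0\sqrt{2\tau}}\big)^m H_m(\zeta)$, which is the claim. A small bookkeeping point is to make sure the $\ee^x$ prefactor is genuinely incorporated into the "completion of the square" so that what remains after factoring is purely $\ee^{-\zeta^2}$ times a constant; this is why $\zeta$ carries the $-\tfrac12\sig_0^2\tau$ drift term, and I would verify this shift explicitly once.

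The main obstacle is purely the algebraic verification that the combination $\ee^x \ee^{-d_+^2/2}$ can be rewritten, after completing the square in $x$, as (constant in $x$) $\times\, \ee^{-\zeta^2}$ with $\zeta$ exactly as given — i.e. tracking the linear and quadratic terms in $x$ correctly and confirming the drift term $-\tfrac12\sig_0^2\tau$ in the numerator of $\zeta$. Once that identity is pinned down, the rest is the one-line Hermite differentiation above together with the chain-rule constant $1/(\sig_0\sqrt{2\tau})$ per derivative. No regularity issues arise since $u^{\BS}$ is smooth in $x$ for $\tau>0$ and $\sig_0>0$, and the ratio is well-defined because $(\partial_x^2-\partial_x)u^{\BS}>0$.
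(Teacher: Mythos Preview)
Your proposal is correct and follows essentially the same route as the paper: the paper simply records directly that $(\partial_x^2-\partial_x)u^{\BS}(\sig_0)=\dfrac{1}{\sig_0\sqrt{2\pi\tau}}\ee^{-\zeta^2+k}$ (which is exactly your ``completion of the square'' identity $\ee^x\ee^{-d_+^2/2}=\ee^k\ee^{-\zeta^2}$), and then applies the chain rule $\partial_x=\tfrac{1}{\sig_0\sqrt{2\tau}}\partial_\zeta$ together with the Rodrigues formula for $H_m$, just as you do.
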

\begin{proof}
Using the Black-Scholes formula \eqref{eq:BS}, a direct computation shows
\begin{align}
(\d_x^2 - \d_x ) u^{\BS}(\sig_0)
    &=  \frac{1}{\sig_0\sqrt{2\pi\tau}}\ee^{-\zeta^2+k} ,
\end{align}
with $\zeta = \zeta(x)$ as above.  Hence
\begin{align}
\frac{\d_x^m (\d_x^2 - \d_x ) u^{\BS}(\sig_0)}{(\d_x^2 - \d_x ) u^{\BS}(\sig_0)}
    &=  \ee^{\zeta^2}\d_x^m \ee^{-\zeta^2}
    =       \(\frac{1}{\sig_0\sqrt{2\tau}}\)^m\ee^{z^2}\d_\zeta^m \ee^{-\zeta^2}
    =       \(\frac{-1}{\sig_0\sqrt{2\tau}}\)^m H_m(\zeta) ,
\end{align}
where in the last equality we have used the definition of the $m$th Hermite polynomial, recalled above.
\end{proof}
\begin{proposition}
\label{prop.sig.k}
Fix $(t,T,k,\sig_0)$ and let $\zeta$ and $\tau$ be as in Lemma \ref{lemma:ratio}.  Then for any $n \geq 2$ we have
\begin{align}
\frac{\d_\sig^n u^{\BS}(\sig_0)}{\d_\sig u^{\BS}(\sig_0)}
    &=  \sum_{q=0}^{\left\lfloor n/2 \right\rfloor}\sum_{p=0}^{n-q-1}
            c_{n,n-2q}\sig_0^{n-2q-1} \tau^{n-q-1} \binom{n-q-1}{p}
            \(-\frac{1}{\sig_0\sqrt{2\tau}}\)^{p+n-q-1} H_{p+n-q-1}(\zeta), \label{eq:2}
\end{align}
where the coefficients $(c_{n,n-2k})$ are defined recursively by
\begin{align}
c_{n,n}
    &=1 , &
    &\text{and}&
c_{n,n-2q}
    &=(n-2q+1) c_{n-1,n-2q+1} + c_{n-1,n-2q-1}, &
q &\in \{ 1, 2, \cdots , \left\lfloor n/2 \right\rfloor \}.
\end{align}
\end{proposition}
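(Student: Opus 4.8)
The plan is to compute $\d_\sig^n u^{\BS}(\sig_0)$ by exploiting the well-known fact that the Black-Scholes price solves a heat-type equation in the volatility variable. Specifically, from Definition \ref{def:BS} one checks the identity
\begin{align}
\d_\sig u^{\BS}(\sig) = \sig \tau \, (\d_x^2 - \d_x) u^{\BS}(\sig), \label{eq:vega.identity}
\end{align}
which is the classical relation $\text{Vega} = \sig\tau\,(\text{Gamma in log-price, corrected})$; it follows by direct differentiation of \eqref{eq:BS} using $\d_\sig d_\pm$ and the identity $\ee^x \Nc'(d_+) = \ee^k \Nc'(d_-)$. The key structural point is that the operators $\d_\sig$ and $(\d_x^2 - \d_x)$ do \emph{not} commute with each other when the multiplier $\sig\tau$ is present, but they do commute with $(\d_x^2-\d_x)$ acting on its own, so iterating \eqref{eq:vega.identity} produces a sum of terms of the form $(\text{const})\cdot \sig^{n-2q-1}\tau^{n-q-1}(\d_x^2-\d_x)^{n-q} u^{\BS}(\sig)$.

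First I would prove \eqref{eq:vega.identity} and then set up the recursion: writing $\d_\sig^n u^{\BS} = \sum_q c_{n,n-2q}\,\sig^{n-2q-1}\tau^{n-q-1}(\d_x^2-\d_x)^{n-q}u^{\BS}$, apply $\d_\sig$ once more. The term $\d_\sig(\sig^{n-2q-1})$ lowers the power of $\sig$ by producing a factor $(n-2q-1)$ and keeps the differential order $(\d_x^2-\d_x)^{n-q}$; after re-applying \eqref{eq:vega.identity} to turn the leftover $\d_\sig$-free structure back into the right normalization, this contributes to $c_{n+1,\,(n+1)-2(q+1)+1}$-type terms, while the term where $\d_\sig$ hits $u^{\BS}$ via \eqref{eq:vega.identity} raises the differential order to $(\d_x^2-\d_x)^{n-q+1}$ and the power of $\sig$ by one. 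Matching powers of $\sig$ and $\tau$ and orders of $(\d_x^2-\d_x)$ on both sides yields exactly the stated two-term recursion $c_{n,n-2q} = (n-2q+1)c_{n-1,n-2q+1} + c_{n-1,n-2q-1}$ with seed $c_{n,n}=1$; the index bookkeeping (checking that $q$ ranges over $\{0,1,\dots,\lfloor n/2\rfloor\}$ and that boundary terms vanish) is the routine but slightly delicate part.

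Next I would convert $(\d_x^2-\d_x)^{n-q}u^{\BS}(\sig_0)$ into the Hermite form. Dividing by $\d_\sig u^{\BS}(\sig_0) = \sig_0\tau\,(\d_x^2-\d_x)u^{\BS}(\sig_0)$ reduces the problem to computing
$$\frac{(\d_x^2-\d_x)^{n-q}u^{\BS}(\sig_0)}{(\d_x^2-\d_x)u^{\BS}(\sig_0)} = \frac{(\d_x^2-\d_x)^{n-q-1}\big[(\d_x^2-\d_x)u^{\BS}(\sig_0)\big]}{(\d_x^2-\d_x)u^{\BS}(\sig_0)}.$$
From the proof of Lemma \ref{lemma:ratio} we have the explicit Gaussian form $(\d_x^2-\d_x)u^{\BS}(\sig_0) = \frac{1}{\sig_0\sqrt{2\pi\tau}}\ee^{-\zeta^2+k}$. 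Applying $(\d_x^2-\d_x)^{n-q-1}$ to this and using the binomial theorem to expand $(\d_x^2-\d_x)^{n-q-1} = \sum_{p}\binom{n-q-1}{p}(-1)^{n-q-1-p}\d_x^{2p+(n-q-1-p)}\cdot\ldots$ — more precisely, since $\d_x$ commutes with itself, $(\d_x^2-\d_x)^{m} = \sum_{p=0}^{m}\binom{m}{p}(-1)^{m-p}\d_x^{p+m}$ after collecting (each factor contributes either $\d_x^2$ or $-\d_x$, and choosing $p$ of the $\d_x^2$'s gives total order $2p+(m-p)=p+m$) — and then invoking Lemma \ref{lemma:ratio} in the form $\d_x^{p+n-q-1}$ acting on a Gaussian equals $(-1/(\sig_0\sqrt{2\tau}))^{p+n-q-1}H_{p+n-q-1}(\zeta)$ times the Gaussian, gives the double sum in \eqref{eq:2}. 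The remaining factor $\sig_0^{n-2q-1}\tau^{n-q-1}$ comes from dividing the $\sig^{n-2q-1}\tau^{n-q}$ coefficient by the $\sig_0\tau$ in the denominator.

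The main obstacle is getting the recursion for $c_{n,n-2q}$ exactly right: the non-commutativity means one must be careful that when $\d_\sig$ acts on the explicit power $\sig^{n-2q-1}$ the resulting term, which has the \emph{same} differential order $(\d_x^2-\d_x)^{n-q}$, is correctly re-expressed as $\sig^{n-2q-1}\tau^{n-q-1}\cdot(n-2q-1)\cdot(\d_x^2-\d_x)^{n-q}u^{\BS}$ — but this does not immediately match the ansatz template for index $n+1$ (which wants $\sig^{(n+1)-2q'-1}\tau^{(n+1)-q'-1}$). The resolution is that the ansatz at level $n+1$ with $q' = q+1$ has $\sig$-power $(n+1)-2(q+1)-1 = n-2q-2$ and differential order $(n+1)-(q+1) = n-q$, so one needs an extra application of \eqref{eq:vega.identity} to that term; tracking this interplay — one term comes "for free" at the right differential order but wrong $\sig$-normalization, the other comes at the right normalization but shifted order — is precisely what produces the two-term recurrence with the coefficient $(n-2q+1)$, and it is where essentially all the care is required. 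Everything else is a direct computation from the explicit Black-Scholes formula and Lemma \ref{lemma:ratio}.
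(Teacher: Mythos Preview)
Your approach is exactly the paper's: establish the vega identity $\d_\sig u^{\BS}=\sig\tau(\d_x^2-\d_x)u^{\BS}$, prove by induction an expansion of $\d_\sig^n u^{\BS}$ in powers of $(\d_x^2-\d_x)$, divide by $\d_\sig u^{\BS}$, binomially expand $(\d_x^2-\d_x)^{n-q-1}$, and apply Lemma~\ref{lemma:ratio}. The paper packages the first two steps using $\Jc:=\tau(\d_x^2-\d_x)$ and the ansatz $\d_\sig^n u^{\BS}(\sig_0)=\sum_{q}c_{n,n-2q}\,\sig_0^{n-2q}\,\Jc^{n-q}u^{\BS}(\sig_0)$.

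There is a small but consequential slip in your ansatz: you wrote the coefficients as $\sig^{n-2q-1}\tau^{n-q-1}$, whereas they should be $\sig^{n-2q}\tau^{n-q}$ (the shift by one in each exponent only appears \emph{after} dividing by $\d_\sig u^{\BS}=\sig_0\tau(\d_x^2-\d_x)u^{\BS}$). This off-by-one is precisely what makes your ``main obstacle'' paragraph muddled: with the correct ansatz, when $\d_\sig$ hits the explicit power $\sig^{n-2q}$ you get $(n-2q)\,\sig^{n-2q-1}\tau^{n-q}(\d_x^2-\d_x)^{n-q}u^{\BS}$, which is already in the form of the level-$(n{+}1)$ template with $q'=q+1$ --- no ``extra application'' of the vega identity is needed. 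The two-term recursion then drops out immediately as $c_{n+1,n-2q'+1}=(n-2q'+2)c_{n,n-2q'+2}+c_{n,n-2q'}$, which after reindexing is the stated relation. Fix the exponents in your ansatz and the bookkeeping becomes routine rather than delicate.
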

\begin{proof}
Define the operator $\Jc := \tau (\d_x^2 - \d_x)$.  It is classical that $\d_\sig u^\BS(\sig_0) = \sig_0 \Jc u^\BS(\sig_0)$.
We claim that the following identity holds for any $n\in\mathbb{N}$
\begin{align}
\d_\sig^n u^\BS(\sig_0)
    &= \sum_{q=0}^{\left\lfloor n/2 \right\rfloor} c_{n,n-2q}\sig_0^{n-2q}\Jc^{n-q} u^\BS(\sig_0) , \label{eq:claim}
\end{align}
where $c_{n,n}=1$ and $c_{n,n-2q} =(n-2q+1) c_{n-1,n-2q+1} + c_{n-1,n-2q-1}$ for any integer $q \in \{ 1, 2, \cdots , \left\lfloor n/2 \right\rfloor \}$. The proof of \eqref{eq:claim} is a simple yet tedious recursion relation, which we omit for brevity.  Now, we compute
\begin{align}
\frac{\d_\sig^n u^\BS(\sig_0)}{\d_\sig u^\BS(\sig_0)}
    &=  \sum_{q=0}^{\left\lfloor n/2 \right\rfloor} c_{n,n-2q}\sig_0^{n-2q}
            \frac{\Jc^{n-q} u^\BS(\sig_0)}{\d_\sig u^\BS(\sig_0)}
    =   \sum_{q=0}^{\left\lfloor n/2 \right\rfloor} c_{n,n-2q}\sig_0^{n-2q} \tau^{n-q}
            \frac{(\d_x^2-\d_x)^{n-q} u^\BS(\sig_0)}{\d_\sig u^\BS(\sig_0)} \\
    &=  \sum_{q=0}^{\left\lfloor n/2 \right\rfloor} c_{n,n-2q}\sig_0^{n-2q} \tau^{n-q}
            \frac{(\d_x^2-\d_x)^{n-q-1}(\d_x^2-\d_x) u^\BS(\sig_0)}{\tau \sig_0 (\d_x^2-\d_x) u^\BS(\sig_0)} \\
    &=  \sum_{q=0}^{\left\lfloor n/2 \right\rfloor}\sum_{p=0}^{n-q-1}
            c_{n,n-2q}\sig_0^{n-2q-1} \tau^{n-q-1} \binom{n-q-1}{p}
            \frac{\d_x^{p+n-q-1}(\d_x^2-\d_x) u^\BS(\sig_0)}{(\d_x^2-\d_x) u^\BS(\sig_0)}\\
    &=  \sum_{q=0}^{\left\lfloor n/2 \right\rfloor}\sum_{p=0}^{n-q-1}
            c_{n,n-2q}\sig_0^{n-2q-1} \tau^{n-q-1} \binom{n-q-1}{p}
            \(-\frac{1}{\sig_0\sqrt{2\tau}}\)^{p+n-q-1} H_{p+n-q-1}(\zeta),
\end{align}
where to obtain the last equality we have used \eqref{eq:result}.
\end{proof}
\begin{proposition}
\label{prop.un}
Fix $(t,T,x,y)$.  For every polynomial expansion $(\Ac_n(t))$ satisfying Definition \ref{def:expansion}
and for every $n \in \mathbb{N}$, the ratio ${u_n}/{\d_\sig u^\BS(\sig_0)}$ is a finite sum of the form
\begin{align}
\frac{u_n}{\d_\sig u^\BS(\sig_0)}
    &=  \sum_m \chi_m^{(n)}(t,T,x,y) \(-\frac{1}{\sig_0\sqrt{2\tau}}\)^{m} H_{m}\(\zeta\) , \label{eq:form}
\end{align}
where $\zeta$ and $\tau$ are as in Lemma \ref{lemma:ratio}. The coefficients $\chi_m^{(n)}(t,T,x,y)$
are explicit function of $x$ and $y$ and contain iterated integrals in the time-variable.  If the
iterated time-integrals can be computed explicitly then $\chi_m^{(n)}(t,T,x,y)$ is explicit in all
variables.
\end{proposition}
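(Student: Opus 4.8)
The plan is to derive the claim from Theorem~\ref{thm:dyson} together with the explicit Black--Scholes identity of Lemma~\ref{lemma:ratio}, once a structural fact about the operators $\Lc_n(t,T)$ has been isolated. By Remark~\ref{rmk:call}, Theorem~\ref{thm:dyson} applies to the Call payoff $\varphi(x)=(\ee^x-\ee^k)^+$, so $u_n(t)=\Lc_n(t,T)u_0(t)$ with $\Lc_n(t,T)$ as in \eqref{eq:Ln}--\eqref{eq:M.def}, and by Remark~\ref{rmk:u0=uBS} the function $u_0(t,\cdot)=u^{\BS}(\sig_0)$ depends on the single spatial variable $x=z_1$. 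The whole statement reduces to showing that
\begin{equation*}
u_n(t)=\sum_{m\geq 0}P^{(n)}_m(t,T,x,y)\,\d_x^m(\d_x^2-\d_x)u_0(t),
\end{equation*}
a \emph{finite} sum in which each $P^{(n)}_m$ is a polynomial in $(x,y)$ whose coefficients are iterated time-integrals of the model data. Indeed, dividing by $\d_\sig u^{\BS}(\sig_0)=\sig_0\tau(\d_x^2-\d_x)u^{\BS}(\sig_0)=\sig_0\tau(\d_x^2-\d_x)u_0(t)$ and applying Lemma~\ref{lemma:ratio} term by term gives \eqref{eq:form} with $\chi^{(n)}_m:=P^{(n)}_m/(\sig_0\tau)$, and the explicitness assertion then follows at once.

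Two preliminary observations feed the argument. First, the standing normalization $\mu_1=-\tfrac12\sig_1^2$ is respected term by term by the polynomial expansions of Examples~\ref{example:Taylor}--\ref{example:Hilbert} (those operations being linear in the coefficients), so each $\Ac_n(t)$ inherits the shape of \eqref{eq:A}--\eqref{eq:A-2d}: one may write $\Ac_n(t)=a_n(t,z)(\d_x^2-\d_x)+\mathscr{R}_n(t)$ where every term of $\mathscr{R}_n(t)$ carries at least one derivative $\d_{z_j}$ with $j\in\{2,\dots,d\}$. Second, since the polynomials $a_{\alpha,i}(t,\cdot)$ in \eqref{eq:G.def} are evaluated at the operator $\Mc(t,s)=z+\mv(t,s)+\Cv(t,s)\nabla_z$ of \eqref{eq:M.def}, which is affine in $(z,\nabla_z)$ with mutually commuting components, each $\Gc_i(t_0,s)$ is a differential operator whose coefficients are polynomials in $z$ whose entries are built polynomially out of the time functions $a_{\alpha,i}(s,\cdot)$, $\mv(t_0,s)=\int_{t_0}^s m$ and $\Cv(t_0,s)=\int_{t_0}^s C$ (the algebra of polynomial-coefficient differential operators being closed under composition). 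In particular $\Gc_i(t_0,s)=a_i(s,\Mc(t_0,s))(\d_x^2-\d_x)+\widetilde{\mathscr{R}}_i(t_0,s)$, with each term of $\widetilde{\mathscr{R}}_i$ still carrying some $\d_{z_j}$, $j\geq 2$.

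I would then run an induction along the operator products in \eqref{eq:Ln}. Let $\mathscr{V}$ denote the linear span of the operators $\sum_m p_m(x,y)\,\d_x^m(\d_x^2-\d_x)$ with $p_m$ polynomial. I claim that $(i)$ $\Gc_i(t_0,s)u_0(t)$ lies in $\mathscr{V}u_0(t)$ for every $i$, and $(ii)$ $\Gc_i(t_0,s)$ maps $\mathscr{V}u_0(t)$ into itself. For $(i)$: applied to the $x$-function $u_0$, the part $\widetilde{\mathscr{R}}_i$ of $\Gc_i$ annihilates it, since its derivatives $\d_{z_j}$, $j\geq 2$, act on a function of $x$ alone; this leaves $a_i(s,\Mc(t_0,s))(\d_x^2-\d_x)u_0$, and expanding the polynomial-coefficient operator $a_i(s,\Mc_x,\Mc_y)$ — whose $\d_{z_j}$, $j\geq 2$, again kill the $x$-function $(\d_x^2-\d_x)u_0$ — produces an element of $\mathscr{V}u_0$. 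For $(ii)$: on $w=\sum_m p_m(x,y)v_m(x)$ with $v_m:=\d_x^m(\d_x^2-\d_x)u_0$, the derivatives $\d_{z_j}$ ($j\geq 2$) differentiate only the polynomial factors, whereas $\d_x$ differentiates either a polynomial factor or a $v_m$ (merely raising its index), so the Leibniz rule shows that every term of $\Gc_i w$ is again of the form $(\textrm{polynomial in }x,y)\cdot v_{m'}$, and this survives the subsequent multiplication by the polynomial coefficients of $\Gc_i$. Composing $k$ such operators and integrating over the simplex $t_0\leq t_1\leq\cdots\leq t_k\leq T$ in \eqref{eq:Ln} converts the time-dependence of the polynomial coefficients into the asserted iterated integrals, which yields the displayed representation of $u_n$ and hence \eqref{eq:form}.

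The step I expect to be the real obstacle is the structural reduction of the first paragraph, namely that the factor $(\d_x^2-\d_x)u_0$ can always be extracted from $u_n$. This is exactly where the martingale normalization $\mu_1=-\tfrac12\sig_1^2$ is indispensable: without it one would be left with isolated $\d_x u^{\BS}$ or $\d_x^2 u^{\BS}$ terms, and $\d_x u^{\BS}/\d_\sig u^{\BS}$ — equivalently $\Nc(d_+)/\Nc'(d_+)$ — is not a polynomial in $\zeta$, so \eqref{eq:form} would fail. Identifying the correct induction invariant $\mathscr{V}$ and checking that no derivative $\d_{z_j}$ with $j\geq 2$ ever reaches the $u_0$-factor is the heart of the matter; the remaining bookkeeping — the normal form for $\Gc_i$, the Leibniz expansions, and the simplex integration — is routine.
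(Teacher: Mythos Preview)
Your proof is correct and uses the same core idea as the paper: extract a factor $(\d_x^2-\d_x)u^{\BS}(\sig_0)$ from $u_n$, then invoke Lemma~\ref{lemma:ratio}. The packaging differs slightly. You run an induction along the operator chain, showing that each $\Gc_i$ preserves the span $\mathscr{V}u_0$ of expressions $\sum_m p_m(x,y)\,\d_x^m(\d_x^2-\d_x)u_0$. The paper is a bit more economical: it peels off $(\d_x^2-\d_x)$ only from the \emph{innermost} factor $\Gc_{i_k}u_0=a_{i_k}(s,\Mc_x,\Mc_y)(\d_x^2-\d_x)u^{\BS}(\sig_0)$, writes the remaining product $\Gc_{i_1}\cdots\Gc_{i_{k-1}}a_{i_k}(s,\Mc_x,\Mc_y)$ as a single polynomial-coefficient differential operator $\widetilde{\Lc}_n$ in $(x,y)$, and then observes in one stroke that all $y$-derivatives in $\widetilde{\Lc}_n$ annihilate the $x$-only function $(\d_x^2-\d_x)u^{\BS}(\sig_0)$. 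Your induction makes the bookkeeping more transparent (and your remark that the martingale normalisation is what makes the extraction of $(\d_x^2-\d_x)$ possible is exactly right); the paper's one-shot reduction is shorter but leaves that structural point implicit. Either way, the content is the same.
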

\begin{proof}
From equation \eqref{eq:G.lsv} and Remark \ref{rmk:u0=uBS} we observe that, for the case $d=1,2$
\begin{align}
\Gc_i(t,s) u_0
    &:= a_{i}\(s,\Mc_x(t,s),\Mc_y(t,s)\) (\d_x^2-\d_x) u^\BS(\sig_0) .
\end{align}
For a general LSV model with $d-1$ factors of volatility we have
\begin{align}
\Mc_y(t,s)
    &=(\Mc_{y_1}(t,s),\Mc_{y_2}(t,s),\ldots,\Mc_{y_{d-1}}(t,s)) .
\end{align}
Therefore, using Theorem \ref{thm:dyson} we have
\begin{align}
\frac{u_n(t)}{\d_\sig u^\BS(\sig_0)}
    &=  \frac{\Lc_n(t,T) u_0(t)}{\d_\sig u^\BS(\sig_0)}
    =       \frac{\widetilde{\Lc}_n(t,T) (\d_x^2-\d_x) u^\BS(\sig_0)}{\tau \sig_0 (\d_x^2-\d_x) u^\BS(\sig_0)} ,
            \label{eq:clear}
\end{align}
where
\begin{equation}
\widetilde{\Lc}_n(t,T)
    =  \sum_{k=1}^n
            \int_{t}^T \dd t_1 \int_{t_1}^T \dd t_2 \cdots \int_{t_{k-1}}^T \dd t_k
            \sum_{i \in I_{n,k}}
            \Gc_{i_1}(t,t_1) \cdots
            \Gc_{i_{k-1}}(t,t_{k-1})
            a_{i_k}\(s,\Mc_x(t,t_k),\Mc_y(t,t_k)\) . \label{eq:L.tilde}
\end{equation}
It is clear that $\widetilde{\Lc}_n(t,T)$ is a differential operator that takes derivatives with respect to $x$ and $y$ and has coefficients that depend on $(t,T,x,y)$.  Noting that $\d_y^m u^\BS(\sig_0) = 0$ for all $m \geq 1$, it is clear from \eqref{eq:clear} that $u_n/\d_\sig u^\BS(\sig_0)$ is of the form
\begin{align}
\frac{u_n(t)}{\d_\sig u^\BS(\sig_0)}
    &=  \sum_m \chi_m^{(n)}(t,T,x,y) \frac{\d_x^m (\d^2_x - \d_x) u^\BS(\sig_0)}{(\d^2_x - \d_x) u^\BS(\sig_0)} .
    \label{eq:form.2}
\end{align}
Equation \eqref{eq:form} follows from equation \eqref{eq:form.2} and Lemma \ref{lemma:ratio}.  The sequence of coefficients $(\chi_m^{(n)})$ must be computed on a case-by-case basis because the $(\chi_m^{(n)})$ depend on the coefficients of the generator $\Ac(t)$ and the choice of polynomial expansion $(\Ac_n(t))$.
\end{proof}
From Propositions \ref{prop.sig.k} and \ref{prop.un} it is apparent that, as long as the iterated
time integrals in \eqref{eq:L.tilde} can be computed explicitly (which is always the case when the
coefficients in the polynomial expansion $(\Ac_n(t))$ are piece-wise polynomial in
time), every term in \eqref{eq:sig.n_bis} can be computed without the need for numerical
integration or special functions.
\par
Explicit expressions for each $\sig_n$ in the sequence $(\sig_n)_{n \geq 1}$ can be computed by
hand.  However, since the number of terms grows quickly with $n$, it is helpful to use a computer
algebra program such as Wolfram's Mathematica.  In Appendix \ref{sec:impvol.2}, we provide
explicit expressions for $\sig_n$ for $n \leq 2$ the coefficients of $\Ac(t)$ are expanded as a
Taylor series, as in Example \ref{example:Taylor}.  On the authors' website, we also provide
Mathematica notebooks which contains the expressions for $\sig_n$ for $n \leq 4$ for the LSV
models described in Section \ref{sec:examples}.
\begin{remark}
When the risk-free rate of interest is a deterministic function of time $r(t)$, the implied volatility results above hold with $k \to k - \int_t^T r(s) \dd s$.
\end{remark}

\section{Asymptotic error estimates {for Taylor expansions}}
\label{sec:error}
In this section we provide pointwise short-time error estimates for {the} approximate solution of
Cauchy problem \eqref{eq:v.pde} { discussed in Section \ref{sec:model}, as well as for the
approximate implied volatility presented in Section \ref{sec:impvol}}. Throughout this section we
shall assume that $T_{0}>0$ and $N\in\mathbb{N}_{0}$ are fixed and the coefficients of the
operator $\Ac(t)$ in \eqref{operator_AA} satisfy the following assumption:
\begin{assumption}\label{assumption:parabol_holder_bonded}
There exists a positive constant $M$ such that:
\begin{enumerate}
\item[i)] {\it Uniform ellipticity:}
\begin{align}\label{cond:parabolicity}
 M^{-1}|\xi|^2< \sum_{i,j=1}^{d}a_{ij}(t,z)\xi_{i}\xi_{j}< M |\xi|^2,\qquad
t\in\left[0,T_{0}\right],\ z,\xi\in\mathbb{R}^d.
\end{align}
\item[ii)] {\it Regularity and boundedness:} \  the coefficients $a_{ij},a_{i}\in C\left(\left[0,T_{0}\right]\times\R^{d}\right)$ and $a_{ij}(t,\cdot),a_{i}(t,\cdot)\in
C^{N+1}(\R^{d})$, with their partial derivatives of all orders bounded by $M$, uniformly with
respect to $t\in \left[0,T_{0}\right]$.
\end{enumerate}
\end{assumption}
Under Assumption \ref{assumption:parabol_holder_bonded} it is well-known that $\Ac(t)$ admits a
\emph{fundamental solution} $\Gamma(t,z;T,\zeta)$, which is the solution of the Cauchy problem
\eqref{eq:v.pde} with $\varphi=\delta_\zeta$. Equivalently, for any $T\in\left]0,T_{0}\right[$ and for
any measurable function $\varphi$ with at most exponential growth, the backward parabolic Cauchy
problem \eqref{eq:v.pde} admits a {unique} classical solution $u$, which is given by
\begin{align}\label{eq:convolution_u}
u(t,z)=\int_{\mathbb{R}^d} \Gamma(t,z;T,\zeta) \varphi(\zeta) \dd \zeta,\qquad {t\in\left[0,T\right[},\
z\in\mathbb{R}^d.
\end{align}
Furthermore, by the Feynman-Kac representation theorem, the function $\Gamma(t,z;T,\zeta)$ is also the transition density of the stochastic process generated by $\Ac(t)$.

\begin{remark}
\label{rmk:andrea}
Assumption \ref{assumption:parabol_holder_bonded} can be considerably relaxed.  The main results
(Theorem \ref{th:error} and Corollary \ref{cor:errorprice} below) have been recently extended in
\cite{PP_compte_rendu}, to include the majority of popular models
in mathematical finance (e.g. CEV, Heston, SABR, three-halves, etc.). 
\end{remark}

Consider now the Taylor polynomial expansion discussed in
Example \ref{example:Taylor}. It will be helpful to explicitly indicate the dependence on the
{expansion} point $\bar{z}$. In particular, for any {$\bar{z}\in\mathbb{R}^d$}, we
consider the polynomial expansion $(\Ac^{(\bar{z})}_n(t))_{0\leq n\leq N}$, given by
\begin{align}\label{eq:taylor_polyn_bis}
\Ac^{(\bar{z})}_n(t,z)\equiv\Ac^{(\bar{z})}_n(t)
    &:=  \sum_{|\alpha |\leq 2}  a^{(\bar{z})}_{\alpha,n}(t,z) D_z^{\alpha} \qquad a^{(\zbar)}_{\alpha,n}(\cdot,z)
  =        \sum_{|\beta|=n}\frac{D_z^{\beta}a_{\alpha}(\cdot,\bar{z})}{\beta!}(z-\bar{z})^{\beta}, &
n
    &\leq N,
\end{align}
Now, {fix a maturity date $T$}.  We define the $N$-th order Taylor approximation centered at
$\zbar\in\mathbb{R}^d$, of $\Gamma$ and $u$ respectively, as
\begin{align}
\bar{u}^{(\bar{z})}_{N}(t,z)
    &:= \sum_{n=0}^N u^{(\bar{z})}_n(t,z) , &
\bar{\Gamma}^{(\bar{z})}_{N}(t,z,T,\zeta)
    &:= \sum_{n=0}^N \Gamma^{(\bar{z})}_n (t,z,T,\zeta) , &
 t
        &< T, &
z,\zeta
        &\in \mathbb{R}^{d}, \label{eq:u.approx_xbar}
\end{align}
where the functions
\begin{align}
u^{(\bar{z})}_n(t,\cdot)
    &=\Lc^{(\bar{z})}_n(t,T) u^{(\bar{z})}_0(t,\cdot) , &
\Gamma^{(\bar{z})}_n(t,\cdot;T,\zeta)
    &=\Lc^{(\bar{z})}_n(t,T) \Gamma^{(\bar{z})}_0(t,\cdot;T,\zeta) , \label{eq:note.2.ref}
\end{align}
are as given in Theorem \ref{thm:dyson}. {Note that $\bar{u}^{(\bar{z})}_{N}$ is defined for a
fixed $T$, as indicated by \eqref{eq:note.2.ref}.}
Note also that we have once again used the superscript $\bar{z}$  above to emphasize the
dependence on the initial point of the Taylor expansion.
For the particular choice $\zbar=z$, we give the following definition:
\begin{definition}
\label{def:taylor} {For a fixed maturity date $T$,} we define the \emph{$N$-th order Taylor
approximations} of $u$ and $\Gamma$, respectively, as
\begin{align}
\bar{u}_{N}(t,z)
    &:= \bar{u}^{(z)}_{N}(t,z) , &
\bar{\Gamma}_{N}(t,z;T,\zeta)
    &:= \bar{\Gamma}^{(z)}_{N}(t,z;T,\zeta)
                , \label{eq:u.approx}
\end{align}
where $\bar{u}^{(z)}_{N}(t,z)$ and $\bar{\Gamma}^{(z)}_{N}(t,z;T,\zeta)$ are as defined in \eqref{eq:u.approx_xbar}-\eqref{eq:note.2.ref}.
\end{definition}

We now give analogous definitions for the implied volatility expansion. As we did in Section
\ref{sec:impvol}, we use the notation {$(x,y)\in\mathbb{R}\times\mathbb{R}^{d-1}$} to indicate a
point in $\mathbb{R}^d$, {where we separate $x$ from all other components in order to distinguish
the log-price from all the other variables} (e.g. variance process, vol-vol process, etc.). {For a
Call option with maturity date $T$ and log strike $k$}, we define the $N$-th order Taylor
approximation centered at $(\xbar,\bar{y})\in\mathbb{R}\times\mathbb{R}^{d-1}$ of the implied
volatility $\sigma$, as
\begin{align}
 \bar{\sigma}^{(\bar{x},\bar{y})}_N(t,x,y,k)
    &:=\sigma^{(\bar{x},\bar{y})}_0(t) + \sum_{n=0}^{N}\sigma^{(\bar{x},\bar{y})}_n(t,x,y,k), &
t
    &< T, &
(x,y)
    &\in \mathbb{R}\times\mathbb{R}^{d-1} , \label{eq:sig.approx_xbar}
\end{align}
where, for sake of clarity we recall
\begin{align}
\sig^{(\bar{x},\bar{y})}_0(t)
    &=\sqrt{ \frac{2}{T-t} \int_t^T   a_{(2,0,\ldots,0)}(s,\bar{x},\bar{y}) \, \dd s},\label{eq:sigma_0_tris} \\
\sig^{(\bar{x},\bar{y})}_n(t,x,y,k)
    &=  \frac{u^{(\bar{x},\bar{y})}_n(t,x,y,k)}{\d_\sig u^\BS\big(\sig^{(\bar{x},\bar{y})}_0(t);T-t,x,k\big)}
            - \frac{1}{n!} \sum_{h=2}^{n}
                \mathbf{B}_{n,h}\(1!\,\sig^{(\bar{x},\bar{y})}_{1},2!\,\sig^{(\bar{x},\bar{y})}_{2},\dots,(n-h+1)!\, \sig^{(\bar{x},\bar{y})}_{n-h+1}\)
        \\ &\qquad \times
            \frac{\d_\sig^h u^\BS\big(\sig^{(\bar{x},\bar{y})}_0(t);T-t,x,k\big)}{\d_\sig u^\BS\big(\sig^{(\bar{x},\bar{y})}_0(t);T-t,x,k\big)},
 \qquad\qquad n \geq 1 \label{eq:sigma_n_tris} , \\
u^{(\bar{x},\bar{y})}_n(t,x,y,k)
    &=  \Lc_n^{(\bar{x},\bar{y})}(t,T) u^{(\bar{x},\bar{y})}_0(t,x,k)
    =       \Lc_n^{(\bar{x},\bar{y})}(t,T) u^\BS(\sig_0^{(\bar{x},\bar{y})}(t);T-t,x,k) . \label{eq:matt2}
\end{align}
A few notes are in order.  {First, we have added the argument $k$ to the function
$u_n^{(\bar{x},\bar{y})}$ to indicate its dependence on $\log$ strike. Second, the function
$u_n^{(\bar{x},\bar{y})}$ depends on the maturity date $T$, as indicated by \eqref{eq:matt2}.}
{Third, each $\sig^{(\bar{x},\bar{y})}_{n}$ in the sequence $(\sig^{(\bar{x},\bar{y})}_{n})_{n
\geq 1}$ depends on $(t,x,y,k)$.  Though, for clarity, we have not written all of these arguments
in
$\mathbf{B}_{n,h}\(1!\,\sig^{(\bar{x},\bar{y})}_{1},2!\,\sig^{(\bar{x},\bar{y})}_{2},\dots,(n-h+1)!\,
\sig^{(\bar{x},\bar{y})}_{n-h+1}\)$.}
Fourth, we have once again {explicitly indicated with a superscript $(\bar{x},\bar{y})$ the
dependence on the initial point of the Taylor expansion}. For the particular choice $\xbar=x$ and
$\bar{y}=y$, we make the following definition:
\begin{definition}
\label{def:taylor.sigma} {For a Call option with $\log$ strike $k$ and maturity $T$,} we define
the  \emph{$N$-th order Taylor approximation of the implied volatility} $\sigma$ as
\begin{align}
\bar{\sigma}_N(t,x,y,k):=\bar{\sigma}^{({x,y})}_N(t,x,y,k), \label{eq:sig.approx}
\end{align}
where $\bar{\sigma}^{({x,y})}_N(t,x,y,k)$ is as defined in \eqref{eq:sig.approx_xbar}-\eqref{eq:sigma_0_tris}-\eqref{eq:sigma_n_tris}-\eqref{eq:matt2}.
\end{definition}

\subsection{Error estimates for the transition density and prices}
\label{sec:price.error}
The following theorem provides an asymptotic pointwise estimate as $t\to T^{-}$ for the error
introduced by replacing the exact transition density $\Gamma$ with the $N$-th order approximation
$\bar{\Gamma}_{N}$.
\begin{theorem}\label{th:error}
Let Assumption \ref{assumption:parabol_holder_bonded} hold and let $0<T\leq T_{0}$.
Then, for any 
$\varepsilon>0$ we have
\begin{align}\label{eq:error_estimate}
 \left| \Gamma(t,z;T,\zeta)-\bar{\Gamma}_{N}(t,z;T,\zeta) \right| \leq C \, (T-t)^{\frac{N+1}{2}}\Gamma^{M+\varepsilon} (t,z;T,\zeta), \qquad 0\leq t<T,\
 z,\zeta\in\mathbb{R}^d,
\end{align}
where {$\bar{\Gamma}_{N}(t,z;T,\zeta)$ is as defined in \eqref{eq:u.approx},}
$\Gamma^{M+\varepsilon}(t,z;T,\zeta)$ is the fundamental solution of the
heat operator
\begin{align}\label{eq:heatoperator}
H^{M+\varepsilon}=(M+\varepsilon) \sum_{i=1}^d \partial^2_{z_i}+\partial_t,
\end{align}
and $C$ is a positive constant that depends only on $M,N,T_{0}$ and $\varepsilon$.
\end{theorem}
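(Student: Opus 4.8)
The plan is to prove the estimate by a parametrix-type argument, controlling the remainder of the Dyson series expansion and iterating a Gronwall-style bootstrap. The key object is the remainder after $N$ terms: writing $u = \sum_{n=0}^\infty u_n$ from \eqref{eq:u.sum}, one has $\bar u_N = \sum_{n=0}^N u_n$ (taking $\zbar = z$), and from the recursion underlying \eqref{eq:u.duhemel}-\eqref{eq:dyson} the remainder $R_N := u - \bar u_N$ satisfies a Duhamel-type identity in which each application of $\Bc(t) = \sum_{n\geq 1}\Ac_n(t)$ that contributes to a term of total order $> N$ must use at least one operator $\Ac_n$ with $n \geq 1$ whose coefficient $a_{\alpha,n}^{(z)}(t,\cdot)$ is a homogeneous polynomial of degree $n$ in $(\zeta - z)$. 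Because we are Taylor-expanding around the \emph{running point} $z$ and Assumption \ref{assumption:parabol_holder_bonded}(ii) bounds all derivatives of the coefficients uniformly by $M$, the Taylor remainder of order $N+1$ is pointwise bounded by $C|\zeta - z|^{N+1}$; this is the source of the factor $(T-t)^{(N+1)/2}$ once one integrates against the Gaussian kernel $\Gamma_0$, since $|\zeta - z|^{N+1}$ against a Gaussian of variance $\sim (T-t)$ produces a factor $(T-t)^{(N+1)/2}$.

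First I would set up the representation of $\bar\Gamma_N$ as an exact solution to a perturbed Cauchy problem: by Theorem \ref{thm:dyson} and Remark \ref{rmk:call}, $u_n^{(z)} = \Lc_n^{(z)}(t,T)u_0^{(z)}$ solves the nested Cauchy problems $(\d_t + \Ac_0^{(z)})u_n^{(z)} = -\sum_{j=1}^n \Ac_j^{(z)}u_{n-j}^{(z)}$ with zero terminal data for $n \geq 1$. Summing, $\bar u_N$ solves $(\d_t + \Ac^{(z)})\bar u_N = \mathcal{R}_N$ where $\Ac^{(z)} = \sum_{j=0}^N \Ac_j^{(z)}$ is the $N$-th Taylor polynomial of the generator around $z$ and $\mathcal{R}_N = \sum_{j=1}^N \sum_{m=N-j+1}^{N}\Ac_j^{(z)} u_m^{(z)}$ collects the "overflow" terms of order $> N$. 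Then the error $\Gamma - \bar\Gamma_N$ (equivalently $u - \bar u_N$ for Schwartz data, then extended to $\delta_\zeta$) solves a Cauchy problem with source $\mathcal{R}_N$ plus the term coming from $\Ac - \Ac^{(z)}$, i.e. the genuine Taylor remainder of the coefficients. Both source terms carry either an explicit factor $|\zeta - z|^{N+1}$ (from the coefficient Taylor remainder) or an analogous polynomial weight from the differential operators $\Lc_m^{(z)}$ applied to $u_0^{(z)}$, together with time-integrals of total order matching $(N+1)/2$.

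Next I would invoke Gaussian-type a priori bounds. Under Assumption \ref{assumption:parabol_holder_bonded}, the fundamental solution $\Gamma$ and the frozen/approximate kernels and their $z$-derivatives are all dominated by a Gaussian $\Gamma^{M+\eps}$ with a slightly enlarged diffusion constant — this is the classical parametrix Aronson-type estimate, and the derivative estimates gain the expected powers of $(T-t)^{-1/2}$. Convolving the source $\mathcal{R}_N$ (which is $O(|\zeta - z|^{N+1})$ times bounded-coefficient derivatives of $\Gamma_0^{(z)}$-type kernels) through the Duhamel formula against $\Gamma$, and using the reproducing/semigroup property of Gaussians together with the elementary inequality $|\zeta - z|^{N+1}\Gamma^{M+\eps'}(t,z;T,\zeta)\leq C(T-t)^{(N+1)/2}\Gamma^{M+\eps}(t,z;T,\zeta)$ (absorbing the polynomial into a marginally larger variance, which is why we lose an arbitrarily small $\eps$), yields exactly \eqref{eq:error_estimate}. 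The constant $C$ depends only on $M$ (through the coefficient bounds and ellipticity), on $N$ (number of terms and order of derivatives), on $T_0$ (to control the time-integrals uniformly), and on $\eps$ (blowing up as $\eps \to 0$).

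\emph{The main obstacle} I expect is twofold. First, making rigorous the passage from the Schwartz-data identity $u - \bar u_N = \Lc$-type remainder (which is how Theorem \ref{thm:dyson} is proved) to a \emph{pointwise} bound on the kernel difference $\Gamma - \bar\Gamma_N$: one must justify differentiating the Duhamel series term-by-term, control the $z$-derivatives hidden inside $\Lc_n^{(z)}(t,T)$ acting on $\Gamma_0^{(z)}$, and handle the fact that the expansion point $z$ varies — so $\bar\Gamma_N(t,z;T,\zeta)$ is \emph{not} the solution of a single linear PDE in $z$ but a diagonal extraction $\bar\Gamma_N^{(z')}(t,z;T,\zeta)\big|_{z'=z}$; careful bookkeeping (as in \cite{pagliarani2011analytical}) is needed to show the extra $z$-dependence produces only lower-order corrections already accounted for. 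Second, the parametrix iteration itself: one needs the standard but delicate estimate that convolving a Gaussian-with-polynomial-weight of order $N+1$ against the exact fundamental solution reproduces a Gaussian-with-polynomial-weight of the \emph{same} order (not higher), so that the series of iterated convolutions converges and contributes a bounded constant rather than compounding the power of $(T-t)$ — this is where the uniform boundedness of \emph{all} derivatives in Assumption \ref{assumption:parabol_holder_bonded}(ii), not just those up to order $N+1$, is genuinely used.
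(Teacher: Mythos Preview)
Your plan is essentially the paper's proof: characterize $u_n^{(\bar z)}$ via the nested Cauchy problems \eqref{eq:v.n.pide.xbar}, derive a Duhamel identity for $u - \bar u_N^{(\bar z)}$ against the \emph{exact} fundamental solution $\Gamma$, bound the source by the Taylor remainder of the coefficients (producing the factor $|\xi - z|^{n+1}$), and finish with the Gaussian estimates of Lemmas \ref{lem:gaussian_estimates} and \ref{lem:eq:der_u_n} together with the semigroup property of $\Gamma^{M+\eps}$. The paper packages the source more compactly as $\sum_{n=0}^N \big(\Ac(s) - \bar{\Ac}_n^{(z)}(s)\big) u_{N-n}^{(z)}$, with $\bar{\Ac}_n^{(z)}=\sum_{i=0}^n\Ac_i^{(z)}$, which makes the bound $M|\xi-z|^{n+1}$ on each coefficient error immediate and avoids splitting into your ``overflow'' and ``genuine Taylor remainder'' pieces.

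Two corrections to your anticipated obstacles: (i) no parametrix iteration or convergent series of convolutions is needed---Duhamel is applied \emph{once} against the true $\Gamma$, after which the classical Aronson bound $\Gamma \leq C\,\Gamma^{M+\eps}$ reduces everything to a single heat-kernel convolution; (ii) Assumption \ref{assumption:parabol_holder_bonded}(ii) requires only $C^{N+1}$ regularity, and the proof indeed uses derivatives of the coefficients only up to order $N+1$.
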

\noindent
Combining Theorem \ref{th:error} with \eqref{eq:convolution_u} we obtain an asymptotic estimate for $|u(t,z)-\bar{u}_{N}(t,z)|$, the pricing error.
\begin{corollary}\label{cor:errorprice}
Under the assumptions of Theorem \ref{th:error}, for any $0<T\leq T_{0},\varepsilon>0$ we have
 \begin{align}\label{eq:error_estimate_price}
 \left|u(t,z)-\bar{u}_{N}(t,z)\right|\leq C \, (T-t)^{\frac{N+1}{2}}\int_{\mathbb{R}^d}
 \Gamma^{M+\varepsilon}(t,z;T,\zeta) \varphi(\zeta) \dd \zeta, \qquad 0\leq t<T,\
 z\in\mathbb{R}^d.
\end{align}
{where $\bar{u}_{N}(t,z)$ is as defined in \eqref{eq:u.approx}.}
\end{corollary}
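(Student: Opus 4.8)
The plan is to deduce the corollary directly from Theorem~\ref{th:error} by representing both $u$ and its approximation $\bar{u}_{N}$ as convolutions of the payoff $\varphi$ against, respectively, the exact transition density $\Gamma$ and its $N$-th order Taylor approximation $\bar{\Gamma}_{N}$, and then pulling the pointwise density estimate \eqref{eq:error_estimate} through the integral. The exact representation $u(t,z)=\int_{\mathbb{R}^d}\Gamma(t,z;T,\zeta)\varphi(\zeta)\,\dd\zeta$ is already available as \eqref{eq:convolution_u}, so the only thing that genuinely needs to be checked is the analogous identity for the approximation,
\begin{align}
\bar{u}_{N}(t,z) = \int_{\mathbb{R}^d} \bar{\Gamma}_{N}(t,z;T,\zeta)\,\varphi(\zeta)\,\dd\zeta , \qquad 0\leq t<T,\ z\in\mathbb{R}^d ; \label{eq:rep.plan}
\end{align}
after that the corollary is just the triangle inequality.

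To establish \eqref{eq:rep.plan} I would fix the expansion point $\bar z\in\mathbb{R}^d$ and the maturity $T$, and recall from \eqref{eq:u0} that $u_0^{(\bar z)}(t,\cdot)=\int_{\mathbb{R}^d}\Gamma_0^{(\bar z)}(t,\cdot;T,\zeta)\varphi(\zeta)\,\dd\zeta$. By Theorem~\ref{thm:dyson} and \eqref{eq:note.2.ref} we have $u_n^{(\bar z)}(t,\cdot)=\Lc_n^{(\bar z)}(t,T)\,u_0^{(\bar z)}(t,\cdot)$ and $\Gamma_n^{(\bar z)}(t,\cdot;T,\zeta)=\Lc_n^{(\bar z)}(t,T)\,\Gamma_0^{(\bar z)}(t,\cdot;T,\zeta)$, where for each fixed $\bar z$ and fixed $(t,T)$ the operator $\Lc_n^{(\bar z)}(t,T)$ is a finite-order linear differential operator in the running spatial variable with coefficients that are polynomial in that variable. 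Since $\Gamma_0^{(\bar z)}(t,\cdot;T,\zeta)$ is a non-degenerate Gaussian density and $\varphi$ grows at most exponentially, the spatial derivatives of $\zeta\mapsto p(z')\,D^{\beta}_{z'}\Gamma_0^{(\bar z)}(t,z';T,\zeta)\,\varphi(\zeta)$ (for $p$ polynomial) are dominated, locally uniformly in $z'$, by an integrable function of $\zeta$; hence one may differentiate under the integral sign, obtaining $u_n^{(\bar z)}(t,z')=\int_{\mathbb{R}^d}\Gamma_n^{(\bar z)}(t,z';T,\zeta)\varphi(\zeta)\,\dd\zeta$ for every $z'$. Specializing to the diagonal $\bar z=z'=z$ and using Definition~\ref{def:taylor} gives $u_n(t,z)=\int_{\mathbb{R}^d}\Gamma_n(t,z;T,\zeta)\varphi(\zeta)\,\dd\zeta$, and summing over $0\leq n\leq N$ yields \eqref{eq:rep.plan}.

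With \eqref{eq:rep.plan} in hand, subtracting it from \eqref{eq:convolution_u} and invoking the pointwise bound \eqref{eq:error_estimate} of Theorem~\ref{th:error} gives
\begin{align}
\left|u(t,z)-\bar{u}_{N}(t,z)\right|
\leq \int_{\mathbb{R}^d}\left|\Gamma(t,z;T,\zeta)-\bar{\Gamma}_{N}(t,z;T,\zeta)\right|\varphi(\zeta)\,\dd\zeta
\leq C\,(T-t)^{\frac{N+1}{2}}\int_{\mathbb{R}^d}\Gamma^{M+\varepsilon}(t,z;T,\zeta)\varphi(\zeta)\,\dd\zeta ,
\end{align}
which is \eqref{eq:error_estimate_price}; here we used that the European payoffs under consideration (e.g.\ Call payoffs) are nonnegative, so $\varphi=|\varphi|$, and the final integral is finite because $\Gamma^{M+\varepsilon}$ is Gaussian while $\varphi$ is at most exponentially growing (for a general signed $\varphi$ the same chain holds with $\varphi$ replaced by $|\varphi|$ on the right). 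The only mildly delicate point is the justification of differentiation under the integral sign together with the passage to the diagonal $\bar z=z$ in the definition of $\bar{u}_{N}$; both are routine consequences of the Gaussian form of $\Gamma_0^{(\bar z)}$ and the polynomial-in-space structure of the coefficients of $\Lc_n^{(\bar z)}(t,T)$, so I do not expect a real obstacle here.
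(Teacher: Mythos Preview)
Your proposal is correct and is exactly the approach the paper takes: the paper states only that the corollary follows by ``combining Theorem~\ref{th:error} with \eqref{eq:convolution_u}'', leaving the convolution representation of $\bar{u}_N$ implicit. You have simply made explicit the routine step that $\bar{u}_N(t,z)=\int_{\mathbb{R}^d}\bar{\Gamma}_N(t,z;T,\zeta)\varphi(\zeta)\,\dd\zeta$, which follows from the definitions \eqref{eq:note.2.ref}--\eqref{eq:u.approx} and differentiation under the integral sign, so there is no methodological difference.
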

The proof of Theorem \ref{th:error} relies on the following Gaussian estimates (see
\cite{friedman-parabolic}, Chapter 1).
\begin{lemma}\label{lem:gaussian_estimates}
Let $\Ac(t)$ be a differential operator satisfying Assumption \ref{assumption:parabol_holder_bonded} and let $\Gamma=\Gamma(t,z;T,\zeta)$ be the fundamental solution corresponding to $\Ac(t)$. Then, for any $\varepsilon>0$ and $\b,\g\in\mathbb{N}_{0}^{d}$ with $|\g|\le N+3$, we have
\begin{align}\label{Gaua}
 |(z-\zeta)^{\beta}\, D_z^{\g}\Gamma(t,z;T,\zeta)|\le C \, (T-t)^{\frac{|\b|-|\g|}{2}}\Gamma^{M+\varepsilon}(t,z;T,\zeta),\qquad 0\leq t<T \leq T_{0},\quad z,\zeta\in\R^{d},
\end{align}
where $\Gamma^{M+\varepsilon}$ is the fundamental solution of the heat operator
\eqref{eq:heatoperator} and $C$ is a positive constant, which depends only on
$M,N,T_{0},\varepsilon$ and $|\b|$.
\end{lemma}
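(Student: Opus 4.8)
The plan is to reduce the estimate \eqref{Gaua} to the classical Gaussian bounds for fundamental solutions of uniformly parabolic operators with bounded H\"older coefficients, as developed in \cite{friedman-parabolic}, Chapter 1. First I would invoke the existence and basic properties of the fundamental solution $\Gamma(t,z;T,\zeta)$ under Assumption \ref{assumption:parabol_holder_bonded}; the uniform ellipticity \eqref{cond:parabolicity} and the $C^{N+1}$ boundedness of the coefficients place us squarely in the setting of Friedman's construction via the parametrix method. The parametrix expansion produces $\Gamma$ together with precise control on a finite number of its spatial derivatives, the number being limited by the smoothness of the coefficients --- here $N+1$ derivatives of the coefficients, which by the standard parametrix bookkeeping yields control of up to $N+3$ spatial derivatives of $\Gamma$ (two extra orders coming from the second-order nature of the operator acting on the Gaussian parametrix). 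This is exactly the range $|\gamma|\le N+3$ appearing in the statement.

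The core step is the pointwise bound: for $|\gamma|\le N+3$ there exists $C=C(M,N,T_0,\varepsilon)$ with
\begin{align}
|D_z^{\gamma}\Gamma(t,z;T,\zeta)|\le C\,(T-t)^{-|\gamma|/2}\,\Gamma^{M+\varepsilon}(t,z;T,\zeta),\qquad 0\le t<T\le T_0,
\end{align}
which is the $\beta=0$ case of \eqref{Gaua}. Here the Gaussian upper bound on $\Gamma$ itself is the classical heat-kernel comparison: since the principal part is bounded above by $(M+\varepsilon)\,\mathrm{Id}$ in the quadratic-form sense (for any $\varepsilon>0$, absorbing the lower-order drift terms into the constant via the short-time parameter $T_0$), one gets $\Gamma\le C\,\Gamma^{M+\varepsilon}$, and the analogous bound for derivatives follows from Friedman's differentiated parametrix estimates, which produce precisely the $(T-t)^{-|\gamma|/2}$ scaling. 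To bring in the polynomial prefactor $(z-\zeta)^{\beta}$, I would use the elementary fact that for the Gaussian $\Gamma^{M'}$ with $M'=M+\varepsilon/2$ one has $|z-\zeta|^{|\beta|}\,\Gamma^{M'}(t,z;T,\zeta)\le C\,(T-t)^{|\beta|/2}\,\Gamma^{M+\varepsilon}(t,z;T,\zeta)$, because multiplying a Gaussian of variance $\propto(M')(T-t)$ by a monomial of degree $|\beta|$ costs a factor $(T-t)^{|\beta|/2}$ at the price of slightly enlarging the variance to that of $\Gamma^{M+\varepsilon}$. Combining this with the derivative bound (stated with $M+\varepsilon/2$ in place of $M+\varepsilon$) and a final relabelling of $\varepsilon$ gives \eqref{Gaua} with the stated $(T-t)^{(|\beta|-|\gamma|)/2}$ power.

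I expect the main obstacle --- really the only nontrivial point --- to be the careful tracking of how many spatial derivatives of $\Gamma$ are legitimately controlled given only $C^{N+1}$ regularity of the coefficients, and verifying that the constant $C$ in Friedman's estimates depends only on the ellipticity constant $M$, the dimension $d$, the time horizon $T_0$, the derivative order, and $\varepsilon$ --- and in particular not on $t,T,z,\zeta$. This is a matter of reading off the dependencies in the parametrix iteration rather than a genuine difficulty: the series defining $\Gamma$ converges uniformly on $[0,T_0]$ with all constants expressible in terms of the H\"older/ellipticity data, and each spatial differentiation is paid for by one power of $(T-t)^{-1/2}$ in the Gaussian envelope. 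Since the lemma is quoted directly from \cite{friedman-parabolic}, the write-up can be brief: state the classical theorem, perform the Gaussian-to-Gaussian polynomial absorption, and conclude.
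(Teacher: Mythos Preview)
Your proposal is correct and matches the paper's treatment: the paper does not actually prove this lemma but simply cites \cite{friedman-parabolic}, Chapter~1, for these classical Gaussian estimates. Your sketch --- invoking Friedman's parametrix bounds for $D_z^{\gamma}\Gamma$ and then absorbing the polynomial factor $(z-\zeta)^{\beta}$ into a slightly wider Gaussian --- is the standard way to unpack that citation, and your remark about the $C^{N+1}$ regularity yielding control of $N+3$ spatial derivatives is on point.
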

We also need the following preliminary estimates (see \cite[Lemma 6.23]{LPP4})
\begin{lemma}\label{lem:eq:der_u_n}
Under the assumptions of Theorem \ref{th:error}, for any $n\in\mathbb{N}$ with $n\leq N$, $\epsilon>0$, and for any $\beta
\in\mathbb{N}_0^{d}$, we have
\begin{align}
\label{eq:der_u_n}
 \left|D^{\beta}_{z}\Gamma^{(\bar{z})}_n(t,z)\right|\leq C \, (T-t)^{\frac{n-|\beta|}{2}}\left(1+|z-\bar{z}|^{n}\left(T-t\right)^{-\frac{n}{2}}
 \right)\Gamma^{M+\varepsilon}(t,z;T,\zeta),
\end{align}
which holds for $0\leq t<T\leq T_{0},\ z,\zeta,\bar{z}\in\mathbb{R}^d$. Here, the function
$\Gamma^{M+\varepsilon}$ is the fundamental solution of the heat operator \eqref{eq:heatoperator}
and $C$ is a positive constant, which depends only on $M,N,T_{0},\varepsilon$ and $|\b|$.
\end{lemma}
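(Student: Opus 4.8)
The plan is to prove \eqref{eq:der_u_n} by induction on $n$, using the explicit representation $\Gamma^{(\bar z)}_n(t,z) = \Lc^{(\bar z)}_n(t,T)\Gamma^{(\bar z)}_0(t,z;T,\zeta)$ from Theorem~\ref{thm:dyson} together with the Gaussian bounds of Lemma~\ref{lem:gaussian_estimates}. The base case $n=0$ is exactly Lemma~\ref{lem:gaussian_estimates} applied to $\Gamma_0^{(\bar z)}$, the Gaussian fundamental solution of the constant-coefficient operator $\Ac_0^{(\bar z)}(t)$ (which is uniformly elliptic by Assumption~\ref{assumption:parabol_holder_bonded}(i) and has the right structure because it is frozen at $\bar z$); here the factor $(1+|z-\bar z|^n(T-t)^{-n/2})$ is just $1$, so nothing is lost.

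For the inductive step, I would unwind one level of the operator $\Lc_n^{(\bar z)}(t,T)$ in \eqref{eq:Ln}: each $\Gc_{i_j}(t,t_j)$ in the product is, by \eqref{eq:G.def}, a second-order differential operator whose coefficients are the Taylor monomials $a^{(\bar z)}_{\alpha,i_j}(t_j,\Mc(t,t_j))$, and $\Mc(t,t_j) = z + \mv(t,t_j) + \Cv(t,t_j)\nabla_z$. Thus applying $\Gc_{i_1}(t,t_1)$ to the lower-order object $\Gamma_{n-i_1}^{(\bar z)}(t_?,\cdot)$ — after reorganizing the iterated integral so the outermost operator is peeled off — produces: polynomial factors $(z-\bar z)^\gamma$ with $|\gamma|\le i_1$, coming from the Taylor coefficient $D^\beta_z a_\alpha(\cdot,\bar z)(z-\bar z)^\beta/\beta!$ with $|\beta|=i_1$ (bounded in the $z$-derivative by $M$ by Assumption~\ref{assumption:parabol_holder_bonded}(ii)); extra $z$-derivatives (at most two from $D^\alpha_z$, plus up to $|\alpha|\le 2$ more from the $\Cv(t,t_j)\nabla_z$ pieces inside $\Mc$); and a time-integral $\int_t^T\!\dd t_1$ over a region of length $\le T-t$. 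One then applies the inductive hypothesis to $D^{\beta'}_z\Gamma^{(\bar z)}_{n-i_1}$ for the appropriate enlarged multi-index $\beta'$, uses $|\Cv(t,t_1)|\lesssim (T-t)$ to account for the scaling weight each $\Cv\nabla_z$ contributes, and integrates the resulting power of $(T-t_1)$ in $t_1$ — raising the exponent by $1$, i.e., by $2\cdot\tfrac12$, consistent with the claimed $(T-t)^{(n-|\beta|)/2}$ after summing $i_1+\dots = n$. The polynomial factors $(z-\bar z)^\gamma$ are absorbed into the $(1+|z-\bar z|^n(T-t)^{-n/2})$ term exactly as in the statement; to control mixed factors of the form $(z-\bar z)^\gamma$ times a lower-order bound one uses that $|z-\bar z|^{|\gamma|}(T-t)^{-|\gamma|/2}\cdot(1+|z-\bar z|^{n-i_1}(T-t)^{-(n-i_1)/2}) \lesssim 1+|z-\bar z|^{n}(T-t)^{-n/2}$, which follows by Young's inequality. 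Finally, the residual heat kernels multiply up: a product of Gaussian densities $\Gamma^{M+\varepsilon}$ at intermediate times, each bounded via Lemma~\ref{lem:gaussian_estimates}, collapses (by Chapman--Kolmogorov / reproduction for Gaussians, possibly after enlarging $\varepsilon$ slightly at each step — the total number of steps is at most $N$, which is fixed) to a single $\Gamma^{M+\varepsilon}(t,z;T,\zeta)$ up to a constant depending only on $M,N,T_0,\varepsilon,|\beta|$.

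The main obstacle will be bookkeeping the polynomial factors $(z-\bar z)^\gamma$ uniformly: each application of a $\Gc_{i_j}$ both differentiates (which, by Lemma~\ref{lem:gaussian_estimates}, can produce its own $(z-\zeta)$-type polynomial weights via the identity $D_z\Gamma^{M+\varepsilon}\sim (z-\zeta)(T-t)^{-1}\Gamma^{M+\varepsilon}$) and multiplies by an $(z-\bar z)^\beta$ monomial, so one must carefully track that the total polynomial degree accumulated after processing all the $\Gc_{i_j}$'s with $i_1+\dots+i_k=n$ never exceeds $n$ in the $|z-\bar z|$ variable, and that the compensating negative powers of $(T-t)$ never exceed $n/2$. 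I would organize this by proving a slightly stronger statement that simultaneously bounds $(z-\bar z)^\rho D^\beta_z\Gamma_n^{(\bar z)}$ for all $\rho$, so the induction closes cleanly; the time-integration and the composition of Gaussians are then routine given Lemmas~\ref{lem:gaussian_estimates} and standard parabolic estimates. The reference \cite[Lemma 6.23]{LPP4} indicates the argument is essentially known, so I expect the write-up to amount to adapting that bookkeeping to the present multi-factor setting.
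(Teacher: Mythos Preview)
The paper does not actually prove this lemma: it simply cites \cite[Lemma~6.23]{LPP4} and moves on. So there is no ``paper's own proof'' to compare against in any detail, and your closing remark that the argument is essentially that of the cited reference is accurate.

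That said, your sketch has a bookkeeping wrinkle worth flagging. In the representation \eqref{eq:Ln} every factor $\Gc_{i_j}(t_0,t_j)$ carries the \emph{same} first argument $t_0$, so ``peeling off'' $\Gc_{i_1}(t_0,t_1)$ does not leave behind $\Lc^{(\bar z)}_{n-i_1}(t_1,T)$ acting on $\Gamma_0^{(\bar z)}(t_1,\cdot)$ --- hence your own question mark in ``$\Gamma_{n-i_1}^{(\bar z)}(t_?,\cdot)$''. The clean way to set up the induction is the one the paper itself records in \eqref{eq:v.n.pide.xbar}: the $\Gamma_n^{(\bar z)}$ solve nested Cauchy problems, so Duhamel gives
\[
\Gamma_n^{(\bar z)}(t,z;T,\zeta)=\int_t^T\!\!\int_{\Rb^d}\Gamma_0^{(\bar z)}(t,z;s,\xi)\sum_{h=1}^n\Ac_h^{(\bar z)}(s,\xi)\,\Gamma_{n-h}^{(\bar z)}(s,\xi;T,\zeta)\,\dd\xi\,\dd s,
\]
which is a genuine recursion in $n$ with the correct time arguments. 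From here your outline goes through essentially verbatim: $\Ac_h^{(\bar z)}(s,\xi)$ contributes a factor $(\xi-\bar z)^\gamma$ with $|\gamma|=h$ and a derivative $D^\alpha_\xi$ with $|\alpha|\le 2$; the inductive hypothesis bounds $D^\alpha_\xi\Gamma_{n-h}^{(\bar z)}$; Lemma~\ref{lem:gaussian_estimates} handles $D^\beta_z\Gamma_0^{(\bar z)}$ and converts $(\xi-\bar z)$ factors (via $|\xi-\bar z|\le|\xi-z|+|z-\bar z|$) into the claimed $(1+|z-\bar z|^n(T-t)^{-n/2})$ weight; and the Chapman--Kolmogorov identity for $\Gamma^{M+\varepsilon}$ collapses the $\xi$-integral. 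The time integral then supplies the missing half-power of $(T-t)$. This is presumably what \cite{LPP4} does, and it avoids the $\Mc(t_0,t_k)$ gymnastics entirely.
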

\begin{proof}[Proof of Theorem \ref{th:error}]
From \cite[Theorem 3.8]{LPP4}, for any given $T\leq T_{0}$, the functions
$(u^{(\bar{z})}_n)_{n \geq 1}$ given by \eqref{eq:un}-\eqref{eq:Ln} can be equivalently
defined as the unique non-rapidly increasing solutions of the following sequence of nested
heat-type Cauchy problems:
\begin{align}\label{eq:v.n.pide.xbar}
  \begin{cases}
 \left(\d_t + \Ac^{(\bar{z})}_0(t)\right) u^{(\bar{z})}_n(t,z) =  - \sum\limits_{h=1}^{n} \Ac^{(\bar{z})}_h(t) u^{(\bar{z})}_{n-h}(t,z),\qquad &  t<T,\ z\in\mathbb{R}^d, \\
  u^{(\bar{z})}_n(T,z) = 0, & z \in\mathbb{R}^d.
  \end{cases}
\end{align}
The thesis then follows directly from \cite[Theorem 3.10]{LPP4}.  For completeness, we provide here a sketch of the proof given in \cite{LPP4}. By \eqref{eq:v.n.pide.xbar} it is easy to prove that
{$v^{(\bar{z})}:=u-\bar{u}^{(\bar{z})}_{N}$} solves
\begin{align}\label{eq:v.n.pide.xbar2}
\begin{cases}
 (\d_t + \Ac(t) ) v^{(\bar{z})}(t,z) =- \sum\limits_{n=0}^N (\Ac(t) - \bar{\Ac}^{(\bar{z})}_n(t)) u^{(\bar{z})}_{N-n}(t,z),
 \qquad & t<T,\ z\in\mathbb{R}^d, \\
 v^{(\bar{z})}(T,z) =  0, &  z \in\mathbb{R}^d,
\end{cases}
\end{align}
where we have defined
\begin{align}\label{eq:A_bar_n}
\bar{\Ac}^{(\bar{z})}_n(t) = \sum_{i=0}^n  \Ac^{(\bar{z})}_i(t).
\end{align}
Thus, by Duhamel's principle we obtain
\begin{align}
 u(t,z)-\bar{u}_{N}(t,z) = \int_t^T \int\limits_{\mathbb{R}^d} \Gamma(t,z;s,\xi) \sum_{n=0}^N
 \big(\Ac(s) -  \bar{\Ac}^{({z})}_n(s)\big) u^{({z})}_{N-n}(s,\xi)\, \dd \xi \dd s, \quad
 t<T,\ z\in\mathbb{R}^d.
\end{align}
Now, by \eqref{eq:taylor_polyn_bis} we have
\begin{align}
|(\Ac(s)-\bar{\Ac}^{({z})}_n(s))u^{({z})}_{N-n}(s,\xi)|
    &\leq   \sum_{|\alpha |\leq 2}  \Big| a^{({z})}_{\alpha}(s,\xi)-\sum_{i=0}^{n}a^{({z})}_{\alpha,n}(s,\xi)\Big| |D_{\xi}^{\alpha}u^{({z})}_{N-n}(s,\xi)| \\
&=   \sum_{|\alpha |\leq 2}  \Big| a_{\alpha}(s,\xi)-\sum_{i=0}^{n}\sum_{|\b|=n}\frac{D_z^{\b}a_{\alpha}(s,z)}{\b!}(\xi-z)^{\b}
\Big| \big|D_{\xi}^{\alpha}u^{({z})}_{N-n}(s,\xi) \big|
\\
&\leq M |\xi-z|^{n+1}  \sum_{|\alpha |\leq 2}  \big| D_{\xi}^{\alpha}u^{({z})}_{N-n}(s,\xi) \big|,
\end{align}
where the last line follows by the hypothesis (ii) in Assumption \ref{assumption:parabol_holder_bonded} on the coefficients $(a_{\alpha})_{|\alpha|\leq 2}$. Finally, by considering $u^{({z})}_n(t,z)=\Gamma^{({z})}_n(t,z,;T,\zeta)=\Lc^{({z})}_n(t,T)\Gamma^{({z})}_0(t,z;T,\zeta)$, we obtain
\begin{align}
 |\Gamma(t,z;T,\zeta)-\bar{\Gamma}_{N}(t,z;T,\zeta)| \leq M \sum_{n=0}^N \sum_{|\alpha |\leq 2} \int_t^T \int\limits_{\mathbb{R}^d} \Gamma(t,z;s,\xi)
|\xi-z|^{n+1}|D_{\xi}^{\alpha} \Gamma^{({z})}_{N-n}(s,\xi;T,\zeta)|\, \dd \xi \dd s.
\end{align}
The thesis now follows by repeatedly applying the Gaussian estimates \eqref{Gaua} and \eqref{eq:der_u_n}, along with the semigroup property
\begin{align}
\int_{\mathbb{R}^d}  \Gamma^{M+\varepsilon}(t,z;s,\xi) \Gamma^{M+\varepsilon}(s,\xi;T,\zeta) \dd \xi \dd s =  \Gamma^{M+\varepsilon}(t,z;T,\zeta)\qquad t<s<T,\quad z,\zeta\in\mathbb{R}^d.
\end{align}
\end{proof}

\subsection{Short-time asymptotics for the implied volatility}
\label{sec:error.impvol} We provide error estimates for the $N$-th order implied volatility approximation
$\bar{\sigma}_N$, defined in \eqref{eq:sig.approx},
on the subset $|x-k|\leq\lam\sqrt{T-t}$ {where $\lam$ is an arbitrary, but fixed, positive
constant}.
\begin{theorem}\label{th:IV_error}
Let Assumption \ref{assumption:parabol_holder_bonded} hold and let {$\lam>0$}. 
{Denote by $\sigma(t,x,y,k)$ the exact implied volatility of a Call option, with $\log$ strike $k$
and maturity $T$.} {That is, $\sigma(t,x,y,k)$ is the unique positive solution of
$u^\BS(\sig;T-t,x,k)=u(t,x,y,k)$, where $u$ is the classical solution $\sig$ of \eqref{eq:v.pde}
with time $T$ terminal conditions $\varphi(x) = (\ee^x - \ee^k)^+$.} 
Then the $N$-th order implied volatility approximation $\bar{\sigma}_{N}(t,x,y,k)$, defined in
\eqref{eq:sig.approx}, satisfies
\begin{equation}\label{eq:IV_estimate}
 \left| \sigma(t,x,y,k)-\bar{\sigma}_{N}(t,x,y,k) \right| \leq C (T-t)^{\frac{N+1}{2}}, \qquad 0\leq t<T\le T_{0},\ y\in\mathbb{R}^{d-1},
\ |x-k|\leq \lam\sqrt{T-t},
\end{equation}
where $C$ is a positive constant that depends only on $M,N,T_{0}$ and $\lam$.
\end{theorem}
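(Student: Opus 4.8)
The plan is to estimate $\bigl|\sigma-\bar\sigma_N\bigr|$ by writing, via the triangle inequality,
\begin{align*}
\bigl|\sigma-\bar\sigma_N\bigr|\ \le\ \bigl|\sigma-\tilde\sigma_N\bigr|+\bigl|\tilde\sigma_N-\bar\sigma_N\bigr|,\qquad \tilde\sigma_N:=(u^{\BS})^{-1}(\bar u_N)=g(1),
\end{align*}
with $g$ as in \eqref{eq:matt}. Here $\tilde\sigma_N$ is the \emph{exact} Black--Scholes implied volatility of the \emph{approximate} price $\bar u_N=u(1)$, whereas $\bar\sigma_N=\sigma_0+\sum_{n=1}^N\sigma_n$ is, by \eqref{ae6}, exactly the $N$-th order Taylor polynomial of $g$ about $\delta=0$ evaluated at $\delta=1$. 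Thus the first term is a pricing error, to be handled by Corollary \ref{cor:errorprice}, while the second is just the Taylor remainder of $g$ on $[0,1]$. Since \eqref{eq:IV_estimate} is trivial when $T-t$ is bounded away from $0$ (its left side is then bounded by a constant depending only on the data, its right side bounded below), I assume throughout that $\tau:=T-t$ is small.

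The preliminary step is to show that the three Black--Scholes volatilities occurring in the argument --- $\sigma$, $\sigma_0$, and $g(\delta)$ for every $\delta\in[0,1]$ --- all lie in a fixed compact interval $[\kappa_1,\kappa_2]\subset(0,\infty)$, uniformly on the cone $\{|x-k|\le\lam\sqrt\tau\}$. The upper Gaussian bound $\Gamma\le C\,\Gamma^{M+\eps}$ from Lemma \ref{lem:gaussian_estimates} (case $\beta=\gamma=0$), the elementary inequality $(\ee^a-\ee^b)^+\le \ee^{\max(a,b)}|a-b|$, and Cauchy--Schwarz give, on the cone,
\begin{align*}
u(t,x,y,k)=\int_{\Rb^d}\Gamma(t,z;T,\zeta)(\ee^{\zeta_1}-\ee^k)^+\,\dd\zeta
&\ \le\ C\,\Eb\Bigl[\bigl(\ee^{x+\sqrt{2(M+\eps)\tau}\,G}-\ee^k\bigr)^+\Bigr]\ \le\ C\,\ee^x\sqrt\tau ,
\end{align*}
with $G\sim\mathcal N(0,1)$; the matching lower bound $u\ge c\,\ee^x\sqrt\tau$ for the call time value follows from the Aronson-type \emph{two-sided} Gaussian estimate available under Assumption \ref{assumption:parabol_holder_bonded}, together with the short-maturity behavior of near-money Black--Scholes prices recorded in Appendix \ref{sec:lemmas}. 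Since $\sigma\mapsto u^{\BS}(\sigma;\tau,x,k)$ is an increasing bijection onto $\bigl((\ee^x-\ee^k)^+,\ee^x\bigr)$ whose time value is of order $\ee^x\sigma\sqrt\tau$ for $\sigma\sqrt\tau$ bounded on the cone, these bounds pin $\sigma$ into $[\kappa_1,\kappa_2]$; combined with Lemma \ref{lem:imp_vol_u_delta} and the estimate $|u_n|\le C\tau^{n/2}\ee^x\sqrt\tau$ of the next paragraph (so that $u(\delta)=u^{\BS}(\sigma_0)\bigl(1+O(\sqrt\tau)\bigr)$), the same reasoning pins $g(\delta)$, $\delta\in[0,1]$, there as well. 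On $[\kappa_1,\kappa_2]$ and the cone, the short-maturity results of Appendix \ref{sec:lemmas} yield $|\d_\sig^j u^{\BS}(\sig;\tau,x,k)|\le C\,\ee^x\sqrt\tau$ for all $j\ge1$ together with the vega lower bound $\d_\sig u^{\BS}(\sig;\tau,x,k)\ge c\,\ee^x\sqrt\tau$, whence by Lagrange inversion $\bigl|\bigl((u^{\BS})^{-1}\bigr)^{(j)}\bigr|\le C\,(\ee^x\sqrt\tau)^{-j}$ throughout.

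With this in hand I treat the two terms. For the first, Corollary \ref{cor:errorprice} gives $|u-\bar u_N|\le C\tau^{(N+1)/2}\int_{\Rb^d}\Gamma^{M+\eps}(t,z;T,\zeta)(\ee^{\zeta_1}-\ee^k)^+\,\dd\zeta\le C\tau^{(N+1)/2}\ee^x\sqrt\tau$ on the cone (same computation as above), so the mean value theorem and the vega lower bound yield $|\sigma-\tilde\sigma_N|=|u-\bar u_N|/\d_\sig u^{\BS}(\sig^\ast)\le C\tau^{(N+1)/2}$ with $\sig^\ast\in[\kappa_1,\kappa_2]$. For the second, the key is the scaling estimate $|u_n(t,z)|\le C\tau^{n/2}\ee^x\sqrt\tau$ on the cone: writing $u_n(t,z)=\int_{\Rb^d}\Gamma^{(z)}_n(t,z;T,\zeta)(\ee^{\zeta_1}-\ee^k)^+\,\dd\zeta$ and applying Lemma \ref{lem:eq:der_u_n} \emph{with $\bar z=z$} (so that the factor $|z-\bar z|^n\tau^{-n/2}$ vanishes) gives $|\Gamma^{(z)}_n(t,z;T,\zeta)|\le C\tau^{n/2}\,\Gamma^{M+\eps}(t,z;T,\zeta)$, and one integrates against the payoff as before. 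Since $u(\delta)=\sum_{n=0}^N\delta^n u_n$ is a polynomial of degree $N$ in $\delta$, it follows that $|\d_\delta^i u(\delta)|\le C\tau^{i/2}\ee^x\sqrt\tau$ for all $\delta\in[0,1]$ and $1\le i\le N$; feeding this and the bounds on the derivatives of $(u^{\BS})^{-1}$ into the Fa\`a di Bruno formula for $g=(u^{\BS})^{-1}\circ u(\cdot)$ one obtains
\begin{align*}
\bigl|\d_\delta^{N+1}g(\delta)\bigr|
&\ \le\ \sum_{j=1}^{N+1}\Bigl|\bigl((u^{\BS})^{-1}\bigr)^{(j)}(u(\delta))\Bigr|\ \bigl|\mathbf{B}_{N+1,j}\bigl(\d_\delta u(\delta),\dots,\d_\delta^{N+2-j}u(\delta)\bigr)\bigr| \\
&\ \le\ \sum_{j=1}^{N+1}C\,(\ee^x\sqrt\tau)^{-j}\,\tau^{\frac{N+1}{2}}(\ee^x\sqrt\tau)^{j}\ \le\ C\,\tau^{\frac{N+1}{2}},
\end{align*}
uniformly in $\delta\in[0,1]$, since every monomial of $\mathbf{B}_{N+1,j}$ is a product of $j$ derivatives of $u$ whose orders sum to $N+1$. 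Taylor's theorem with Lagrange remainder then gives $|\tilde\sigma_N-\bar\sigma_N|=\bigl|g(1)-\sum_{n=0}^N g^{(n)}(0)/n!\bigr|\le\frac{1}{(N+1)!}\sup_{\delta\in[0,1]}|\d_\delta^{N+1}g(\delta)|\le C\tau^{(N+1)/2}$, and adding the two estimates proves \eqref{eq:IV_estimate}.

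The substantive work is concentrated in the preliminary compactness step and in the power-counting for $\d_\delta^{N+1}g$, and I expect the compactness step to be the main obstacle. The map $(u^{\BS})^{-1}$ and its derivatives blow up as the volatility argument tends to $0$ or $\infty$, and the vega $\d_\sig u^{\BS}$ degenerates super-polynomially once $|x-k|/\sqrt\tau\to\infty$, so the whole scheme works only because $\sigma$, $\sigma_0$ and \emph{every} $g(\delta)$, $\delta\in[0,1]$, can be confined to a single compact subinterval of $(0,\infty)$; this is precisely why the two-sided Gaussian bounds (not merely the upper bound of Lemma \ref{lem:gaussian_estimates}), Lemma \ref{lem:imp_vol_u_delta}, and the near-money short-maturity Black--Scholes lemmas of Appendix \ref{sec:lemmas} are indispensable, and where the hypothesis $|x-k|\le\lam\sqrt\tau$ is genuinely used --- off this cone no such uniform bound can hold. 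Granting that, the remaining point is bookkeeping: the factor $(\ee^x\sqrt\tau)^{-j}$ from the $j$-th derivative of $(u^{\BS})^{-1}$ cancels exactly against the $(\ee^x\sqrt\tau)^{j}$ from a $j$-fold product of price corrections, leaving the gain $\tau^{(N+1)/2}$ carried by $|u_n|\lesssim\tau^{n/2}\ee^x\sqrt\tau$, and one must verify this cancellation is uniform in $\delta\in[0,1]$ so that a single Taylor-remainder estimate for $g$ suffices.
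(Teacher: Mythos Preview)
Your argument is correct and uses the same machinery as the paper: Fa\`a di Bruno for $g=(u^{\BS})^{-1}\circ u(\cdot)$, confinement of $g(\delta)$ to a compact interval (Lemma \ref{lem:imp_vol_u_delta}), the derivative bounds $|\d_u^j(u^{\BS})^{-1}|\le C(e^k\sqrt\tau)^{-j}$ (Lemma \ref{lem:estimates_deriv_inv_BS}), and the scaling $|u_n|\lesssim\tau^{(n+1)/2}e^x$ (Remark \ref{eq:error_estimate_price_call}). The one organizational difference is that the paper absorbs the pricing error into $u(\delta)$ itself by redefining it in \eqref{eq:def_u_delta} with an extra tail $\delta^{N+1}(u-\bar u_N)$, so that $g(1)=\sigma$ \emph{exactly} and the whole estimate collapses to a single Taylor--Lagrange remainder for $g$ on $[0,1]$; you instead keep $u(\delta)$ as the degree-$N$ polynomial from \eqref{eq:u.expand.again}, whence $g(1)=\tilde\sigma_N=(u^{\BS})^{-1}(\bar u_N)$, and you treat $|\sigma-\tilde\sigma_N|$ separately by the mean value theorem and the vega lower bound. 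The power-counting is identical in both routes (in the paper's version the $j=1$ term of Fa\`a di Bruno carries the pricing error; in yours that term vanishes and the pricing error sits in the first half of the triangle inequality). The paper's packaging is marginally cleaner---one remainder instead of two---while yours makes the two error sources, pricing versus inversion, more visibly separate.
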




\begin{remark}
In the particular case $d=1$, the above result is consistent with \cite[Theorem
22]{BompisGobet2012} where an implied volatility approximation for local volatility models has
been derived. A direct computation shows that such an expansion is equivalent to our
$\bar{\sigma}_2$.
Although Theorem \ref{th:IV_error} holds true for any order $N \in\mathbb{N}_0$, and any dimension
$d\in\mathbb{N}$, the estimate in \cite{BompisGobet2012} was proved by the authors under milder
assumptions for the generator $\Ac(t)$, and for three different choices of the initial point
$\bar{x}$ of the Taylor expansion: $\bar{x}=x$, $\bar{x}=k$ and $\bar{x}=\frac{x+k}{2}$.
\end{remark}
\begin{remark}
Theorem \ref{th:IV_error}  also provides us with an explicit representation for the $n$-th order
derivative with respect to $T$, of the implied volatility surface at $x=k$ and $T=t$.  More
precisely, as a corollary of \eqref{eq:IV_estimate} we have:
\begin{equation}\label{eq:T_der_IV}
\partial^n_t \sigma(t,x,y,k)|_{t=T,k=x}= \partial^n_t \bar{\sigma}_{N}(t,x,y,k)|_{t=T,k=x}, \qquad \forall N\geq 2n.
\end{equation}
A direct computation shows that, for $n=0$, the representation \eqref{eq:T_der_IV} is consistent
with the well-known results by \cite{berestycki2002asymptotics} and
\cite{berestycki-busca-florent}. 
It is also easy to check that our expansion gives the correct slope of the implied volatility at the money in the limit as $t \to T$.
For the special case $d=1$, we recover the practitioners' \emph{$1/2$ slope rule}, which
gives the at-the-money slope of implied volatility as one half the slope of the local
volatility function.
\end{remark}

In what follows, the maturity date $T\in(0, T_{0}]$ is fixed.  We recall the Black-Scholes
price
$$u^{\BS}(\s)=u^{\BS}(\sigma;T-t,x,k),$$
as it is in Definition \ref{def:BS} and we denote
by {$(u^{\BS})^{-1}(u;T-t,x,k)=(u^{\BS})^{-1}(u)$} its inverse with respect to the $\sigma$
variable.
We also introduce the following function:
\begin{align}
u(\delta)&=u(\delta;t,x,y,k) :=\sum_{n=0}^N \delta^n u^{(x,y)}_{n}(t,x,y,k) + \delta^{N+1}
\big(u-\bar{u}_N\big)(t,x,y,k) \\ &=
u^{\BS}\big(\s^{(x,y)}_{0}(t);T-t,x,k\big)+\sum_{n=1}^N \delta^n u^{(x,y)}_{n}(t,x,y,k) +
\delta^{N+1} \big(u-\bar{u}_N\big)(t,x,y,k),\qquad \delta\in [0,1], \\
&\label{eq:def_u_delta}
\end{align}
where we have used Remark \ref{rmk:u0=uBS}.  {Note that the function $u^{(x,y)}_{n}(t,x,y,k)$ is
defined for a fixed maturity date $T$, as indicated by \eqref{eq:matt2}.}

\begin{remark}\label{eq:error_estimate_price_call}
It is possible to prove (see \cite{LPP4}) that, in the case of a Call option, the estimate
\eqref{eq:error_estimate_price}, as well as \eqref{eq:der_u_n}, can be improved by exploiting the
local {Lipschitz continuity} of the payoff $\varphi(x)=(e^x-e^k)^+$. More precisely, it is
possible to prove that, for any $\log$-strike $k \in \Rb$, we have
\begin{align}
\left|u(t,x,y,k)-\bar{u}_{N}(t,x,y,k)\right| \leq C \,
(T-t)^{\frac{N+2}{2}}u^{\BS}\big(\sqrt{2M};T-t,x,k\big), \qquad 0\leq t<T,\
 (x,y)\in\mathbb{R}\times\mathbb{R}^{d-1}, 
\end{align}
and that, for any $n\in\mathbb{N}$ with $n\leq N$, we also have
\begin{align}
\label{eq:der_u_n_call}
 \big|u^{({x},y)}_n(t,x,y,k)\big| \leq C \, (T-t)^{\frac{n+1}{2}} u^{\BS}\big(\sqrt{2M};{T-t},x,k\big),\qquad 0\leq t<T,\
 (x,y)\in\mathbb{R}\times\mathbb{R}^{d-1},\ k\in\mathbb{R},
\end{align}
where, {as in Theorem \ref{th:IV_error},} $C$ is a positive constant that only depends on $M,N$
and $T_{0}$.
\end{remark}
The proof of Theorem \ref{th:IV_error} is based on the previous remark and some asymptotic
estimates {of the Black-Scholes price} for short-maturities, which are proved in Appendix
\ref{sec:lemmas}.
\begin{lemma}\label{lem:imp_vol_u_delta}
Let $u(\del)$ be as in \eqref{eq:def_u_delta}. Under the assumptions of Theorem
\ref{th:IV_error}, 
there exists $\t_{0}>0$, only dependent on $M,N,T_{0}$ and $\lam$, such that
\begin{align}
u^{\BS}\big( \sqrt{{2}/{{M}}};{T-t},x,k\big) \leq u(\del)
     \leq u^{\BS}\big(\sqrt{2M};{T-t},x,k\big) , \label{eq:bounds}
\end{align}
or equivalently
\begin{align}\label{eq:volimp_u_d}
 \sqrt{{2}/{M}}\leq \big(u^{\BS}\big)^{-1}(u(\delta);{T-t},x,k)\leq \sqrt{2M}, 
\end{align}
for any {$t\in[T-\t_{0},T)$},  $|x-k|\leq \lam \sqrt{T-t}$, $y\in\mathbb{R}^{d-1}$ and
$\delta\in[0,1]$.
\end{lemma}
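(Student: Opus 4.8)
The plan is to prove Lemma \ref{lem:imp_vol_u_delta} by controlling $u(\delta)$ uniformly in $\delta \in [0,1]$ via two ingredients: the explicit short-time estimates on the correction terms $u_n^{(x,y)}$ and the remainder $u - \bar u_N$ collected in Remark \ref{eq:error_estimate_price_call}, and the short-maturity behaviour of the Black-Scholes price in the regime $|x-k| \leq \lambda\sqrt{T-t}$ that is proved in Appendix \ref{sec:lemmas}. The starting point is to rewrite $u(\delta)$ as $u_0^{(x,y)}(t,x,k) = u^{\BS}(\sigma_0^{(x,y)}(t);T-t,x,k)$ plus the ``tail'' $R(\delta) := \sum_{n=1}^N \delta^n u_n^{(x,y)} + \delta^{N+1}(u - \bar u_N)$, so that $u(\delta) = u^{\BS}(\sigma_0^{(x,y)};T-t,x,k) + R(\delta)$. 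By \eqref{eq:der_u_n_call} we have $|u_n^{(x,y)}| \leq C (T-t)^{(n+1)/2} u^{\BS}(\sqrt{2M};T-t,x,k)$ for $1 \leq n \leq N$, and by the improved Call estimate in Remark \ref{eq:error_estimate_price_call} we have $|u - \bar u_N| \leq C(T-t)^{(N+2)/2} u^{\BS}(\sqrt{2M};T-t,x,k)$; since $\delta \in [0,1]$, summing gives $|R(\delta)| \leq C' \sqrt{T-t}\, u^{\BS}(\sqrt{2M};T-t,x,k)$ with $C'$ depending only on $M,N,T_0$.

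Next I would compare the unperturbed term and the bound on $R(\delta)$ against the two target Black-Scholes prices $u^{\BS}(\sqrt{2/M};T-t,x,k)$ and $u^{\BS}(\sqrt{2M};T-t,x,k)$. First note that uniform ellipticity \eqref{cond:parabolicity} forces $a_{(2,0,\dots,0)}(s,x,y) \in (1/(2M), M/2)$ wait --- more precisely $2 a_{(2,0,\dots,0)} = \sigma_1^2 \in (M^{-1}, M)$ pointwise (after possibly enlarging $M$), hence by \eqref{eq:sigma_0_tris}, $\sigma_0^{(x,y)}(t) \in (\sqrt{1/M}, \sqrt{M})$; enlarging $M$ once more we may assume $\sqrt{2/M} \leq \sigma_0^{(x,y)}(t) \leq \sqrt{2M}$. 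Monotonicity of $u^{\BS}(\cdot;T-t,x,k)$ in the volatility argument then gives $u^{\BS}(\sqrt{2/M};T-t,x,k) \leq u^{\BS}(\sigma_0^{(x,y)};T-t,x,k) \leq u^{\BS}(\sqrt{2M};T-t,x,k)$. So the zeroth-order term already sits strictly between the two bounds, and the remaining task is to show that the perturbation $R(\delta)$ is too small to push $u(\delta)$ outside $[u^{\BS}(\sqrt{2/M}), u^{\BS}(\sqrt{2M})]$ once $T-t$ is small enough.

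For this I would invoke the short-maturity Black-Scholes asymptotics from Appendix \ref{sec:lemmas}: in the regime $|x-k| \leq \lambda\sqrt{T-t}$ the ratio of two Black-Scholes prices with distinct fixed volatilities $\sigma < \sigma'$ stays bounded away from $1$ as $t \to T^-$ --- concretely, there is a constant $\kappa = \kappa(M,\lambda) > 1$ and $\tau_1 > 0$ such that $u^{\BS}(\sigma';T-t,x,k) \geq \kappa\, u^{\BS}(\sigma_0^{(x,y)};T-t,x,k)$ whenever $\sigma' \geq$ a value strictly above $\sqrt{M}$, and similarly $u^{\BS}(\sigma'';T-t,x,k) \leq \kappa^{-1} u^{\BS}(\sigma_0^{(x,y)};T-t,x,k)$ for $\sigma''$ strictly below $\sqrt{1/M}$ --- together with the fact that $u^{\BS}(\sqrt{2M};T-t,x,k) \leq C''\, u^{\BS}(\sigma_0^{(x,y)};T-t,x,k)$ on this regime (again from the appendix lemmas, since both volatilities are comparable and $|x-k|$ is $O(\sqrt{T-t})$). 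Combining this with $|R(\delta)| \leq C'\sqrt{T-t}\, u^{\BS}(\sqrt{2M};T-t,x,k) \leq C' C'' \sqrt{T-t}\, u^{\BS}(\sigma_0^{(x,y)};T-t,x,k)$, I choose $\tau_0 \in (0,\tau_1]$ small enough that $C' C'' \sqrt{\tau_0} < \min\{\kappa - 1, 1 - \kappa^{-1}\}$. Then for $t \in [T-\tau_0, T)$, $u(\delta) = u^{\BS}(\sigma_0^{(x,y)}) + R(\delta) \leq (1 + C'C''\sqrt{T-t})\,u^{\BS}(\sigma_0^{(x,y)}) < \kappa\, u^{\BS}(\sigma_0^{(x,y)}) \leq u^{\BS}(\sqrt{2M};T-t,x,k)$, and symmetrically $u(\delta) \geq (1 - C'C''\sqrt{T-t})\,u^{\BS}(\sigma_0^{(x,y)}) > \kappa^{-1} u^{\BS}(\sigma_0^{(x,y)}) \geq u^{\BS}(\sqrt{2/M};T-t,x,k)$. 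This gives \eqref{eq:bounds}, and \eqref{eq:volimp_u_d} follows by applying the increasing inverse $(u^{\BS})^{-1}(\cdot;T-t,x,k)$, which is legitimate precisely because \eqref{eq:bounds} certifies $u(\delta)$ lies in the no-arbitrage interval $((e^x - e^k)^+, e^x)$ where the inverse is defined (Definition \ref{def:imp.vol.def}).

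The main obstacle I anticipate is not the perturbative bookkeeping --- which is routine given Remark \ref{eq:error_estimate_price_call} --- but making precise and correctly citing the short-maturity Black-Scholes comparison estimates on the regime $|x-k| \leq \lambda\sqrt{T-t}$: one needs that $u^{\BS}$ with two different (but $T$-independent, $M$-controlled) volatilities have a ratio bounded away from $1$, uniformly in $|x-k| = O(\sqrt{T-t})$, as $t \to T^-$. This is exactly the content of the lemmas deferred to Appendix \ref{sec:lemmas}, so in the body of the proof I would simply quote the relevant inequality; the delicate point is that the comparison must be uniform over the whole strip $|x-k| \leq \lambda\sqrt{T-t}$ and not just at the money, which is why the hypothesis $|x-k| \leq \lambda\sqrt{T-t}$ (rather than $x = k$) appears and why $\tau_0$ is allowed to depend on $\lambda$.
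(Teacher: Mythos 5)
Your proposal is correct and follows essentially the same route as the paper: decompose $u(\delta)$ as $u^{\BS}\big(\sigma_0^{(x,y)}\big)$ plus a tail controlled by Remark \ref{eq:error_estimate_price_call}, convert the tail bound into a multiple of $u^{\BS}\big(\sigma_0^{(x,y)}\big)$ via Lemma \ref{landre1} (using ellipticity to pin $\sigma_0^{(x,y)}\in[\sqrt{2/M},\sqrt{2M}]$), and absorb it with the short-maturity Black--Scholes comparison on the strip $|x-k|\le\lam\sqrt{T-t}$. The only cosmetic deviation is that you invoke a $\tau$-independent multiplicative gap $\kappa>1$, whereas Lemma \ref{landre2} is stated in the $(1\pm C\tau)$ form; the paper's tail bound is $O(T-t)$ (the sum over $n\ge 1$ of $(T-t)^{(n+1)/2}$ starts at $(T-t)$, not your looser $\sqrt{T-t}$), which is exactly what makes the $(1\pm C\tau)$ form of the comparison sufficient without needing the stronger gap statement.
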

\begin{proof}
Throughout this proof $C$ will always denote a positive constant that depends only on $M,N,T_{0}$
and $\lam$. By Remark \ref{eq:error_estimate_price_call}, and since $\delta\in[0,1]$, we obtain
\begin{align}
 \Big|\sum_{n=1}^N \delta^n u^{(x,y)}_{n}(t,x,y,k) + \delta^{N+1} \big(u-\bar{u}_N\big)(t,x,y,k)\Big|& \leq C \, (T-t)
 u^{\BS}\big(\sqrt{2M};{T-t},x,k\big)
 \intertext{(using Lemma \ref{landre1} and Assumption
\ref{assumption:parabol_holder_bonded})}
 & \leq C \, (T-t)
u^{\BS}\big(\sigma^{(x,y)}_0(t);{T-t},x,k\big)\label{eq:inequality_module_reminder_u_delta}
\end{align}
for any $0\leq t<T$, $y\in \mathbb{R}^{d-1}$ and $|x-k| \leq \lam \sqrt{T-t}$.
{Combining \eqref{eq:def_u_delta} and \eqref{eq:inequality_module_reminder_u_delta}, we obtain
\begin{align}
 u(\delta) &\geq (1-C(T-t)) u^{\BS}\big(\sigma^{(x,y)}_0(t);{T-t},x,k\big) .
\end{align}
The lower bound for $u(\del)$ in \eqref{eq:bounds} now follows from inequality \eqref{andre7} in Lemma \ref{landre2}.
To establish the upper bound for $u(\del)$, we combine \eqref{eq:def_u_delta} with \eqref{eq:inequality_module_reminder_u_delta} to obtain
\begin{align}
u(\del)
    \leq (1+C(T-t)) u^{\BS}\big(\sigma^{(x,y)}_0(t);{T-t},x,k\big) .
\end{align}
The upper bound in \eqref{eq:bounds} now follows from inequality \eqref{eq:second} in Lemma
\ref{landre2}. }
\end{proof}
\begin{lemma}\label{lem:estimates_deriv_inv_BS}
Under the assumptions of Theorem \ref{th:IV_error}, for any {$N\in\mathbb{N}$} there exist
positive constants $C$ and $\t_{0}$, only dependent on $M,N,T_{0}$ and $\lam$, such that
\begin{align}\label{eq:deriv_inv_BS_estimate}
 \left| \partial^n_{u} \big(u^{\BS}\big)^{-1}\big(u(\delta;t,x,y,k);{T-t},x,k\big) \right| \leq C \big( e^k \sqrt{T-t}\,\big)^{-n},
\end{align}
for any {$n\le \mathbb{N}$, $t\in[T-\t_{0},T)$}, $|x-k|\leq \lam \sqrt{T-t}$,
$y\in\mathbb{R}^{d-1}$ and $\delta\in[0,1]$.
\end{lemma}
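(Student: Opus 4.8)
We plan to reduce the bound to uniform control of the $\sig$-derivatives of the Black--Scholes price at the implied volatility $\Sigma(\del):=\big(u^{\BS}\big)^{-1}\!\big(u(\del;t,x,y,k);T-t,x,k\big)$. Writing $G:=\big(u^{\BS}\big)^{-1}(\,\cdot\,;\tau,x,k)$ with $\tau:=T-t$, the identity $u^{\BS}(G(u);\tau,x,k)=u$ gives $G'(u)=1/\big(\partial_\sig u^{\BS}\big)(G(u))$; iterating this relation and applying the Faa di Bruno formula of Appendix \ref{append:faa_bell}, $\partial_u^n G(u)$ is a finite universal $\mathbb Z$-linear combination of terms $\big(\partial_\sig u^{\BS}(G(u))\big)^{-m}\prod_{i=1}^{r}\partial_\sig^{p_i}u^{\BS}(G(u))$ with $p_i\ge 2$, $r\le n-1$ and $1\le m\le 2n-1$ (an easy induction in fact yields $m-r=n$, but only the boundedness of $m$ and $r$ in terms of $n$ will be needed). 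To normalise the $\ee^{k}\sqrt\tau$ scaling, set $h(\sig):=\ee^{-k}\tau^{-1/2}u^{\BS}(\sig;\tau,x,k)$; then $h$ is strictly increasing, $h^{-1}(w)=G\big(\ee^{k}\sqrt\tau\,w\big)$, and consequently $\partial_u^n G\big(u(\del)\big)=\big(\ee^{k}\sqrt\tau\big)^{-n}(h^{-1})^{(n)}\big(h(\Sigma(\del))\big)$. It therefore suffices to bound $\big|(h^{-1})^{(n)}\big(h(\Sigma(\del))\big)\big|$ by a constant depending only on $M,N,T_{0},\lam$.

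The key input is Lemma \ref{lem:imp_vol_u_delta}: there is $\t_{0}>0$ (depending only on $M,N,T_{0},\lam$; without loss of generality $\t_{0}\le T_{0}$) such that $\Sigma(\del)\in[\sqrt{2/M},\sqrt{2M}]=:[\sig_-,\sig_+]$ for all $t\in[T-\t_{0},T)$, $|x-k|\le\lam\sqrt\tau$, $y\in\R^{d-1}$, $\del\in[0,1]$. On this range and for $\sig\in[\sig_-,\sig_+]$ the quantity $\zeta=\zeta(\sig)=\tfrac{x-k-\frac12\sig^2\tau}{\sig\sqrt{2\tau}}$ from Lemma \ref{lemma:ratio} satisfies $|\zeta|\le\Lam_0$ for a constant $\Lam_0=\Lam_0(M,\lam,T_{0})$, since $|x-k|\le\lam\sqrt\tau$, $\tfrac12\sig^2\tau\le M\tau$ and $\sig\sqrt{2\tau}\ge 2\sqrt{\tau/M}$. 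From the computations in the proofs of Lemma \ref{lemma:ratio} and Proposition \ref{prop.sig.k} one has $\big(\partial_\sig u^{\BS}\big)(\sig)=\sig\tau(\partial_x^2-\partial_x)u^{\BS}(\sig)=\tfrac{\ee^{k}\sqrt\tau}{\sqrt{2\pi}}\ee^{-\zeta^2}$, so $h'(\sig)=\tfrac{1}{\sqrt{2\pi}}\ee^{-\zeta^2}\ge\tfrac{1}{\sqrt{2\pi}}\ee^{-\Lam_0^2}=:c_1>0$ on the range. For $2\le j\le n$, Proposition \ref{prop.sig.k} together with Lemma \ref{lemma:ratio} writes $\partial_\sig^j u^{\BS}(\sig)/\partial_\sig u^{\BS}(\sig)$ as a finite sum of terms proportional to $c_{j,j-2q}\binom{j-q-1}{p}\sig^{-q-p}\tau^{(j-q-p-1)/2}H_{p+j-q-1}(\zeta)$; here the $\tau$-exponent is nonnegative (as $p\le j-q-1$), the $\sig$-power is bounded since $\sig\in[\sig_-,\sig_+]$, and $|H_{\ell}(\zeta)|$ is bounded since $|\zeta|\le\Lam_0$. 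Hence $\big|\partial_\sig^j u^{\BS}(\sig)/\partial_\sig u^{\BS}(\sig)\big|\le C_j$ on the range, giving $|h^{(j)}(\sig)|=\ee^{-k}\tau^{-1/2}|\partial_\sig^j u^{\BS}(\sig)|\le c_2$ for $1\le j\le n$, with $c_1,c_2$ depending only on $M,N,T_{0},\lam$.

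Inserting these bounds into the finite combination of the first paragraph, every numerator factor is at most $c_2$ and every denominator is at least $c_1^{\,m}$ with $m\le 2N-1$, so $\big|(h^{-1})^{(n)}\big(h(\Sigma(\del))\big)\big|\le C$ uniformly over $n\le N$ and over the stated parameter range, and \eqref{eq:deriv_inv_BS_estimate} follows. The main obstacle is the second paragraph: establishing the uniform upper bounds on the Black--Scholes $\sig$-derivatives and the uniform lower bound on the vega. These rest on two facts — Lemma \ref{lem:imp_vol_u_delta}, which confines $\Sigma(\del)$ to a fixed compact $\sig$-interval bounded away from $0$ and $\infty$, and the restriction $|x-k|\le\lam\sqrt{T-t}$, which keeps $\zeta$ bounded. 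Once these are in hand, the combinatorial structure of the higher inverse-function derivatives is harmless, the rescaling $h$ having absorbed precisely the $\ee^{k}\sqrt\tau$ homogeneity that produces the factor $\big(\ee^{k}\sqrt{T-t}\,\big)^{-n}$.
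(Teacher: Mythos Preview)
Your proof is correct and follows essentially the same approach as the paper: both arguments rest on Lemma \ref{lem:imp_vol_u_delta} to confine $\Sigma(\del)$ to $[\sqrt{2/M},\sqrt{2M}]$, use the restriction $|x-k|\le\lam\sqrt{T-t}$ to bound $\zeta$, obtain a uniform lower bound on the vega and uniform upper bounds on higher $\sig$-derivatives via Proposition \ref{prop.sig.k}, and then assemble via the Fa\`a di Bruno combinatorics. The only difference is presentational --- the paper carries out an explicit induction on $n$, while you absorb the $\ee^{k}\sqrt{T-t}$ homogeneity up front via the rescaled function $h$ and bound the inverse-function derivatives directly; this is a nice repackaging but not a different route.
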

\begin{proof}
Throughout this proof, $C$ will always denote a positive constant only dependent on $M,N,T_{0}$
and $\lam$. Note that, for any $\sigma>0$ we have
\begin{align}
\partial_{\sigma}u^{\BS}(\sigma)
\equiv \partial_{\sigma}u^{\BS}(\sigma;T-t,x,k)
=\frac{e^k \sqrt{T-t} }{\sqrt{2 \pi
  }}\exp\bigg({-\frac{\left(\sigma ^2 (T-t)-2 (x-k)\right)^2}{8 \sigma ^2 (T-t)}}\bigg),
\end{align}
and thus
\begin{align}
 {
\frac{e^k \sqrt{T-t} }{\sqrt{2 \pi
  }}\exp\Big( - \frac{\sigma^2 T_{0}}{8}-\frac{\lam^2}{2\s^2}-\frac{\lam\sqrt{T_{0}}}{2} \Big) }
    &\leq   \partial_{\sigma}u^{\BS}(\sigma
  )\leq \frac{e^k \sqrt{T-t} }{\sqrt{2 \pi
  }}, &
    0
    &\leq t<T, \, |x-k|\leq \lam\sqrt{T-t} .
\end{align}
Therefore, by Lemma \ref{lem:imp_vol_u_delta}, there exists a positive $\t_{0}$, only dependent on
$M,N,T_{0}$ and $\lam$, such that
\begin{align}\label{eq:estimate_below_above_vega_u_delta}
 C \frac{e^k \sqrt{T-t} }{\sqrt{2 \pi
  }} &\leq \partial_{\sigma} u^{\BS}\Big(\big(u^{\BS}\big)^{-1}(u(\delta))
  \Big) \leq \frac{e^k \sqrt{T-t} }{\sqrt{2 \pi}},
\end{align}
for any $y\in\mathbb{R}^{d-1}$, {$t\in[T-\t_{0},T)$},  $|x-k|\leq \lam \sqrt{T-t}$ and
$\delta\in[0,1]$, where $C$ is the positive constant
 $$C= \min_{\sig \in [\sqrt{2/M},\sqrt{2M}]} \exp\Big(  -\frac{\sigma^2 T_{0}}{8}-\frac{\lam^2}{2\s^2}-\frac{\lam\sqrt{T_{0}}}{2} \Big).$$
Furthermore, by combining the second inequality in \eqref{eq:estimate_below_above_vega_u_delta}
with Proposition \ref{prop.sig.k}, we also obtain
\begin{align}\label{eq:estimate_above_vega_n_u_delta}
\Big|  \partial^n_{\sigma} u^{\BS}\Big(\big(u^{\BS}\big)^{-1}(u(\delta))
  \Big) \Big| \leq C e^k \sqrt{T-t}  
  .
\end{align}
We are now prove the thesis by induction on $n$. The case $n=1$ clearly follows from the first
inequality in \eqref{eq:estimate_below_above_vega_u_delta}.  We have
\begin{align}
 \Big|\partial_{u} \big(u^{\BS}\big)^{-1}(u(\delta))\Big|=\frac{1}{\partial_{\sigma}u^{\BS}\big(\big(u^{\BS}\big)^{-1}(u(\delta))
 \big)}\leq  \frac{C e^{-k}}{ \sqrt{T-t}}.
\end{align}
Let us now assume \eqref{eq:deriv_inv_BS_estimate} holds true for any 
$m\leq n$, and prove it holds true for $n+1$. By Fa\`a di Bruno's formula (see Appendix
\ref{append:faa_bell}, Eq. \eqref{eq:Faa_di_Bruno_appendix}), we have
\begin{align}
\partial^{n+1}_{u} \big(u^{\BS}\big)^{-1}(u
)=\frac{\sum_{h=2}^{n+1} \partial^h_{\sigma}u^{\BS}\big(\big(u^{\BS}\big)^{-1}(u
)
\big) \mathbf{B}_{n+1,h}\Big(
\partial_{u} \big(u^{\BS}\big)^{-1}(u
),
\cdots,\partial^{n-h+2}_{u} \big(u^{\BS}\big)^{-1}(u
)
  \Big)}{\partial_{\sigma}u^{\BS}\big(\big(u^{\BS}\big)^{-1}(u
)
\big)},
\end{align}
and thus, by \eqref{eq:estimate_below_above_vega_u_delta} and \eqref{eq:estimate_above_vega_n_u_delta}, we obtain
\begin{align}
&\big|\partial^{n+1}_{u} \big(u^{\BS}\big)^{-1}(u(\delta)
)\big|
    \leq C \sum_{h=2}^{n+1}
\Big|\mathbf{B}_{n+1,h}\Big(
\partial_{u} \big(u^{\BS}\big)^{-1}(u(\delta)
),
\cdots,\partial^{n-h+2}_{u} \big(u^{\BS}\big)^{-1}(u(\delta)
)
  \Big)\Big| \\
  &
\leq C \sum_{h=2}^{n+1}\ \sum_{j_1,\cdots,j_{n-h+2}}
\big|
\partial_{u} \big(u^{\BS}\big)^{-1}(u(\delta)
)\big|^{j_1} \cdots
\big|\partial^{n-h+2}_{u} \big(u^{\BS}\big)^{-1}(u(\delta)
)
  \big|^{j_{n-h+2}} &
    &\text{(by \eqref{eq:Bell_polyn_appendix} in Appendix \ref{append:faa_bell})}\\[-2em]
& \leq C \sum_{h=2}^{n+1}\ \sum_{j_1,\cdots,j_{n-h+2}}
\big(e^k\sqrt{T-t}\,\big)^{-j_1} \cdots
\big(e^k\sqrt{T-t}\,\big)^{-j_{n-h+2}}  &
&\text{(by inductive hypothesis)} \\[-2em]
& \leq C \sum_{h=2}^{n+1}\ \sum_{j_1,\cdots,j_{n-h+2}}
\big(e^k\sqrt{T-t}\,\big)^{-(n+1)}=  C\big(e^k\sqrt{T-t}\,\big)^{-(n+1)},
\end{align}
where the last inequality follows from the second identity of \eqref{eq:relation_indexes_bell} in Appendix \ref{append:faa_bell}.
This concludes the proof.
\end{proof}

\begin{proof}[Proof of Theorem \ref{th:IV_error}]
Throughout this proof $C$ will indicate a positive constant only dependent on $M,N,T_{0}$ and
$\lam$. It suffices to prove the thesis \eqref{eq:IV_estimate} for small $T-t$. We start by
recalling the function $g(\delta)$, which has already been used in Section
\ref{subsec:imp_vol_formal} to carry out the formal expansion of the implied volatility, i.e.
\begin{align}
g(\delta)=g(\delta;t,x,y,k):=\big(u^{\BS}\big)^{-1}\big(u(\delta;t,x,y,k);{T-t},x,k\big),\qquad
\delta\in [0,1].
\end{align}
By definition of $u(\delta)$ in \eqref{eq:def_u_delta}, it is clear that
\begin{align}\label{eq:sigma_g1}
\sigma(t,x,y,k)=g(1;t,x,y,k).
\end{align}
Furthermore, {with $\bar{\sigma}_{N}(t,x,y,k)$ as defined in \eqref{eq:sig.approx}}, we have
\begin{align}\label{eq:sigmabar_g0tay}
\bar{\sigma}_{N}(t,x,y,k)=\sigma^{(x,y)}_0(t) + \sum_{n=0}^{N}\sigma^{(x,y)}_n(t,x,y,k)=\sum_{n=0}^N \frac{1}{n!}\partial^n_{\delta}g(\delta;t,x,y,k)\big|_{\delta=0},
\end{align}
since, by \eqref{eq:def_u_delta} and \eqref{ae6} we have, respectively,
$g(\delta)|_{\delta=0}=\sigma^{(x,y)}_0,$
and
$\partial^n_{\delta}g(\delta)\big|_{\delta=0}=\sigma^{(x,y)}_n$
for
$1\leq n\leq N$.
Now, by \eqref{eq:sigma_g1}-\eqref{eq:sigmabar_g0tay}, and by the Taylor theorem with Lagrange remainder, there exist $\bar{\delta}\in [0,1]$ such that
\begin{align}
\sigma-\bar{\sigma}_{N}&=g(1)-\sum_{n=0}^N \frac{1}{n!}\partial^n_{\delta}g(0)=\frac{1}{(N+1)!}\partial^{N+1}_{\delta}g(\bar{\delta}) \\
&=\frac{1}{(N+1)!}\sum_{h=1}^{N+1} \partial^h_{u} \big(u^{\BS}\big)^{-1}\big( u(\bar{\delta})
\big) \mathbf{B}_{N+1,h}\left(  \partial_{\delta}u(\bar{\delta}),\partial^2_{\delta}u(\bar{\delta}),
\cdots,\partial_{\delta}^{N-h+2}u(\bar{\delta})  \right) , \label{eq:diff_sigma_sigma_bar}
\end{align}
by \eqref{eq:Faa_di_Bruno_appendix} in Appendix \ref{append:faa_bell}.
Now, 
by \eqref{eq:def_u_delta} and Remark \ref{eq:error_estimate_price_call}, we obtain
\begin{align}\label{eq:maybelast}
|\partial^n_{\delta}u(\bar{\delta})|
    &\leq C \bigg(\sum_{h=n}^{N} |u^{(x,y)}_n| +  |u-\bar{u}_N| \bigg)
    \leq C (T-t)^{\frac{n+1}{2}} u^{\BS}\big(\sqrt{2M};{T-t},x,k\big). 
\end{align}
Therefore, for any $1\leq h\leq N+1$, by \eqref{eq:Bell_polyn_appendix} in Appendix
\ref{append:faa_bell} we have
\begin{align}
\big|\mathbf{B}_{N+1,h}\big(
\partial_{\delta}u(\bar{\delta}),\partial^2_{\delta}u(\bar{\delta}),\cdots,\partial_{\delta}^{N-h+2}u(\bar{\delta})\big)\big|
 &\leq C  \sum_{j_1,\cdots,j_{N-h+2}} \big|
\partial_{\delta}u(\bar{\delta})
\big|^{j_1}\big|
\partial^2_{\delta}u(\bar{\delta})
\big|^{j_2} \cdots
\big|\partial^{N-h+2}_{\delta} u(\bar{\delta})
  \big|^{j_{N-h+2}} \\
&\leq C  
\big(\sqrt{T-t}\,\big)^{N+h+1}\left(u^{\BS}\big(\sqrt{2M};{T-t},x,k\big)\right)^{h}.\label{eq:very_final}
\end{align}
where in the last inequality we have used \eqref{eq:maybelast} and both the identities from
\eqref{eq:relation_indexes_bell} in Appendix \ref{append:faa_bell}.
Combining \eqref{eq:deriv_inv_BS_estimate} and \eqref{eq:very_final} with
\eqref{eq:diff_sigma_sigma_bar}, we obtain
  $$\left|\sigma-\bar{\sigma}_{N}\right|\le C(T-t)^{\frac{N+1}{2}}\sum_{h=1}^{N+1}\left(e^{-k} u^{\BS}\big(\sqrt{2M};{T-t},x,k\big)\right)^{h}.$$
The thesis finally follows since $e^{-k} u^{\BS}\big(\sqrt{2M};{T-t},x,k\big)\le
e^{\lam\sqrt{T-t}}$ for $|x-k|\le \lam\sqrt{T-t}$.
\end{proof}

%
%

\section{Implied volatility examples}
\label{sec:examples} In this section we use the results of Section \ref{sec:impvol} to compute
approximate model-induced implied volatilities
under
four
different model dynamics in which European option prices can be computed explicitly.
\begin{itemize}[noitemsep]
\item Section \ref{sec:CEV}: CEV local volatility model
\item Section \ref{sec:Heston}: Heston stochastic volatility model
\item Section \ref{sec:Three-Halves}: 3/2 stochastic volatility model
\item Section \ref{sec:SABR}: SABR local-stochastic volatility model
\end{itemize}
We note that all of the above models fail to satisfy the rigorous assumptions required in Theorems
{\ref{th:error} and \ref{th:IV_error}} to prove the error bounds \eqref{eq:error_estimate}. {
However, as mentioned in Remark \ref{rmk:andrea} Theorem \ref{th:error} and Corollary
\ref{cor:errorprice} have been recently extended in \cite{PP_compte_rendu}, to include all of the
examples presented here. 
}
\par
In
three
of the
four
examples that follow we use a Taylor series polynomial expansion of $\Ac(t)$ as in Example \ref{example:Taylor}.  In these three cases, approximate implied volatilities can be computed using the formulas given in Appendix \ref{sec:impvol.2}.  For the Heston model, we use the time-dependent Taylor expansion of $\Ac(t)$ as in Example \ref{example:TimeTaylor}.  In all cases, Mathematica notebooks containing the implied volatility formulas are available free of charge on the authors' website.


\subsection{CEV local volatility model}
\label{sec:CEV}
In the Constant Elasticity of Variance (CEV) local volatility model of \cite{CoxCEV}, the dynamics of the underlying $S$ are given by
\begin{align}
\dd S_t
    &=  \del S_t^{\beta-1} S_t \dd W_t , &
S_0
    &=  s > 0 .
\end{align}
The parameter $\beta$ controls the relationship between volatility and price.  When $\beta < 1$,
volatility increases as $S \to 0^+$. This feature, referred to as the \emph{leverage effect}, is
commonly observed in equity markets.  When $\beta<1$, one also observes a negative at-the-money
skew in the model-induced implied volatility surface.  Like the leverage effect, a negative
at-the-money skew is commonly observed in equity options markets. The origin is attainable
when $\beta<1$. In order to prevent the process $S$ from taking negative values, one typically
specifies zero as an absorbing boundary.  Hence, the state space of $S$ is $[0,\infty)$. In log
notation $X := \log S$, we have the following dynamics \footnote{Here 
we define $\log 0 := \lim_{x \searrow 0} \log x = - \infty$.}
\begin{align}
\dd X_t
    &=  -\frac{1}{2} \del^2 \ee^{2(\beta-1)X_t} \dd t + \del \, \ee^{(\beta-1)X_t} \dd W_t , &
X_0
    &=  x := \log s . \label{eq:model.CEV}
\end{align}
The generator of $X$ is given by
\begin{align}
\Ac
    &=  \frac{1}{2} \del^2 \ee^{2(\beta-1)x} ( \d_x^2 - \d_x ).
\end{align}
Thus, from \eqref{eq:A-2d} we identify
\begin{align}
a(x,y)
    &= \frac{1}{2} \del^2 \ee^{2(\beta-1)x}, &
b(x,y)
    &=  0, &
c(x,y)
    &=  0, &
f(x,y)
    &=  0 .
\end{align}
We fix a time to maturity $t$ and $\log$-strike $k$.  Using the formulas from Appendix \ref{sec:impvol.2} as well as the Mathematica notebook provided on the authors' website,  we compute explicitly
\begin{align}
\begin{aligned}
\sig_{0}
    &= \del \, \ee^{(\beta-1)x}, \\
\sig_{1}
    &= \frac{1}{2} (\beta-1 ) \sig_{0} (k-x) , \\
\sig_{2}
        &=  \frac{t }{24} (\beta -1)^2 \sigma _0^3
                - \frac{t^2 }{96} (\beta -1)^2 \sigma _0^5
                + \frac{1}{12} (\beta -1)^2 \sigma _0(k-x)^2, \\
\sig_{3}
        &=  \frac{t}{16} (\beta -1)^3 \sigma _0^3 (k-x)
                + \frac{-5t^2}{192} (\beta -1)^3 \sigma _0^5 (k-x)
\end{aligned} \label{eq:sig.CEV}
\end{align}
In the CEV setting the exact price of a Call option is derived in \citet{CoxCEV}:
\begin{align}
\begin{aligned}
u(t,x)
    &=  e^x Q(\kappa, 2+\tfrac{2}{2-\beta}, 2\chi) - \ee^k \( 1 - Q(2\chi,\tfrac{2}{2-\beta},2\kappa) \) ,  \\
Q(w,v,\mu)
    &=  \sum_{n=0}^{\infty} \( \frac{(\mu /2)^n \ee^{-\mu /2}}{n!} \frac{\Gam(v/2+n,w/2)}{\Gam(v/2+n)} \) , \\
\chi
    &=  \frac{2 \ee^{(2-\beta)x}}{\del^2 (2-\beta)^2 t} , \\
\kappa
    &=  \frac{2 \ee^{(2-\beta)k}}{\del^2 (2-\beta)^2 t} ,
\end{aligned} \label{eq:u.CEV}
\end{align}
where $\Gam(a)$ and $\Gam(a,b)$ denote the complete and incomplete Gamma functions respectively.
Thus, the implied volatility $\sig$ can be obtained numerically by solving \eqref{eq:imp.vol.def}.  In Figure \ref{fig:cev} we plot our {third} order implied volatility approximation
{$\bar{\sig}_3$} and the numerically obtained implied volatility $\sig$.  For comparison, we also plot the implied
volatility expansion of \cite{hagan-woodward}
\begin{align}
\sig^{\text{HW}}
    &=  \frac{\del}{f^{1-\beta}}\( 1 + \frac{(1-\beta)(2+\beta)}{24}\(\frac{e^x-e^k}{f}\)^2
            + \frac{(1-\beta)^2}{24} \frac{\del^2 t}{f^{2(1-\beta)}} + \cdots \) , &
f
    &=  \frac{1}{2}(\ee^x+\ee^k) . \label{eq:sig.hw}
\end{align}

\subsection{Heston stochastic volatility model}
\label{sec:Heston}
Perhaps the most well-known stochastic volatility model is that of \cite{heston1993}.  In the Heston model, the dynamics of the underlying $S$ are given by
\begin{align}
\dd S_t
    &=  \sqrt{Z_t} S_t \dd W_t , &
S_0
    &=  s > 0 , \\
\dd Z_t
    &=  \kappa (\theta - Z_t) \dd t + \del \sqrt{Z_t} \dd B_t , &
Z_0
    &=  z > 0 , \\
\dd\<W,B\>_t
    &=  \rho \, \dd t .
\end{align}
As pointed out in \cite{andersen2007moment}, one must set $\rho<0$ in order to prevent a moment
explosion. In order to improve the efficacy of our approximation it is convenient to perform the
following change of variable $(X_t,V_t):=(\log S, e^{\kappa t }Z_t)$.  Changing from $Z$ to $V$
removes the geometric part of the drift (see also \cite{BompisGobet2012}). 
By Ito's formula we obtain
\begin{align}
\dd X_t
    &=  -\frac{1}{2} e^{-\kappa  t}V_t \dd t +\sqrt{e^{-\kappa  t} V_t} \dd W_t , &
X_0
    &=  x := \log s , \\\dd V_t
    &=\theta  \kappa\,  e^{\kappa  t}  \dd t + \delta \sqrt{e^{\kappa  t}V_t} \dd B_t , &
V_0
    &=  v:= z > 0 , \\
\dd\<W,B\>_t
    &=  \rho \, \dd t . \label{eq:model.Heston}
\end{align}
The generator of $(X,V)$ is given by
\begin{align}
\Ac(t)
    &=  \frac{1}{2} e^{-\kappa  t}v \( \d_x^2 - \d_x \) + \theta  \kappa\,  e^{\kappa  t}\, \d_v
             + \frac{1}{2} \del^2  \delta e^{\kappa  t}v  \, \d_v^2 + \delta  \rho  v\, \d_x \d_v .
\end{align}
Thus, using \eqref{eq:A-2d}, we identify
\begin{align}
a(x,v)
    &= \frac{1}{2} e^{-\kappa  t}v, &
b(x,v)
    &=  \del^2  \delta e^{\kappa  t}v, &
c(x,v)
    &= \delta  \rho  v , &
f(x,v)
    &=  \theta  \kappa\,  e^{\kappa  t} .
\end{align}
We fix a time to maturity $t$, a $\log$-strike $k$, and we consider the time-dependent Taylor series expansion of $\Ac(t)$ as described in Example \ref{example:TimeTaylor} with $(\xb(t),\bar{v}(t))=(X_0,\mathbb{E}[V_t]):=(x,\theta  \left(e^{\kappa  t}-1\right))$.  Using the Mathematica notebook provided on the authors' website,  we
compute explicitly
\begin{align}
\sig_0
    &= \sqrt{\frac{-\theta +\theta  \kappa  t+e^{-\kappa  t} (\theta -v)+v}{\kappa  t}}, \\
\sig_{1}
    &=  \frac{\delta  \rho  z e^{-\kappa  t} \left(-2 \theta -\theta  \kappa  t-e^{\kappa  t} (\theta  (\kappa  t-2)+v)+\kappa  t v+v\right)}{\sqrt{2} \kappa ^2 \sigma_0^2 t^{3/2}}, \\
\sig_{2}
    &= \frac{\delta ^2 e^{-2 \kappa  t}}{32 \kappa ^4 \sigma_0^5 t^3}\Bigg(
    -2 \sqrt{2} \kappa  \sigma_0^3 t^{3/2} z \left(-\theta -4 e^{\kappa  t} (\theta +\kappa  t (\theta -v))+e^{2 \kappa  t} (\theta  (5-2 \kappa  t)-2 v)+2 v\right)
        \label{eq:sig.Heston} \\
 &\  +\kappa  \sigma_0^2 t \left(4 z^2-2\right) \left(\theta +e^{2 \kappa  t} \left(-5 \theta +2 \theta  \kappa  t+8 \rho ^2 (\theta  (\kappa  t-3)+v)+2 v\right)\right)\\
&\ +\kappa  \sigma_0^2 t \left(4 z^2-2\right)\left(4 e^{\kappa  t} \left(\theta +\theta  \kappa  t+\rho ^2 (\theta  (\kappa  t (\kappa  t+4)+6)-v (\kappa  t (\kappa  t+2)+2))-\kappa  t v\right)-2 v\right)\\
&\ +4 \sqrt{2} \rho ^2 \sigma_0 \sqrt{t} z \left(2 z^2-3\right) \left(-2 \theta -\theta  \kappa  t-e^{\kappa  t} (\theta  (\kappa  t-2)+v)+\kappa  t v+v\right)^2\\
&\ +4 \rho ^2 \left(4 \left(z^2-3\right) z^2+3\right) \left(-2 \theta -\theta  \kappa  t-e^{\kappa  t} (\theta  (\kappa  t-2)+v)+\kappa  t v+v\right)^2
    \Bigg)-\frac{\sigma_1^2 \left(4 (x-k)^2-\sigma_0^4 t^2\right)}{8 \sigma_0^3 t},\\
z&=\frac{x-k-\frac{\sigma_0^2 t}{2}}{\sigma_0 \sqrt{2 t}}.
\end{align}
The expression for $\sig_3$ is too long to reasonably put in the text.  However, the explicit form of $\sig_3$ is provided in the Mathematica notebook on the authors' website.
\par
The characteristic function of $X_t$ is computed explicitly in \cite{heston1993}
\begin{align}
\eta(t,x,y,\lam)
    := \log \Eb_{x,y} \ee^{ \ii \lam X_t}
    &=  {\ii \lam x + C(t,\lam) + D(t,\lam) \ee^y} , \\
C(t,\lam)
    &=  \frac{\kappa \theta}{ \del^2} \( (\kappa - \rho \delta  \ii \lam + d(\lam) ) t
            -2 \log \[ \frac{1-f(\lam) \ee^{d(\lam)t}}{1-f(\lam)}\]\) , \\
D(t,\lam)
    &=  \frac{\kappa - \rho \del \ii \lam + d(\lam)}{\del^2} \frac{1-\ee^{d(\lam)t}}{1-f(\lam) \ee^{d(\lam)t}} , \\
f(\lam)
    &=  \frac{\kappa - \rho \del \ii \lam + d(\lam)}{\kappa - \rho \del \ii \lam - d(\lam)} , \\
d(\lam)
    &=  \sqrt{ \del^2 (\lam^2 + \ii \lam) + (\kappa - \rho \ii \lam \del)^2} .
\end{align}
Thus, the price of a European Call option can be computed using standard Fourier methods
\begin{align}
u(t,x,y)
    &=  \frac{1}{2\pi} \int_\Rb \dd \lam_r \, \ee^{\eta(t,x,y,\lam)} \widehat{\varphi}(\lam) , &
\widehat{\varphi}(\lam)
    &=  \frac{-\ee^{k-\ii k \lam}}{ \ii \lam + \lam^2 } , &
\lam
    &=  \lam_r + \ii \lam_i , &
\lam_i
    &<  -1 . \label{eq:u.Heston}
\end{align}
Note, since the Call option payoff $\varphi(x)=(\ee^x -\ee^k)^+$ is not in $L^1(\Rb)$, its Fourier
transform $\widehat{\varphi}(\lam)$ must be computed in a generalized sense by fixing an imaginary
component of the Fourier variable $\lam_i < -1$. Using \eqref{eq:u.Heston} the implied volatility
$\sig$ can be computed to solving \eqref{eq:imp.vol.def} numerically.  In Figure \ref{fig:heston}
we plot our third order implied volatility approximation {$\bar{\sig}_3$} and the numerically
obtained implied volatility $\sig$.  For comparison, we also plot the small-time near-the-money
implied volatility expansion of \cite{forde-jacquier-lee} (see Theorem 3.2 and Corollary 4.3)
\begin{align}
\sig^{\text{FJL}}
    &=  \( g_0^2 + g_1 \, t + o(t) \)^{1/2} , \label{eq:sig.fjl} \\
g_0
    &=  \ee^{y/2} \( 1 + \frac{1}{4} \rho \del (k-x) \ee^{-y} + \frac{1}{24}\( 1- \frac{5 \rho^2}{2} \)
            \del^2 (k-x)^2 \ee^{-2 y} \) + \Oc( (k-x)^3 ), \\
g_1
    &=  - \frac{ \del^2}{12} \( 1 - \frac{\rho^2}{4} \) + \frac{ \ee^y \rho \del}{4} + \frac{\kappa}{2}(\theta-\ee^y)
            + \frac{1}{24} \rho \del \ee^{-y} (\del^2 \rhob^2 - 2 \kappa(\theta + \ee^y) + \rho \del \ee^y )(k-x) \\ &\qquad
            + \frac{\del^2 \ee^{-2y} }{7680} \( 176 \del^2 - 480 \kappa \theta - 712 \rho^2 \del^2 + 521 \rho^4 \del^2
            + 40 \rho^3 \del \ee^y + 1040 \kappa \theta \rho^2 - 80 \kappa \rho^2 \ee^y \)(k-x)^2 \\ & \qquad
            + \Oc((k-x)^3 ), \qquad \qquad
\rhob
    =   \sqrt{1-\rho^2} .
\end{align}


\subsection{$3/2$ stochastic volatility model}
\label{sec:Three-Halves}
We consider now the 3/2 stochastic volatility model.  The risk-neutral dynamics of the underlying $S$ in this setting are given by
\begin{align}
\dd S_t
    &=  \sqrt{Z_t} S_t \dd W_t , &
S_0
    &=  s > 0 , \\
\dd Z_t
    &=  Z_t \( \kappa (\theta - Z_t) \dd t + \del \sqrt{Z_t} \dd B_t \) , &
Z_0
    &=  z > 0 , \\
\dd\<W,B\>_t
    &=  \rho \, \dd t .
\end{align}
As in all stochastic volatility models, one typically sets $\rho<0$ in order to capture the
leverage effect. The 3/2 model is noteworthy in that it does not fall into the affine class of
\cite{duffiepansingleton}, and yet it still allows for European option prices to be computed in
semi-closed form (as a Fourier integral).  Notice however that the characteristic function (given in \eqref{eq:char.three-half} below) involves
special functions such as the Gamma and the confluent hypergeometric functions.  Therefore,
Fourier pricing methods are not an efficient means of computed prices.
The importance of the 3/2 model in the pricing of options on realized variance is well documented by \cite{Drimus}. In particular, the 3/2 model
allows for upward-sloping implied volatility of variance smiles while Heston's model leads to
downward-sloping volatility of variance smiles, in disagreement with observed skews in variance markets.
\par
In $\log$ notation $(X,Y) := (\log S , \log Z)$ we have the following dynamics
\begin{align}
\begin{aligned}
\dd X_t
    &=  -\frac{1}{2} \ee^{Y_t} \dd t + \ee^{\tfrac{1}{2}Y_t} \dd W_t , &
X_0
    &=  x := \log s , \\
\dd Y_t
    &=  \( \kappa (\theta - \ee^{Y_t} ) - \frac{1}{2}\del^2 \ee^{Y_t} \) \dd t + \del \, \ee^{\tfrac{1}{2}Y_t} \dd B_t , &
Y_0
    &=  y := \log z , \\
\dd\<W,B\>_t
    &=  \rho \, \dd t .
\end{aligned} \label{eq:model.three-halves}
\end{align}
The generator of $(X,Y)$ is given by
\begin{align}
\Ac
    &=  \frac{1}{2} \ee^{y} \( \d_x^2 - \d_x \) + \( \kappa (\theta - \ee^y) - \frac{1}{2}\del^2 \ee^y \) \d_y
             + \frac{1}{2} \del^2 \ee^{y} \d_y^2 + \rho \, \del \, \ee^y \d_x \d_y .
\end{align}
Thus, using \eqref{eq:A-2d}, we identify
\begin{align}
a(x,y)
    &= \frac{1}{2} \ee^{y}, &
b(x,y)
    &=  \frac{1}{2} \del^2 \ee^{y}, &
c(x,y)
    &=  \rho \, \del \, \ee^{y} , &
f(x,y)
    &=  \kappa (\theta - \ee^y) - \frac{1}{2}\del^2 \ee^y .
\end{align}
We fix a time to maturity $t$ and $\log$-strike $k$.  Using the formulas from Appendix
\ref{sec:impvol.2} as well as the Mathematica notebook provided on the authors' website, we
compute explicitly
\begin{align}
\sig_0
  &= \ee^{y/2}, \\
\sig_1
    &= \frac{t}{8} \left(2 \theta  \kappa  \sigma _0-\sigma _0^3 \left(\delta ^2-\delta  \rho +2 \kappa \right)\right)
            + \frac{1}{4} \delta  \rho  \sigma _0 (k-x) , \\
\sig_2
    &=  \frac{t}{96} \delta ^2 \left(8-7 \rho ^2\right) \sigma _0^3 \\ &\qquad
            + \frac{t^2}{384} \left(-36 \theta  \kappa  \sigma _0^3 \left(\delta ^2-\delta  \rho +2 \kappa \right)+\sigma _0^5 \left(13 \delta ^4-26 \delta ^3 \rho +4 \delta ^2 \left(13 \kappa +4 \rho ^2-1\right)-52 \delta  \kappa  \rho +52 \kappa ^2\right)+20 \theta ^2 \kappa ^2 \sigma _0\right) \\[-1.5em] &\qquad
            + \frac{t}{96} \delta  \rho  \sigma _0 \left(6 \theta  \kappa -7 \sigma _0^2 \left(\delta ^2-\delta  \rho +2 \kappa \right)\right) (k-x)
            - \frac{1}{48} \delta ^2 \left(\rho ^2-2\right) \sigma _0 (k-x)^2 , \\
\sig_3
    &=  \frac{t^2}{256} \delta ^2 \sigma _0^3 \left(5 \left(3 \rho ^2-4\right) \sigma _0^2 \left(\delta ^2-\delta  \rho +2 \kappa \right)+2 \theta  \kappa  \left(8-7 \rho ^2\right)\right) \label{eq:sig.three-halves} \\ &\qquad
            + \frac{t^3}{3072} \Big(
            -132 \theta ^2 \kappa ^2 \sigma _0^3 \left(\delta ^2-\delta  \rho +2 \kappa \right)+10 \theta  \kappa  \sigma _0^5 \left(13 \delta ^4-26 \delta ^3 \rho +4 \delta ^2 \left(13 \kappa +4 \rho ^2-1\right)-52 \delta  \kappa  \rho +52 \kappa ^2\right)
            \\[-1.5em] & \qquad \qquad
            + 24 \theta ^3 \kappa ^3 \sigma _0
            -\sigma _0^7 \left(\delta ^2-\delta  \rho +2 \kappa \right) \left(35 \delta ^4-70 \delta ^3 \rho +2 \delta ^2 \left(70 \kappa +29 \rho ^2-16\right)-140 \delta  \kappa  \rho +140 \kappa ^2\right)
            \Big)\\ &\qquad
            + \frac{t}{128} \delta ^3 \rho  \left(4-3 \rho ^2\right) \sigma _0^3 (k-x)
            + \frac{t^2 \delta  \rho  \sigma _0}{1536} \Big( -84 \theta  \kappa  \sigma _0^2 \left(\delta ^2-\delta  \rho +2 \kappa \right) \Big) (k-x) \\ &\qquad
            + \frac{t^2 \delta  \rho  \sigma _0}{1536}\Big(  +\sigma _0^4 \left(45 \delta ^4-90 \delta ^3 \rho +4 \delta ^2 \left(45 \kappa +14 \rho ^2-4\right)-180 \delta  \kappa  \rho +180 \kappa ^2\right)+20 \theta ^2 \kappa ^2 \Big) (k-x) \\ &\qquad
            + \frac{t}{384} \delta ^2 \sigma _0 \left(\left(\rho ^2-8\right) \sigma _0^2 \left(\delta ^2-\delta  \rho +2 \kappa \right)-2 \theta  \kappa  \left(\rho ^2-2\right)\right) (k-x)^2 ,
\end{align}
To the best of our knowledge, the above formula is the first
explicit implied volatility expansion for the 3/2 model.
The characteristic function of $X_t$ is given, for example, in Proposition 3.2 of \cite{baldeaux2012consistent}.  We have
\begin{align}
\Eb_{x,y} \ee^{\ii \lam X_t}
    &=  \ee^{\ii \lam x} \frac{\Gamma(\gamma - f)}{\Gamma(\gamma)} \( \frac{2}{\del^2 z} \)^f
            M\( f, \gamma, \frac{-2}{\del^2 z}\) , &
z
    &=  \frac{\ee^y}{\kappa \theta}(\ee^{\kappa  \theta  t}-1), &
\gamma
    &=  2 \( f + 1 - \frac{p}{\del^2} \), \label{eq:char.three-half} \\
f
    &=  -\( \frac{1}{2} - \frac{p}{\del^2} \) + \( \( \frac{1}{2} - \frac{p}{\del^2} \)^2 + 2 \frac{q}{\del^2} \)^{1/2}, &
p
    &= - \kappa + \ii \del \rho \lam , &
q
    &= \frac{1}{2} (\ii \lam + \lam^2) ,
\end{align}
where $\Gamma$ is a Gamma function and $M$ is a confluent hypergeometric function. Thus, the price
of a European Call option can be computed using standard Fourier methods
\begin{align}
u(t,x,y)
    &=  \frac{1}{2\pi} \int_\Rb \dd \lam_r \, \widehat{\varphi}(\lam) \Eb_{x,y} \ee^{\ii \lam X_t} , &
\lam
    &=  \lam_r + \ii \lam_i , &
\lam_i
    &<  -1 , \label{eq:u.three-halves}
\end{align}
where $\widehat{\varphi}(\lam)$ is given in \eqref{eq:u.Heston}.  Using \eqref{eq:u.three-halves} the implied volatility $\sig$ can be computed to solving \eqref{eq:imp.vol.def} numerically.
In Figure \ref{fig:three-halves} we plot our {third} order implied volatility approximation
{$\bar{\sig}_3$} and the numerically obtained implied volatility $\sig$.


\subsection{SABR local-stochastic volatility}
\label{sec:SABR}
The SABR model of \cite{sabr} is a local-stochastic volatility model in which the risk-neutral dynamics of $S$ are given by
\begin{align}
\dd S_t
    &=  Z_t S_t^\beta \dd W_t , &
S_0
    &=  s > 0, \\
\dd Z_t
    &=  \del Z_t \dd B_t , &
Z_0
    &=  z > 0, \\
\dd\<W,B\>_t
    &=  \rho \, \dd t .
\end{align}
Modeling $Z$ as a geometric Brownian motion results in a
true implied volatility smile (i.e., upward sloping implied volatility for high strikes); this is
in contrast to the CEV model, for which the model-induced implied volatility is monotone decreasing
(for $\beta <1$).  In $\log$ notation $(X,Y) := (\log S , \log Z)$ we have, we have the following
dynamics:
\begin{align}
\begin{aligned}
\dd X_t
    &=  -\frac{1}{2} \ee^{2Y_t + 2(\beta-1)X_t} \dd t + \ee^{Y_t + (\beta-1)X_t} \dd W_t , &
X_0
    &=  x := \log s , \\
\dd Y_t
    &=  -\frac{1}{2} \del^2 \dd t + \del \, \dd B_t , &
Y_0
    &=  y := \log z , \\
\dd\<W,B\>_t
    &=  \rho \, \dd t .
\end{aligned} \label{eq:model.SABR}
\end{align}
The generator of $(X,Y)$ is given by
\begin{align}
\Ac
    &=  \frac{1}{2} \ee^{2y + 2(\beta-1)x} (\d_x^2 - \d_x) - \frac{1}{2} \del^2 \d_y
            + \frac{1}{2} \del^2 \d_y^2 + \rho \, \del \, \ee^{y + (\beta-1)x} \d_x \d_y .
\end{align}
Thus, using \eqref{eq:A-2d}, we identify
\begin{align}
a(x,y)
    &= \frac{1}{2} \ee^{2y + 2(\beta-1)x}, &
b(x,y)
    &=  \frac{1}{2} \del^2 , &
c(x,y)
    &=  \rho \, \del \, \ee^{y + (\beta-1)x}, &
f(x,y)
    &=  - \frac{1}{2} \del^2 .
\end{align}
We fix a time to maturity $t$ and $\log$-strike $k$.  Using the formulas from Appendix
\ref{sec:impvol.2} as well as the Mathematica notebook provided on the authors' website,  we
compute explicitly
\begin{align}
\sig_0
    &= \ee^{y + (\beta-1) x }, &
\sig_1
    &=  \sig_{1,0} + \sig_{0,1} , &
\sig_2
    &=  \sig_{2,0} + \sig_{1,1} + \sig_{0,2} , &
\sig_3
    &=  \sig_{3,0} + \sig_{2,1} + \sig_{1,2} + \sig_{0,3} , \label{eq:sig.SABR}
\end{align}
where
\begin{align}
\sig_{1,0}
    &=  \frac{1}{2} (k-x) (-1+\beta ) \sig_0 , \\
\sig_{0,1}
    &= \frac{1}{4} \del  \(2 (k-x) \rho +t \sig_0 \(-\delta +\rho  \sigma_0\)\), \\
\sig_{2,0}
        &=  \frac{t}{24} (\beta -1)^2 \sigma _0^3 -\frac{t^2}{96} (\beta -1)^2 \sigma _0^5 + \frac{1}{12} (\beta -1)^2 \sigma _0 (k-x)^2, \\
\sig_{1,1}
        &=  \frac{t}{12} (\beta -1) \delta  \rho  \sigma _0^2 -\frac{t^2}{48} (\beta -1) \delta  \rho  \sigma _0^4
                + \frac{t}{24} (\beta -1) \delta  \sigma _0 \left(\delta +\rho  \sigma _0\right) (k-x)
                -\frac{1}{3} (\beta -1) \delta  \rho (k-x)^2, \\
\sig_{0,2}
        &=  \frac{t}{24} \delta ^2 \left(8-3 \rho ^2\right) \sigma _0
                + \frac{t^2}{96} \delta ^2 \sigma _0 \left(5 \delta ^2+2 \sigma _0 \left(\left(6 \rho ^2-2\right) \sigma _0-7 \delta  \rho \right)\right) \\ & \qquad
                - \frac{t}{24} \delta ^2 \rho  \left(\delta -3 \rho  \sigma _0\right) (k-x)
                + \frac{\delta ^2 \left(2-3 \rho ^2\right)}{12 \sigma _0} (k-x)^2 , \\
\sig_{3,0}
        &=  \frac{t}{16} (\beta -1)^3 \sigma _0^3 (k-x)
                - \frac{-5t^2}{192} (\beta -1)^3 \sigma _0^5 (k-x) , \\
\sig_{2,1}
        &=  \frac{t^2}{288} (\beta -1)^2 \delta  \sigma _0^3 \left(17 \rho  \sigma _0-11 \delta \right)
                + \frac{t^3}{384} (\beta -1)^2 \delta  \sigma _0^5 \left(3 \delta -5 \rho  \sigma _0\right)
                + \frac{t}{16} (\beta -1)^2 \delta  \rho  \sigma _0^2 (k-x) \\ &\qquad
                + \frac{-3t^2}{64}(\beta -1)^2 \delta  \rho  \sigma _0^4 (k-x)
                + \frac{t}{48} (\beta -1)^2 \delta  \sigma _0 \left(\rho  \sigma _0-2 \delta \right) (k-x)^2
                + \frac{5}{24} (\beta -1)^2 \delta  \rho (k-x)^3, \\
\sig_{1,2}
        &=  -\frac{t^2}{72} (\beta -1) \delta ^2 \rho  \sigma _0^2 \left(\delta -7 \rho  \sigma _0\right)
                + \frac{t^3}{96} (\beta -1) \delta ^2 \rho  \sigma _0^4 \left(2 \delta -3 \rho  \sigma _0\right) \\ &\qquad
                + \frac{t}{144} (\beta -1) \delta ^2 \left(2-17 \rho ^2\right) \sigma _0 (k-x)
                + \frac{t^2}{192} (\beta -1) \delta ^2 \sigma _0 \left(\delta ^2-6 \delta  \rho  \sigma _0+2 \left(\rho ^2-1\right) \sigma _0^2\right)(k-x) \\ &\qquad
                + \frac{t}{48} (\beta -1) \delta ^2 \rho  \left(5 \rho  \sigma _0-7 \delta \right)(k-x)^2
                + \frac{(\beta -1) \delta ^2 \left(16 \rho ^2-7\right)}{24 \sigma _0} (k-x)^3 , \\
\sig_{0,3}
        &=  \frac{t^2}{96} \delta ^3 \sigma _0 \left(3 \delta  \left(\rho ^2-4\right)+\rho  \left(26-9 \rho ^2\right) \sigma _0\right) \\ &\qquad
                + \frac{t^3}{384} \delta ^3 \sigma _0 \left(\sigma _0 \left(19 \delta ^2 \rho +2 \sigma _0 \left(\delta  \left(8-21 \rho ^2\right)+\rho  \left(15 \rho ^2-11\right) \sigma _0\right)\right)-3 \delta ^3\right) \\ &\qquad
                +   \frac{t}{48} \delta ^3 \rho  \left(3 \rho ^2-2\right) (k-x)
                -   \frac{t^2}{192} \delta ^3 \rho  \left(\delta ^2+6 \sigma _0 \left(\delta  \rho +\left(1-2 \rho ^2\right) \sigma _0\right)\right)(k-x)\\ &\qquad
                - \frac{t}{16} \delta ^3 \rho  \left(\rho ^2-1\right) (k-x)^2
                + \frac{\delta ^3 \rho  \left(6 \rho ^2-5\right)}{24 \sigma _0^2} (k-x)^3 ,
\end{align}
There is no formula for European option prices in the general SABR setting.  However, for the
special zero-correlation case $\rho = 0$ the exact price of a European Call is computed in
\cite{sabr-exact}:
\begin{align}
\begin{aligned}
u(t,x)
    &=  \ee^{(x+k)/2} \frac{\ee^{-\del^2 t / 8}}{\sqrt{2 \pi \del^2 t}} \Bigg\{
    \frac{1}{\pi} \int_0^\infty \dd V \int_0^\pi \dd\phi \frac{1}{V} \( \frac{V}{V_0} \)^{-1/2}
    \frac{\sin \phi \sin (|\nu|\phi)}{b-\cos \phi} \exp \( \frac{\xi_\phi^2}{2 \del^2 t}\) \\ & \qquad
    + \frac{\sin (|\nu|\pi)}{\pi} \int_0^\infty \dd V \int_0^\infty \dd\psi \frac{1}{V} \( \frac{V}{V_0} \)^{-1/2}
    \frac{\sinh \psi}{b - \cosh \psi}\ee^{-|\nu|\psi}\exp \( \frac{\xi_\psi^2}{2 \del^2 t } \)
    \Bigg\} + ( \ee^x - \ee^k )^+, \\
\xi_\phi
    &= \arccos \( \frac{q_h^2 + q_x^2 + V^2 + V_0^2}{2 V V_0} - \frac{q_h q_x}{V V_0} \cos\phi \), \\
\xi_\psi
    &= \arccos \( \frac{q_h^2 + q_x^2 + V^2 + V_0^2}{2 V V_0} + \frac{q_h q_x}{V V_0} \cosh \psi \), \\
b
    &=  \frac{q_h^2 + q_x^2}{2 q_h q_x} , \qquad
q_h
    =   \frac{\ee^{(1-\beta)k}}{1-\beta} , \qquad
q_x
    =       \frac{\ee^{(1-\beta)x}}{1-\beta}    , \qquad
\nu
    =       \frac{-1}{2(1-\beta)} , \qquad
V_0
    =       \frac{\ee^y}{\del} .
\end{aligned} \label{eq:u.SABR}
\end{align}
Thus, in the zero-correlation setting, the implied volatility $\sig$ can be obtained by
using the above formula and then by solving \eqref{eq:imp.vol.def} numerically.  In Figure \ref{fig:sabr}
we plot our third order implied volatility approximation $\bar{\sig}_3$ and the numerically obtained implied
volatility $\sig$.  For comparison, we also plot the implied volatility expansion of \cite{sabr}
\begin{align}
\sig^{\text{HKLW}}
    &=  \del \frac{x-k}{D(\zeta)} \left\{ 1 + t \del^2
            \[ \frac{2 \gam_2 - \gam_1^2 + 1/f^2}{24} \(\frac{\ee^{y+\beta f}}{\del}\)^2
            + \frac{\rho \gam_1 \ee^{y+\beta f}}{4 \del} + \frac{2-3\rho^2}{24} \] \right\} , \label{eq:sig.hklw}  \\
f
    &=  \frac{1}{2}(\ee^x+\ee^k), \\
\zeta
    &= \frac{\del \, \ee^{-y}}{\beta-1}\( \ee^{(1-\beta)k} - \ee^{(1-\beta)x} \) , \\
\gam_1
    &= \beta / f, \\
\gam_2
    &= \beta(\beta-1)/f^2, \\
D(\zeta)
    &=  \log \( \frac{\sqrt{1-2 \rho \zeta + \zeta^2}+\zeta-\rho}{1-\rho} \) .
\end{align}
Note that we use the ``corrected'' SABR formula, which appears in \cite{obloj}.

%
%

\section{Conclusions and future work}
\label{sec:conclusion}
In this paper we consider a general class of parametric local-stochastic volatility models.   In
this setting, we provide a family of approximations -- one for each polynomial expansion of $\Ac(t)$ -- for (i)  European-style option prices and (ii) implied volatilities.  The terms in our option price expansions are expressed as a differential operator acting on the Black-Scholes price.  Thus, to compute approximate prices, one
requires only a normal CDF.  Our implied volatility expansions are explicit, requiring no special
functions nor any numerical integration.  Thus, approximate implied volatilities can be computed even faster than option prices.
\par
We carry out extensive computations using the Taylor series expansion of $\Ac(t)$.  In particular,
we establish the rigorous error bounds of our pricing {and implied volatility approximations}.  We
also implement our implied volatility expansion under four separate model dynamics: CEV local
volatility, Heston stochastic volatility, 3/2 stochastic volatility, and SABR local-stochastic
volatility.  In each setting we demonstrate that our implied volatility expansion provides an
excellent approximation of the true implied volatility over a large range of strikes and
maturities.

\subsection*{Thanks}
The authors would like to thank Mike Staunton and two anonymous referees for their thorough reading of this manuscript.  Their suggestions have improved both the mathematical quality and readability of our results.

%
%

\appendix

%
%

\section{Asymptotics of the Black-Scholes price for short maturities}
\label{sec:lemmas} We prove some results concerning the short-maturity behavior of the
Black-Scholes price. Throughout this appendix $\tau$ denotes the time to maturity.  We recall the
following {alternative expression for the Black-Scholes price,} taken from
\cite{roper2009relationship}
\begin{align}\label{andre1}
  u^{\BS}(\s;\t,x,k)&=\left(e^{x}-e^{k}\right)^{+}+
  e^{x}\sqrt{\frac{\t}{2\pi}}\int_{0}^{\s}e^{-\frac{1}{2}\left(\frac{x-k}{w\sqrt{\t}}+\frac{w\sqrt{\t}}{2}\right)^{2}}dw.
\end{align}
Now we set 
\begin{align}\label{andre3}
F\left(\s_{1},\s_{2},\t,\lam\right)
    &:=\int_{\s_{1}}^{\s_{2}}e^{-\frac{1}{2}\left(\frac{\lam^{2}}{w^{2}}+\t\frac{w^{2}}{4}\right)}dw, &
\s_{1}
    &\le \s_{2}.
\end{align}
and observe that, if
\begin{align}\label{andre5}
 |x-k|\le \lam \sqrt{\t}
\end{align}
for some $\lam>0$, then we have
\begin{align}
  e^{-\frac{\lam \sqrt{\t}}{2}}F\left(\s_1,\s_2,\t,\lam\right)&=\int_{\s_1}^{\s_2}e^{-\frac{1}{2}\left(\frac{\lam}{w}+\frac{w\sqrt{\t}}{2}\right)^{2}}dw\le
  \int_{\s_1}^{\s_2}e^{-\frac{1}{2}\left(\frac{x-k}{w\sqrt{\t}}+\frac{w\sqrt{\t}}{2}\right)^{2}}dw\\
  &\le
 \int_{\s_1}^{\s_2}e^{-\frac{1}{2}\left(\frac{\lam}{w}-\frac{w\sqrt{\t}}{2}\right)^{2}}dw=
 e^{\frac{\lam \sqrt{\t}}{2}}F\left(\s_1,\s_2,\t,\lam\right). \label{eq:matt3}
\end{align}
Therefore, assuming \eqref{andre5} holds, from \eqref{andre1} and \eqref{eq:matt3} we have
\begin{align}\label{andre2}
  e^{x-\frac{\lam \sqrt{\t}}{2}}\sqrt{\frac{\t}{2\pi}}F\left(0,\s,\t,\lam\right)\le u^{\BS}(\s;\t,x,k) - \left(e^{x}-e^{k}\right)^{+}\le
  e^{x+\frac{\lam \sqrt{\t}}{2}}\sqrt{\frac{\t}{2\pi}}F\left(0,\s,\t,\lam\right).
\end{align}
Note that $F$ in \eqref{andre3} is a monotone function, increasing in $\s_{2}$, decreasing in
$\s_{1}$, $\t$ and $\lam$. In particular, for any $0\le \s_{\text{min}}\le\s_{\text{max}}$,
$\lam>0$, ${\t_0}>0$ and $\tau \in [0,{\t_0}]$, we have
\begin{align}\label{andre4}
  0<F\left(\s_{\text{min}},\s_{2},T,\lam\right)\le F\left(\s_{1},\s_{2},\t,\lam\right)\le F\left(\s_{1},\s_{\text{max}},0,\lam\right)<\infty ,\qquad
  \s_{\text{min}}\le\s_{1}\le\s_{2}\le \s_{\text{max}}.
\end{align}
The estimates in \eqref{andre2} were used by \cite{roper2009relationship} {(see also
\cite{Li2005})} to derive the asymptotic behavior close to expiry of the Black-Scholes Call price
as $\t\downarrow 0$. Below we use \eqref{andre2} to prove two lemmas concerning the comparison, close to
expiry, of two Black-Scholes prices with different volatilities.


\begin{lemma}
\label{landre1}
  For any $\lam >0$, $\s_{2}\ge \s_{1}>0$ and ${\t_0}>0$ there exists a
  constant $C\ge 1$, dependent only on $\lam,\s_{1},\s_{2}$ and ${\t_0}$, such that
\begin{align}\label{andre6}
  u^{\BS}(\s_{2};\t,x,k)\le C u^{\BS}(\s_{1};\t,x,k)
\end{align}
for any $\t\in[0,{\t_0}]$ and $|x-k|\le \lam \sqrt{\t}$.
\end{lemma}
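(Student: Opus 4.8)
The plan is to read the claim off from the two-sided bound \eqref{andre2} on the Black--Scholes price, combined with the monotonicity of the auxiliary function $F$ defined in \eqref{andre3}. Throughout we work under the hypothesis $|x-k|\le\lam\sqrt{\t}$. The case $\t=0$ is immediate: then $x=k$ and $u^{\BS}(\s;0,x,k)=(e^{x}-e^{k})^{+}=0$ for every $\s$, so \eqref{andre6} holds with $C=1$. Hence I may assume $\t\in(0,\t_{0}]$, so that $u^{\BS}(\s_{1};\t,x,k)>0$ and dividing by it is legitimate.

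First I would bound $u^{\BS}(\s_{2};\t,x,k)$ from above: by the right-hand inequality in \eqref{andre2}, the elementary estimate $(e^{x}-e^{k})^{+}\le e^{x}|x-k|\le e^{x}\lam\sqrt{\t}$ (which uses $1-e^{-u}\le u$ for $u\ge0$), and the fact that $F$ is decreasing in $\t$ (noted just after \eqref{andre3} and in \eqref{andre4}), which gives $F(0,\s_{2},\t,\lam)\le F(0,\s_{2},0,\lam)=:c_{2}$, one obtains $u^{\BS}(\s_{2};\t,x,k)\le e^{x}\sqrt{\t}\,(\lam+e^{\lam\sqrt{\t}/2}c_{2}/\sqrt{2\pi})$. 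Next I would bound $u^{\BS}(\s_{1};\t,x,k)$ from below: by the left-hand inequality in \eqref{andre2}, discarding the nonnegative term $(e^{x}-e^{k})^{+}$, and using $F(0,\s_{1},\t,\lam)\ge F(0,\s_{1},\t_{0},\lam)=:c_{1}$, one obtains $u^{\BS}(\s_{1};\t,x,k)\ge e^{x}\sqrt{\t}\,e^{-\lam\sqrt{\t}/2}c_{1}/\sqrt{2\pi}$. Dividing the two inequalities, the common factor $e^{x}\sqrt{\t}$ cancels, and estimating $e^{\lam\sqrt{\t}/2}\le e^{\lam\sqrt{\t_{0}}/2}$ for $\t\le\t_{0}$ gives \eqref{andre6} with $C=\max\{1,\,e^{\lam\sqrt{\t_{0}}/2}(\sqrt{2\pi}\,\lam+e^{\lam\sqrt{\t_{0}}/2}c_{2})/c_{1}\}$, a constant depending only on $\lam,\s_{1},\s_{2},\t_{0}$ through $c_{1}$ and $c_{2}$.

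The one point that deserves genuine attention --- and which I expect to be the main (if modest) obstacle --- is the behavior of the constants $c_{1}=F(0,\s_{1},\t_{0},\lam)$ and $c_{2}=F(0,\s_{2},0,\lam)$. Finiteness of $c_{2}$ is clear since the integrand in \eqref{andre3} is bounded by $1$, so $c_{2}\le\s_{2}$; the essential point is that $c_{1}>0$, which holds because $\s_{1}>0$ and the integrand $w\mapsto e^{-\frac12(\lam^{2}/w^{2}+\t_{0}w^{2}/4)}$ is continuous and strictly positive on $(0,\s_{1}]$, with the factor $e^{-\lam^{2}/(2w^{2})}$ guaranteeing that the integral converges as $w\downarrow0$. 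All the remaining steps are routine bookkeeping with \eqref{andre2} and \eqref{andre4}. Note finally that $C\ge1$ is automatic, since \eqref{andre1} exhibits $u^{\BS}$ as nondecreasing in $\s$, so in fact $u^{\BS}(\s_{2};\t,x,k)\ge u^{\BS}(\s_{1};\t,x,k)$ whenever $\s_{2}\ge\s_{1}$.
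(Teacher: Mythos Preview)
Your proof is correct and uses the same ingredients as the paper --- the two-sided bound \eqref{andre2} and the monotonicity \eqref{andre4} of $F$. The paper's argument is marginally cleaner: instead of bounding the full prices (which forces you to control the intrinsic value $(e^{x}-e^{k})^{+}$ separately via $(e^{x}-e^{k})^{+}\le e^{x}\lam\sqrt{\t}$), it works directly with the time values $u^{\BS}(\s_{i};\t,x,k)-(e^{x}-e^{k})^{+}$, for which \eqref{andre2} immediately gives the ratio bound $e^{\lam\sqrt{\t_{0}}}F(0,\s_{2},0,\lam)/F(0,\s_{1},\t_{0},\lam)$; since this constant is at least $1$ and the intrinsic value is nonnegative, the inequality for the full prices follows.
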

\proof It suffices to prove that
\begin{align}\label{andre10}
  u^{\BS}(\s_{2};\t,x,k)- \left(e^{x}-e^{k}\right)^{+} \le C \left(u^{\BS}(\s_{1};\t,x,k)- \left(e^{x}-e^{k}\right)^{+}\right),
  \qquad |x-k|\le \lam \sqrt{\t},\quad \t\in[0,{\t_0}].
\end{align}
By \eqref{andre2} we have
\begin{align}
 u^{\BS}(\s_{2};\t,x,k)- \left(e^{x}-e^{k}\right)^{+} &\le   e^{x+\frac{\lam
 \sqrt{\t}}{2}}\sqrt{\frac{\t}{2\pi}}F\left(0,\s_{2},\t,\lam\right)\\
 &\le e^{\lam \sqrt{\t_{0}}}\frac{F\left(0,\s_{2},0,\lam\right)}{F\left(0,\s_{1},{\t_0},\lam\right)}\left(u^{\BS}(\s_{1};\t,x,k)-
 \left(e^{x}-e^{k}\right)^{+}\right),
\end{align}
where in the last inequality we used also \eqref{andre4}.
\endproof

\begin{lemma}\label{landre2}
For any $\lam >0$, $\s_{2}> \s_{1}>0$ and $C>0$ there exists ${\t_0}$ with $0<{\t_0}<\frac{1}{C}$,
dependent only on $\lam,\s_{1},\s_{2}$ and $C$, such that
\begin{align}\label{andre7}
 u^{\BS}(\s_{1};\t,x,k)
    &\le \left(1-C\t\right) u^{\BS}(\s_{2};\t,x,k),  \\
 { \left(1+C\t\right) u^{\BS}(\s_{1};\t,x,k) }  &\le { u^{\BS}(\s_{2};\t,x,k) , } \label{eq:second}
\end{align}
for any $\t\in[0,{\t_0}]$ and $|x-k|\le \lam \sqrt{\t}$.
\end{lemma}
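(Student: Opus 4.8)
The plan is to reduce both inequalities in the lemma to a single elementary inequality for the auxiliary function $F$ of \eqref{andre3}, via the two-sided bound \eqref{andre2}, and then to close the argument by a continuity argument in $\t$ anchored at $\t=0$.

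First I would use the scaling identity $u^{\BS}(\s;\t,x+c,k+c)=\ee^{c}u^{\BS}(\s;\t,x,k)$, together with $(\ee^{x+c}-\ee^{k+c})^{+}=\ee^{c}(\ee^{x}-\ee^{k})^{+}$ and the shift invariance of $|x-k|$, to divide through by $\ee^{x}$ and thereby assume $x=0$ and $|k|\le\lam\sqrt{\t}$. Writing $p_{\t}(\s):=u^{\BS}(\s;\t,0,k)$ and $I:=(1-\ee^{k})^{+}$ for the normalized intrinsic value, the elementary bound $\ee^{k}\ge 1+k$ gives $0\le I\le |k|\le\lam\sqrt{\t}$, and \eqref{andre2} reads, for $\s\in\{\s_{1},\s_{2}\}$ and $0\le\t\le\t_{0}$,
\begin{align}
\ee^{-\lam\sqrt{\t}/2}\sqrt{\tfrac{\t}{2\pi}}\,F(0,\s,\t,\lam)\ \le\ p_{\t}(\s)-I\ \le\ \ee^{\lam\sqrt{\t}/2}\sqrt{\tfrac{\t}{2\pi}}\,F(0,\s,\t,\lam).
\end{align}

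For \eqref{eq:second} I would rewrite $(1+C\t)p_{\t}(\s_{1})\le p_{\t}(\s_{2})$ as $(1+C\t)(p_{\t}(\s_{1})-I)+C\t I\le p_{\t}(\s_{2})-I$; bounding the left-hand side from above and the right-hand side from below via the display, dividing by $\sqrt{\t/(2\pi)}$ and using $I\le\lam\sqrt{\t}$, it suffices to prove
\begin{align}
(1+C\t)\,\ee^{\lam\sqrt{\t}/2}\,F(0,\s_{1},\t,\lam)+C\lam\sqrt{2\pi}\,\t\ \le\ \ee^{-\lam\sqrt{\t}/2}\,F(0,\s_{2},\t,\lam).
\end{align}
Exactly the same manipulation applied to $p_{\t}(\s_{1})\le(1-C\t)p_{\t}(\s_{2})$ — legitimate since we will have $\t<1/C$ — reduces \eqref{andre7} to
\begin{align}
\ee^{\lam\sqrt{\t}/2}\,F(0,\s_{1},\t,\lam)+C\lam\sqrt{2\pi}\,\t\ \le\ (1-C\t)\,\ee^{-\lam\sqrt{\t}/2}\,F(0,\s_{2},\t,\lam).
\end{align}

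To finish, I would note that in each of the last two displays the right-hand side minus the left-hand side is a continuous function of $\t$ on $[0,1/C)$: continuity of $\t\mapsto F(0,\s,\t,\lam)$ follows from dominated convergence, the integrand in \eqref{andre3} being dominated by the $\t$-independent integrable function $\ee^{-\lam^{2}/(2w^{2})}$ on $(0,\s)$. At $\t=0$ both of those differences equal $F(0,\s_{2},0,\lam)-F(0,\s_{1},0,\lam)=\int_{\s_{1}}^{\s_{2}}\ee^{-\lam^{2}/(2w^{2})}\,\dd w$, which is \emph{strictly} positive precisely because $\s_{2}>\s_{1}$. Hence each difference stays positive on some interval $[0,\t_{0})$ with $\t_{0}$ depending only on $\lam,\s_{1},\s_{2},C$; taking the smaller of the two resulting thresholds and intersecting with $(0,1/C)$ produces the required $\t_{0}$. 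The argument is thus routine once the scaling reduction is in place; the only point needing a little care is uniformity of the estimates over the strip $|x-k|\le\lam\sqrt{\t}$, which is entirely absorbed by that reduction and by the uniform bound $I\le\lam\sqrt{\t}$, and the genuine input of the proof is simply the strict monotonicity $F(0,\s_{1},0,\lam)<F(0,\s_{2},0,\lam)$.
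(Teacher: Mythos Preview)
Your argument is correct. Both inequalities reduce, after your normalization and use of \eqref{andre2}, to continuous functions of $\t$ that are strictly positive at $\t=0$ precisely because $F(0,\s_{2},0,\lam)>F(0,\s_{1},0,\lam)$; the intrinsic-value term is harmless thanks to your bound $I\le\lam\sqrt{\t}$, and the dominated-convergence step is fine.

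The paper's proof is organized differently in two respects. First, instead of bounding the intrinsic value, the paper subtracts: it rewrites \eqref{andre7} as $u^{\BS}(\s_{2})-u^{\BS}(\s_{1})\ge C\t\,u^{\BS}(\s_{2})$ and uses \eqref{andre1} to express the left-hand side directly as $e^{x}\sqrt{\t/(2\pi)}\int_{\s_{1}}^{\s_{2}}\exp(\cdots)\,dw$, so the intrinsic value cancels and one gets an explicit lower bound $c\,e^{x}\sqrt{\t}$ with $c=c(\lam,\s_{1},\s_{2},C)>0$; the right-hand side is then bounded above by a constant times $e^{x}\t$, and the $\sqrt{\t}$-versus-$\t$ comparison finishes. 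Second, the paper deduces \eqref{eq:second} from \eqref{andre7} in one line via $(1+C\t)u^{\BS}(\s_{1})\le u^{\BS}(\s_{1})+C\t\,u^{\BS}(\s_{2})\le(1-C\t)u^{\BS}(\s_{2})+C\t\,u^{\BS}(\s_{2})$, rather than treating the two inequalities in parallel. Your route has the virtue of being completely mechanical once the scaling reduction is made; the paper's route is slightly shorter and yields explicit constants, at the price of first noticing that the difference formulation kills the intrinsic value.
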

\proof {To establish the first inequality \eqref{andre7}}, we prove that
\begin{align}\label{andre8}
  u^{\BS}(\s_{2};\t,x,k)-u^{\BS}(\s_{1};\t,x,k)\ge C\t u^{\BS}(\s_{2};\t,x,k),\qquad |x-k|\le \lam \sqrt{\t},\quad \t\in[0,{\t_0}].
\end{align}
We estimate the LHS in \eqref{andre8} using \eqref{andre1}. We have
\begin{align}
 u^{\BS}(\s_{2};\t,x,k)-u^{\BS}(\s_{1};\t,x,k)&= \sqrt{\t} \frac{e^{x}}{\sqrt{2\pi}}\int_{\s_{1}}^{\s_{2}}
 e^{-\frac{1}{2}\left(\frac{x-k}{w\sqrt{\t}}+\frac{w\sqrt{\t}}{2}\right)^{2}}dw\\
 &\ge \sqrt{\t}\, \frac{e^{x-\frac{\lam\sqrt{\t}}{2}}}{\sqrt{2\pi}}F\left(\s_{1},\s_{2},\t,\lam\right)\ge c e^{x}\sqrt{\t} , \label{andre9} \\
c
    &:= \frac{e^{-\frac{\lam}{2\sqrt{C}}}}{\sqrt{2\pi}}F\left(\s_{1},\s_{2},\frac{1}{C},\lam\right) , \label{eq:matt5}
\end{align}
where in the next-to-last inequality we used \eqref{andre5} and \eqref{eq:matt3},
and in the last inequality we used \eqref{andre4} and $\t<\frac{1}{C}$, so that
$c$
is positive and independent of $\t$.
Next, once again using
\eqref{andre2}, we can prove the following estimate for the RHS of \eqref{andre8}:
\begin{align}
u^{\BS}(\s_{2};\t,x,k)
 { \leq u^{\BS}\(\s_{2};\frac{1}{C},x,k\) }
\le e^{x}\left(1+e^{\frac{\lam }{2\sqrt{C}}}\sqrt{\frac{1}{2C\pi}}F\left(0,\s_{2},\frac{1}{C},\lam\right)\right) ,
\end{align}
and therefore, for $\t$ positive and suitably small, we have
  $$C\t u^{\BS}(\s_{2};\t,x,k)\le c e^{x}\sqrt{\t}$$
for $c$ as in \eqref{eq:matt5}. {This establishes the first inequality \eqref{andre7}.  To
establish the second inequality \eqref{eq:second} we have
\begin{align}
 (1+C\tau)u^{\BS}(\s_{1};\t,x,k)
    &\leq u^{\BS}(\s_{1};\t,x,k) + C \tau u^{\BS}(\s_{2};\t,x,k) \\
    &\leq   (1-C\tau)u^{\BS}(\s_{2};\t,x,k) + C \tau u^{\BS}(\s_{2};\t,x,k)
    =       u^{\BS}(\s_{2};\t,x,k) ,
\end{align}
where we have used \eqref{andre7} in the last inequality.
This concludes the proof.
}
\endproof

%
%

\section{Fa\`a di Bruno's formula and Bell polynomials}\label{append:faa_bell}
Here we briefly recall the well known Fa\`a di Bruno's formula (see \citet*{Riordan} and
\citet*{Johnson}), more precisely, its Bell polynomial version.  Let $f$ and $g$ be two $C^\infty$
real-valued functions on $\mathbb{R}$. The following representation holds:
\begin{align}\label{eq:Faa_di_Bruno_appendix}
\frac{\dd^n}{\dd x^n}f(g(x))=\sum_{h=1}^n f^{(h)}(g(x))\cdot \mathbf{B}_{n,h}\left(
\frac{\dd}{\dd x}g(x),\frac{\dd^2}{\dd x^2}g(x),\cdots,\frac{\dd^{n-h+1}}{\dd x^{n-h+1}}g(x)
\right),\qquad n\geq 1,
\end{align}
with $\mathbf{B}_{n,h}$ being the family of the Bell polynomials defined as
\begin{align}\label{eq:Bell_polyn_appendix}
\mathbf{B}_{n,h}(z)=\sum \frac{n!}{j_1! j_2! \cdots j_{n-h+1}!}\left( \frac{z_1}{1!} \right)^{j_1}
\left( \frac{z_2}{2!} \right)^{j_2}\cdots \left( \frac{z_{n-h+1}}{(n-h+1)!} \right)^{j_{n-h+1}},
\qquad 1\leq h \leq n,
\end{align}
where the sum is taken over all sequences $j_1, j_2,\cdots, j_{n-h+1}$ of non-negative integers
such that
\begin{align}\label{eq:relation_indexes_bell}
j_1+j_2+\cdots + j_{n-h+1}=h,\quad\text{and}\quad  j_1+2 j_2+\cdots +(n-h+1) j_{n-h+1}=n.
\end{align}

%
%

\section{Implied volatility expressions}
\label{sec:impvol.2} In this appendix we assume a time-homogeneous diffusion and use the Taylor
series expansion of $\Ac$ as in Example \ref{example:Taylor} with $(\xb,\yb)=(X_0,Y_0):=(x,y)$.
With $\Ac$ given by \eqref{eq:A-2d}, we introduce the notation
\begin{align}
\eta_{i,j}
    &=  \frac{\d_x^i\d_y^j \eta(\xb,\yb)}{i!j!} , &
\eta
    &\in    \{a, b, c, f \} .
\end{align}
and we compute, explicitly {(below $\t$ is time to maturity)}
\begin{align}
\sig_0
    &=  \sqrt{2 a_{0,0}} , &
\sig_1
    &=  \sig_{1,0} + \sig_{0,1} , &
\sig_2
    &=  \sig_{2,0} + \sig_{1,1} + \sig_{0,2} ,
\end{align}
where
\begin{align}
\sig_{1,0}
    &=  \(  \frac{a_{1,0}}{2 \sigma _0} \) (k-x), &
\sig_{0,1}
    &=  \tau \( \frac{a_{0,1} \left(c_{0,0}+2 f _{0,0}\right)}{4 \sigma _0} \)
            + \( \frac{a_{0,1} c_{0,0}}{2 \sigma _0^3} \) (k-x) ,
\end{align}
and
\begin{align}
\sig_{2,0}
    &=  \tau \Big( \frac{1}{12} \sigma _0 a_{2,0}-\frac{a_{1,0}^2}{8 \sigma _0} \Big)
                + \tau^2 \Big( -\frac{1}{96} \sigma _0 a_{1,0}^2 \Big)
                + \Big( \frac{2 \sigma _0^2 a_{2,0}-3 a_{1,0}^2}{12 \sigma _0^3} \Big) (k-x)^2 , \\
\sig_{1,1}
    &=  \frac{\tau}{12 \sigma _0^3} \Big( \sigma _0^2 a_{1,1} c_{0,0}+a_{0,1} \left(a_{1,0} c_{0,0}-2 \sigma _0^2 c_{1,0}\right) \Big)
                + \frac{\tau^2}{48 \sigma _0} \Big( -a_{0,1} a_{1,0} c_{0,0} \Big) \\ &\qquad
                + \frac{\tau}{24 \sigma _0^3} \Big( 2 \sigma _0^2 a_{1,1} \left(c_{0,0}+2 f_{0,0}\right)+a_{0,1} \left(2 \sigma _0^2 \left(c_{1,0}+2 f_{1,0}\right)-5 a_{1,0} \left(c_{0,0}+2 f_{0,0}\right)\right) \Big) (k-x)\\ &\qquad
                + \frac{1}{6 \sigma _0^5} \Big( \sigma _0^2 a_{1,1} c_{0,0}+a_{0,1} \left(\sigma _0^2 c_{1,0}-5 a_{1,0} c_{0,0}\right) \Big) (k-x)^2, \\
\sig_{0,2}
    &=   \frac{\tau}{24 \sigma _0^5} \Big( 4 \sigma _0^2 a_{0,2} \left(3 \sigma _0^2 b_{0,0}-c_{0,0}^2\right)+a_{0,1} \left(a_{0,1} \left(9 c_{0,0}^2-8 \sigma _0^2 b_{0,0}\right)-4 \sigma _0^2 c_{0,0} c_{0,1}\right) \Big) \\ &\qquad
                + \frac{\tau^2}{24 \sigma _0^3} \Big( a_{0,1} \left(-2 \sigma _0^2 a_{0,1} b_{0,0}+c_{0,0} \left(\sigma _0^2 \left(c_{0,1}+2 f_{0,1}\right)-3 a_{0,1} f_{0,0}\right)\right)
                    \\ & \qquad \qquad
                +   a_{0,1} f_{0,0} \left(2 \sigma _0^2 \left(c_{0,1}+2 f_{0,1}\right)-3 a_{0,1} f_{0,0}\right)
                +\sigma _0^2 a_{0,2} \left(c_{0,0}+2 f_{0,0}\right){}^2 \Big) \\ &\qquad
                + \frac{\tau}{24 \sigma _0^5} \Big( a_{0,1} \left(c_{0,0} \left(4 \sigma _0^2 \left(c_{0,1}+f_{0,1}\right)-18 a_{0,1} f_{0,0}\right)-9 a_{0,1} c_{0,0}^2+4 \sigma _0^2 c_{0,1} f_{0,0}\right)
                \\ & \qquad \qquad
                + 4 \sigma _0^2 a_{0,2} c_{0,0} \left(c_{0,0}+2 f_{0,0}\right)
                \Big) (k-x) \\ &\qquad
                + \frac{1}{12 \sigma _0^7} \Big( a_{0,1} \left(a_{0,1} \left(4 \sigma _0^2 b_{0,0}-9 c_{0,0}^2\right)+2 \sigma _0^2 c_{0,0} c_{0,1}\right)+2 \sigma _0^2 a_{0,2} c_{0,0}^2 \Big)(k-x)^2 ,
\end{align}
Higher order terms are too long to reasonably include in this text.  However, $\sig_3$ and (for
local volatility models) $\sig_4$ can be computed easily using the Mathematica code provided free
of charge on the authors' website.
\begin{verbatim}
http://explicitsolutions.wordpress.com
\end{verbatim}

\begin{footnotesize}
\bibliographystyle{chicago}
\bibliography{BibTeX-MasterA}
\end{footnotesize}

%
%


\begin{sidewaysfigure}
    \centering
    \begin{tabular}{ | c | c | }
        \hline
        $t=0.1$ & $t=1.0$ \\
        \includegraphics[width=.26\textwidth,height=0.17\textheight]{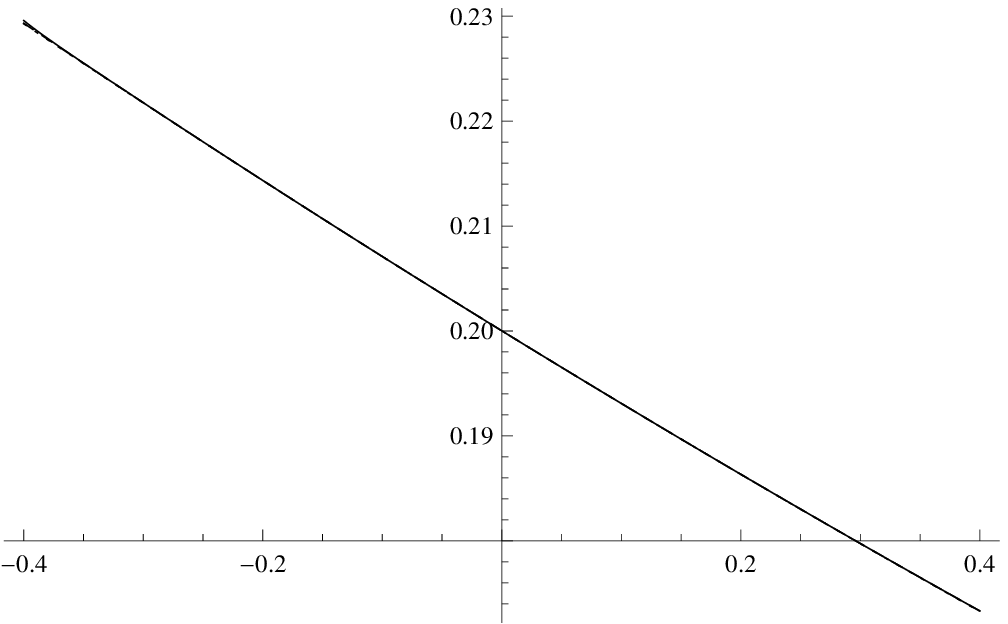} &
        \includegraphics[width=.26\textwidth,height=0.17\textheight]{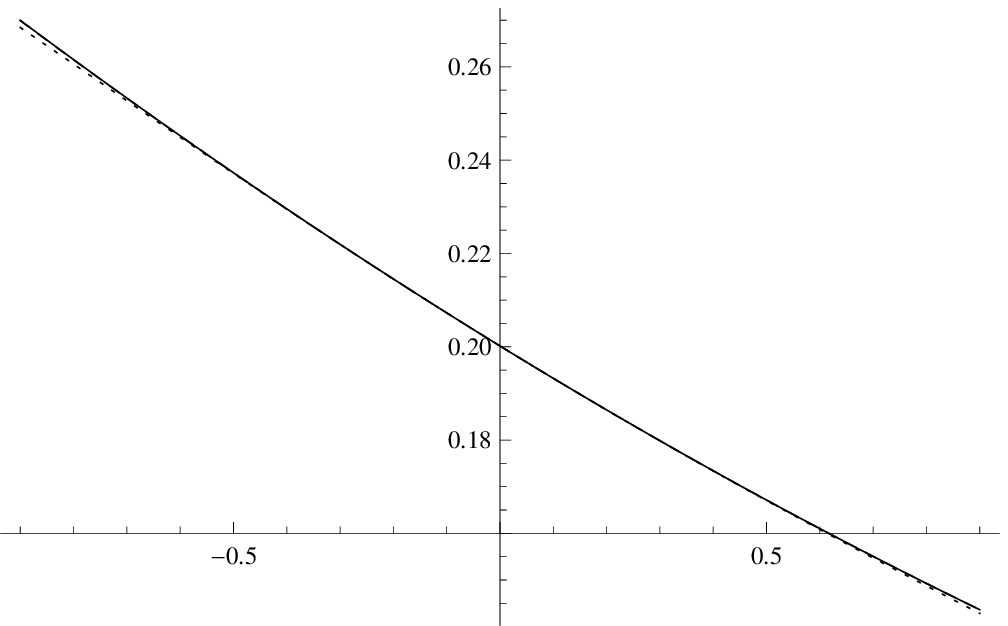}\\ \hline
        $t=5.0$ & $t=10.0$ \\
        \includegraphics[width=.26\textwidth,height=0.17\textheight]{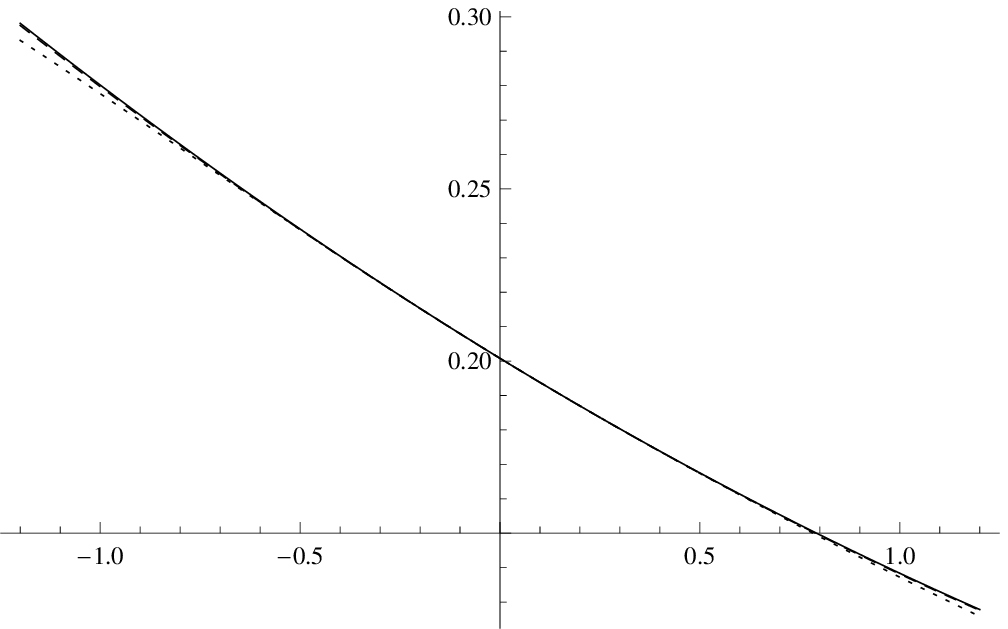} &
        \includegraphics[width=.26\textwidth,height=0.17\textheight]{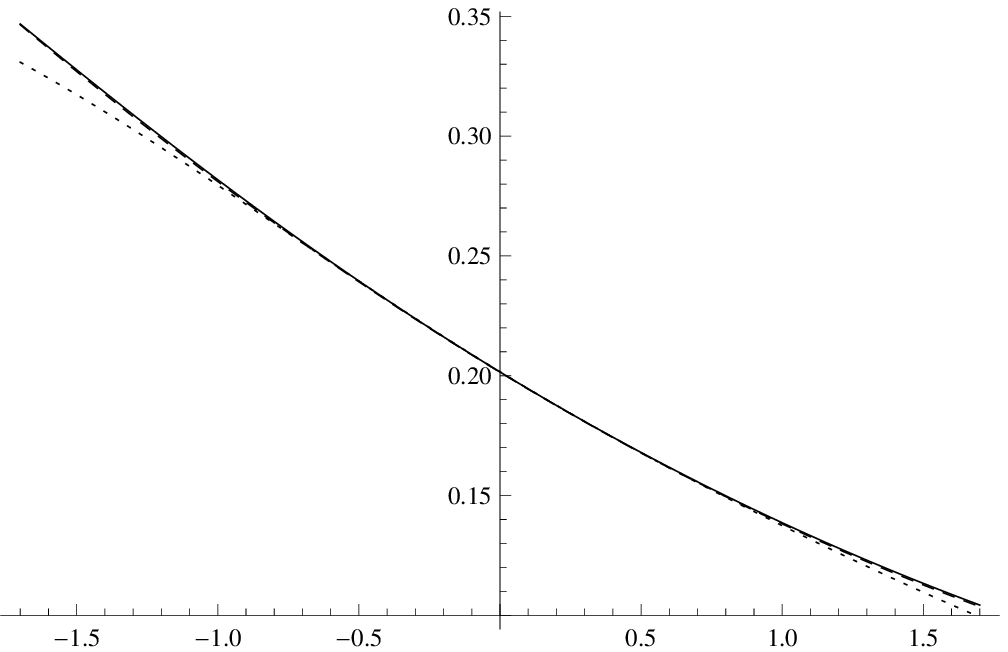}\\ \hline
    \end{tabular}
    \begin{tabular}{ c }
    \includegraphics[width=0.41\textwidth,height=0.40\textheight]{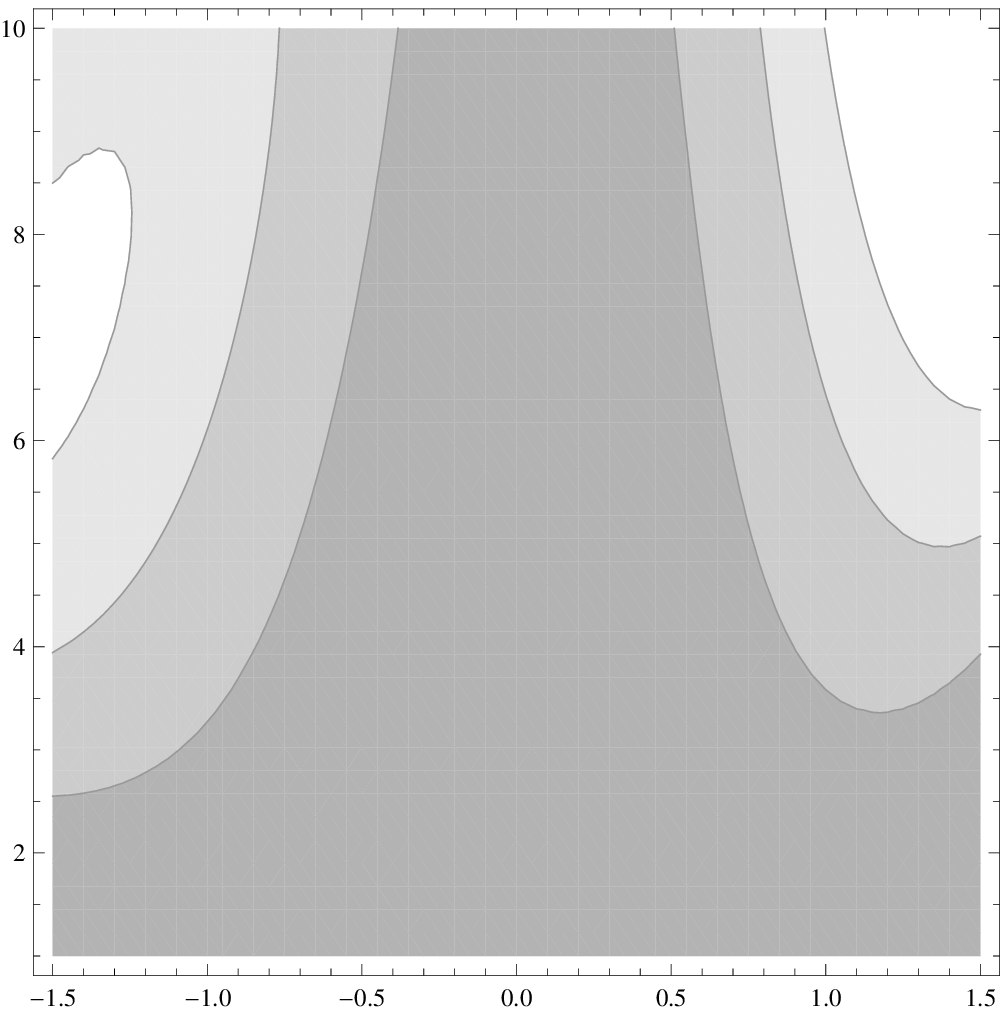}
    \end{tabular}
  \caption{
    LEFT:  Implied volatility in the CEV model \eqref{eq:model.CEV} is plotted as a function of $\log$-moneyness $(k-x)$ for four different maturities $t$.  The solid line corresponds to the implied volatility $\sig$ obtained by computing the exact price $u$ using \eqref{eq:u.CEV} and then by solving \eqref{eq:imp.vol.def} numerically.  The dashed line (which is nearly indistinguishable from the solid line) corresponds to our third order implied volatility approximation $\bar{\sig}_3$, which we compute by summing the terms in \eqref{eq:sig.CEV}.  The dotted line corresponds to the implied volatility expansion $\sig^\textrm{HW}$ of \cite{hagan-woodward}, which is computed using \eqref{eq:sig.hw}.
    RIGHT: We plot the absolute value of the relative error $|\bar{\sig}_3-\sig|/\sig$ of our third order implied volatility approximation as a function of $\log$-moneyness $(k-x)$ and maturity $t$.  The horizontal axis represents $\log$-moneyness $(k-x)$ and the vertical axis represents maturity $t$.  Ranging from darkest to lightest, the regions above represent relative errors of $<0.3\%$, $0.3\%$ to $0.6\%$, $0.6\%$ to $0.9\%$ and $>0.9\%$.
    We use the following parameters: $\beta=0.3$, $\del = 0.2$, $x = 0.0$.
    }
  \label{fig:cev}
\end{sidewaysfigure}

\begin{sidewaysfigure}
    \centering
    \begin{tabular}{ | c | c | }
        \hline
        $t=0.1$ & $t=1.0$ \\
        \includegraphics[width=.26\textwidth,height=0.17\textheight]{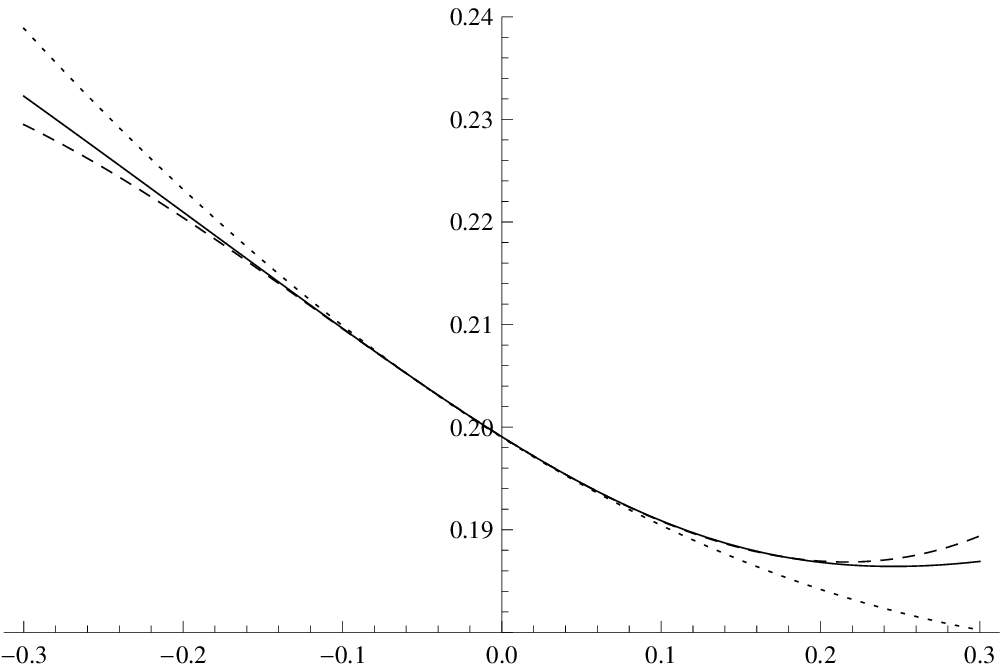} &
        \includegraphics[width=.26\textwidth,height=0.17\textheight]{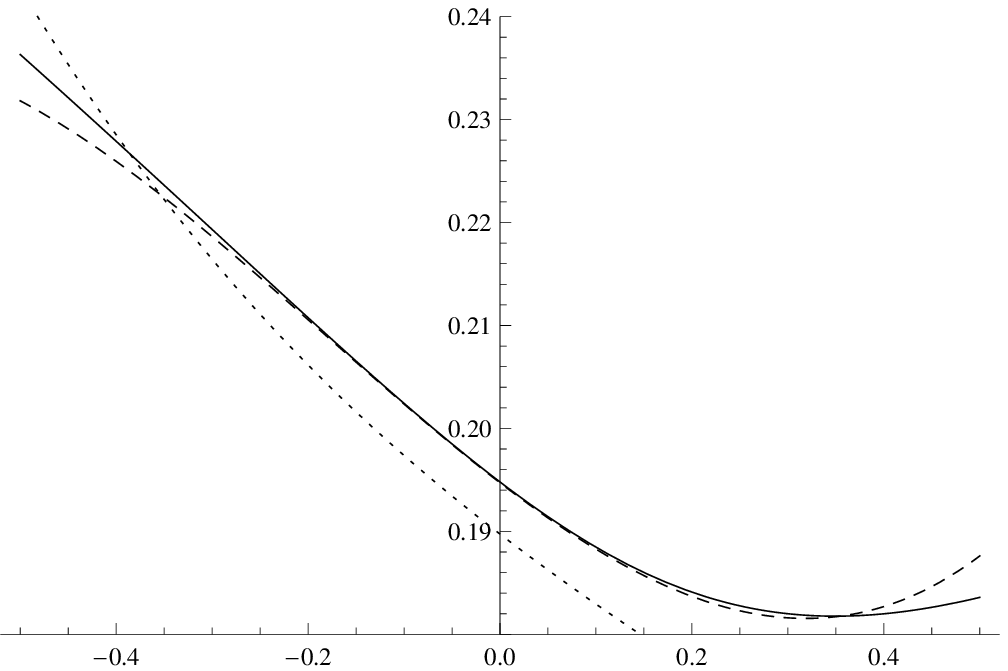}\\ \hline
        $t=5.0$ & $t=10.0$ \\
        \includegraphics[width=.26\textwidth,height=0.17\textheight]{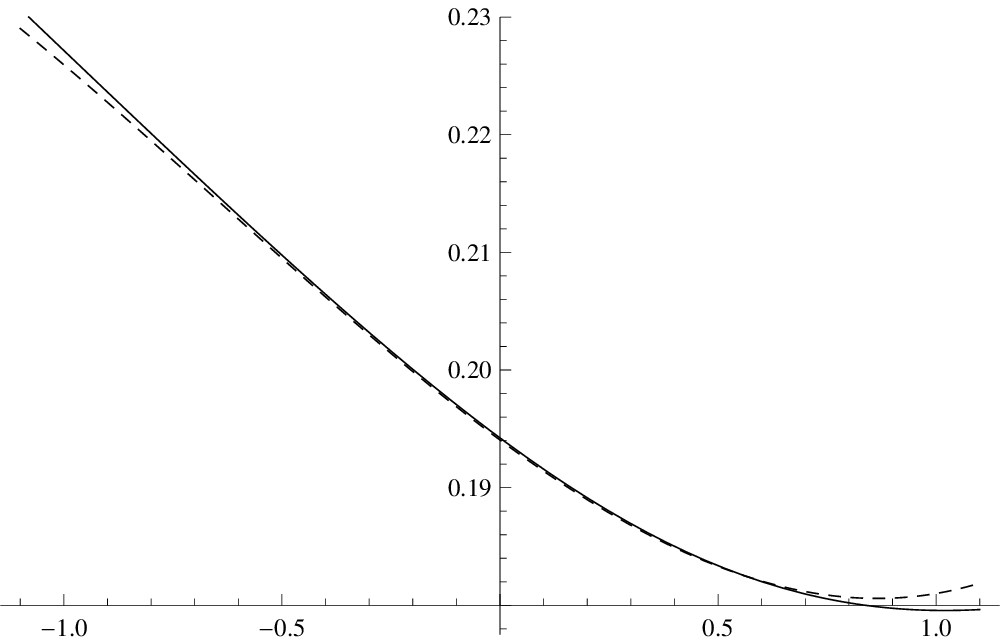} &
        \includegraphics[width=.26\textwidth,height=0.17\textheight]{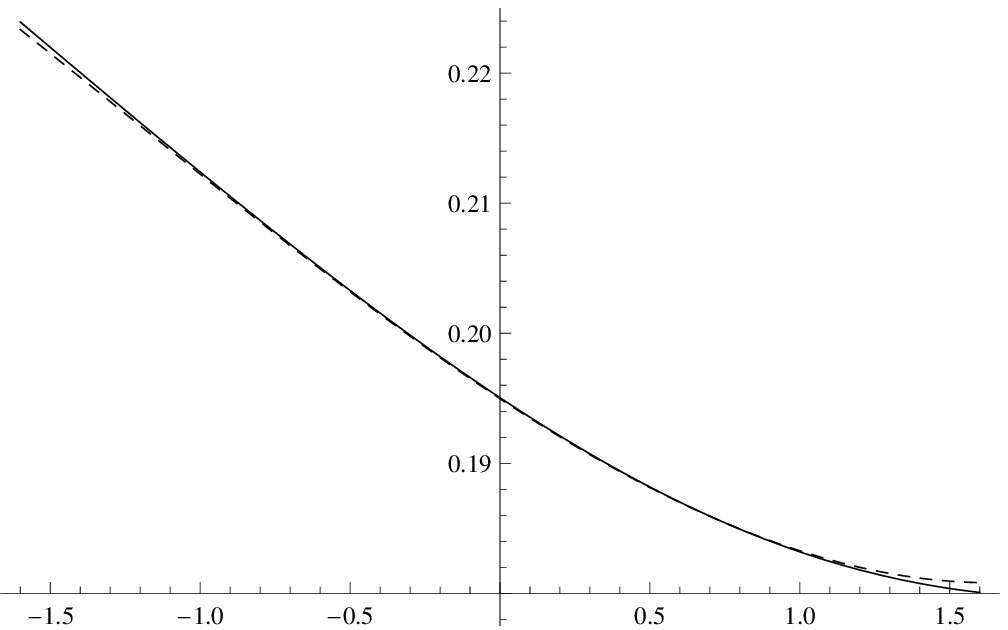}\\ \hline
    \end{tabular}
    \begin{tabular}{ c }
    \includegraphics[width=0.41\textwidth,height=0.40\textheight]{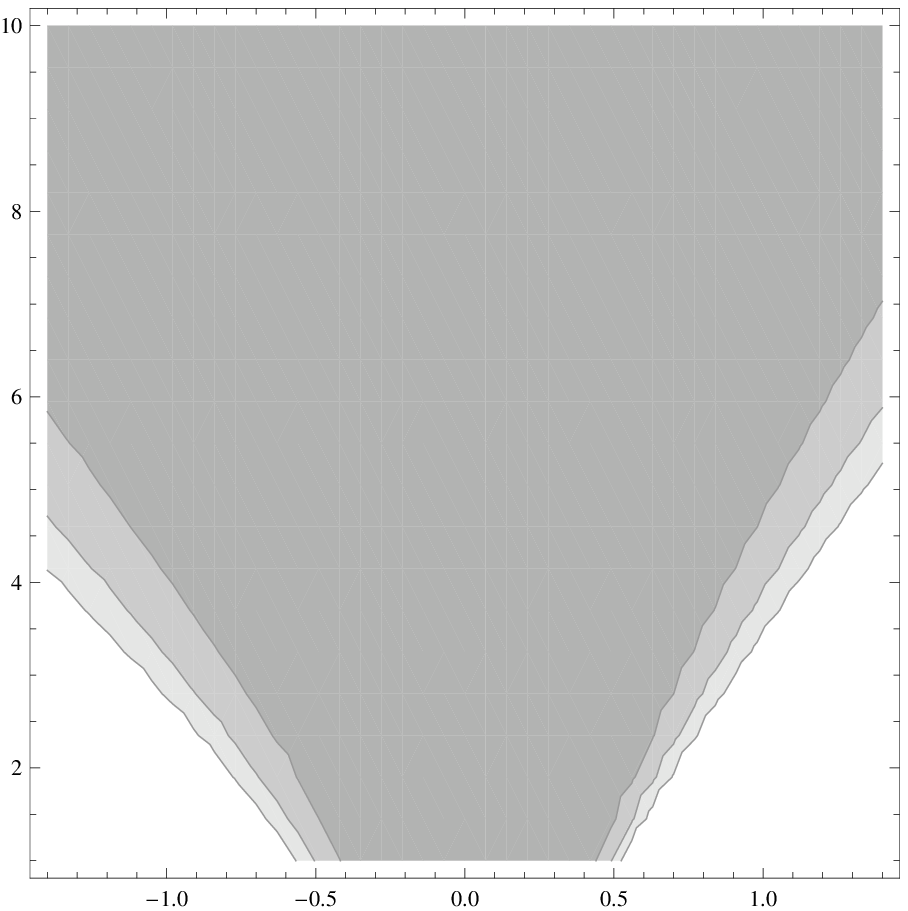}
    \end{tabular}
  \caption{
    LEFT:  Implied volatility the Heston model \eqref{eq:model.Heston} is plotted as a function of $\log$-moneyness $(k-x)$ for four different maturities $t$.  The solid line corresponds to the implied volatility $\sig$, obtained by computing the exact price $u$ using \eqref{eq:u.Heston} and then by solving \eqref{eq:imp.vol.def} numerically.  The dashed line corresponds to our third order implied volatility approximation $\bar{\sig}_3$, which we compute by summing the terms in \eqref{eq:sig.Heston} (note: $\sig_3$ does not appear in the text).  The dotted line (which only appears for the shortest two maturities) corresponds to the implied volatility expansion $\sig^\textrm{FJL}$ of \cite{forde-jacquier-lee}; it is computed using \eqref{eq:sig.fjl}.  Note that the dotted line does not appear in the plots for the two largest maturities.
    RIGHT: We plot the absolute value of the relative error $|\bar{\sig}_3-\sig|/\sig$ of our third order implied volatility approximation as a function of $\log$-moneyness $(k-x)$ and maturity $t$.  The horizontal axis represents $\log$-moneyness $(k-x)$ and the vertical axis represents maturity $t$.  Ranging from darkest to lightest, the regions above represent relative errors of $<1\%$, $1\%$ to $2\%$, $2\%$ to $3\%$ and $>3\%$.
    We use the parameters given in \cite{forde-jacquier-lee}:  $\kappa=1.15$, $\theta=0.04$, $\del=0.2$, $\rho=-0.40$ $x = 0.0$, $y = \log \theta$.
    }
  \label{fig:heston}
\end{sidewaysfigure}


\begin{sidewaysfigure}
    \centering
    \begin{tabular}{ | c | c | }
        \hline
        $t=0.1$ & $t=1.0$ \\
        \includegraphics[width=.26\textwidth,height=0.17\textheight]{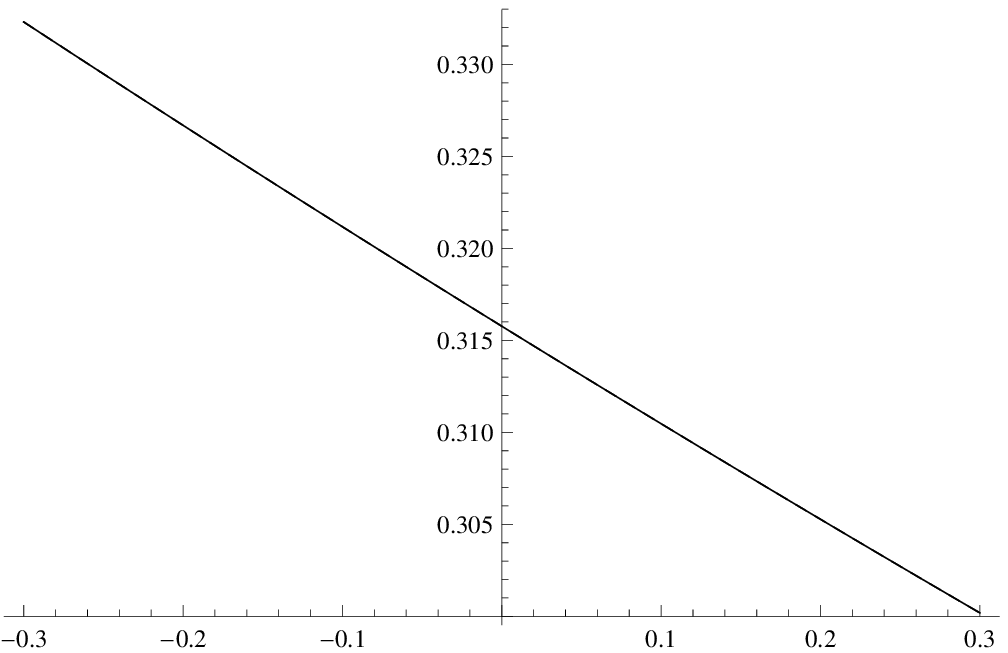} &
        \includegraphics[width=.26\textwidth,height=0.17\textheight]{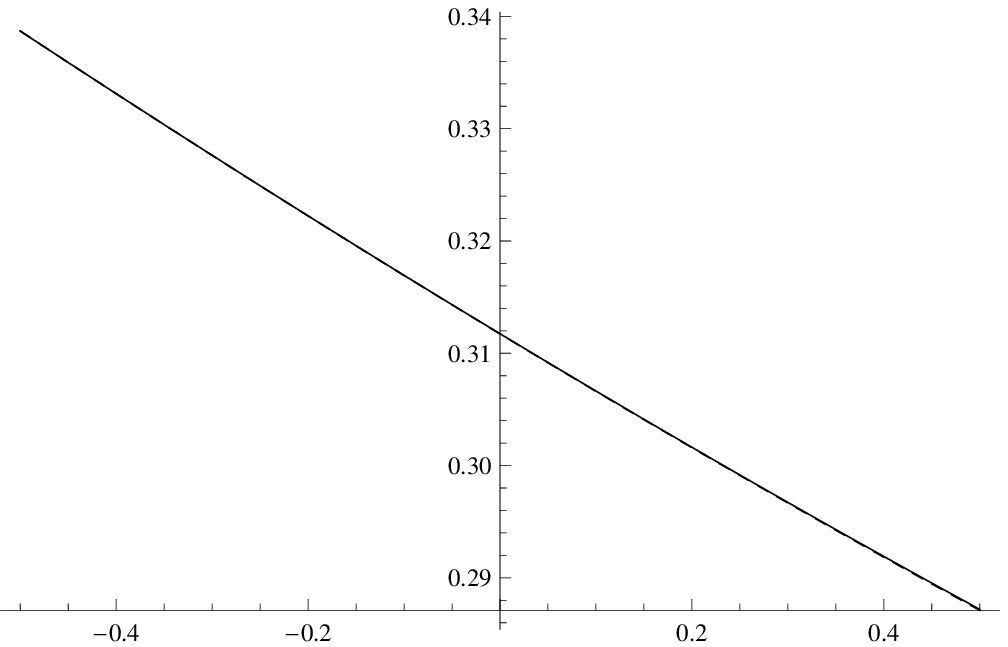}\\ \hline
        $t=3.0$ & $t=5.0$ \\
        \includegraphics[width=.26\textwidth,height=0.17\textheight]{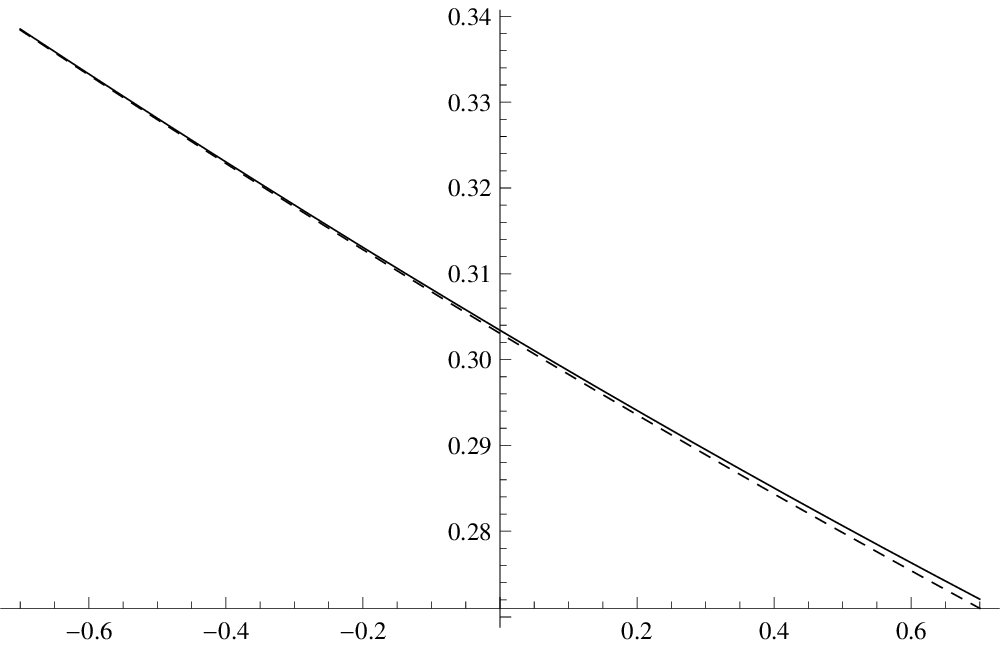} &
        \includegraphics[width=.26\textwidth,height=0.17\textheight]{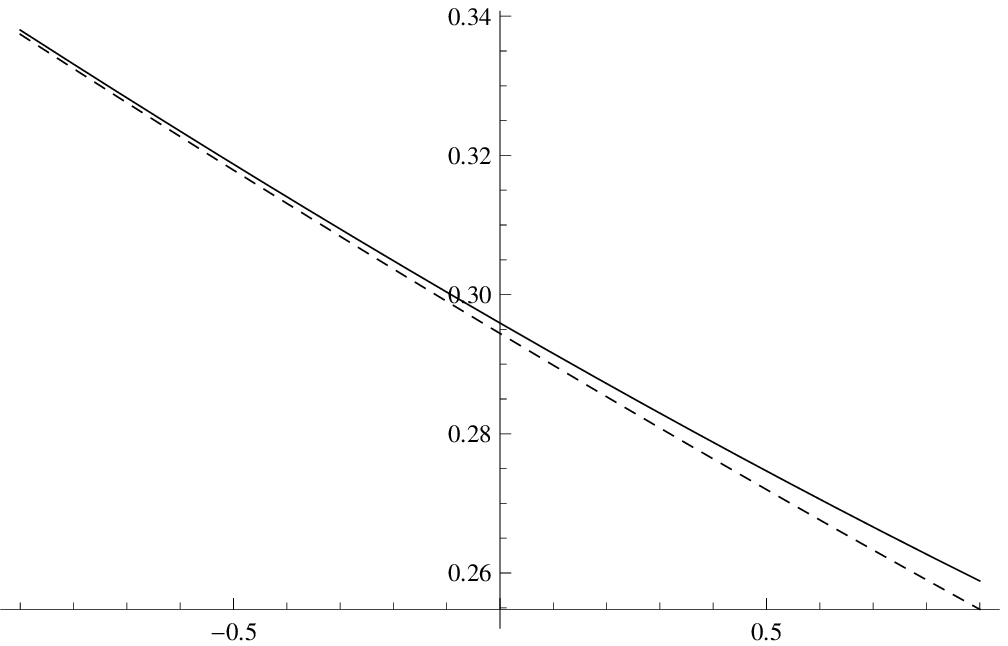}\\ \hline
    \end{tabular}
    \begin{tabular}{ c }
    \includegraphics[width=0.41\textwidth,height=0.40\textheight]{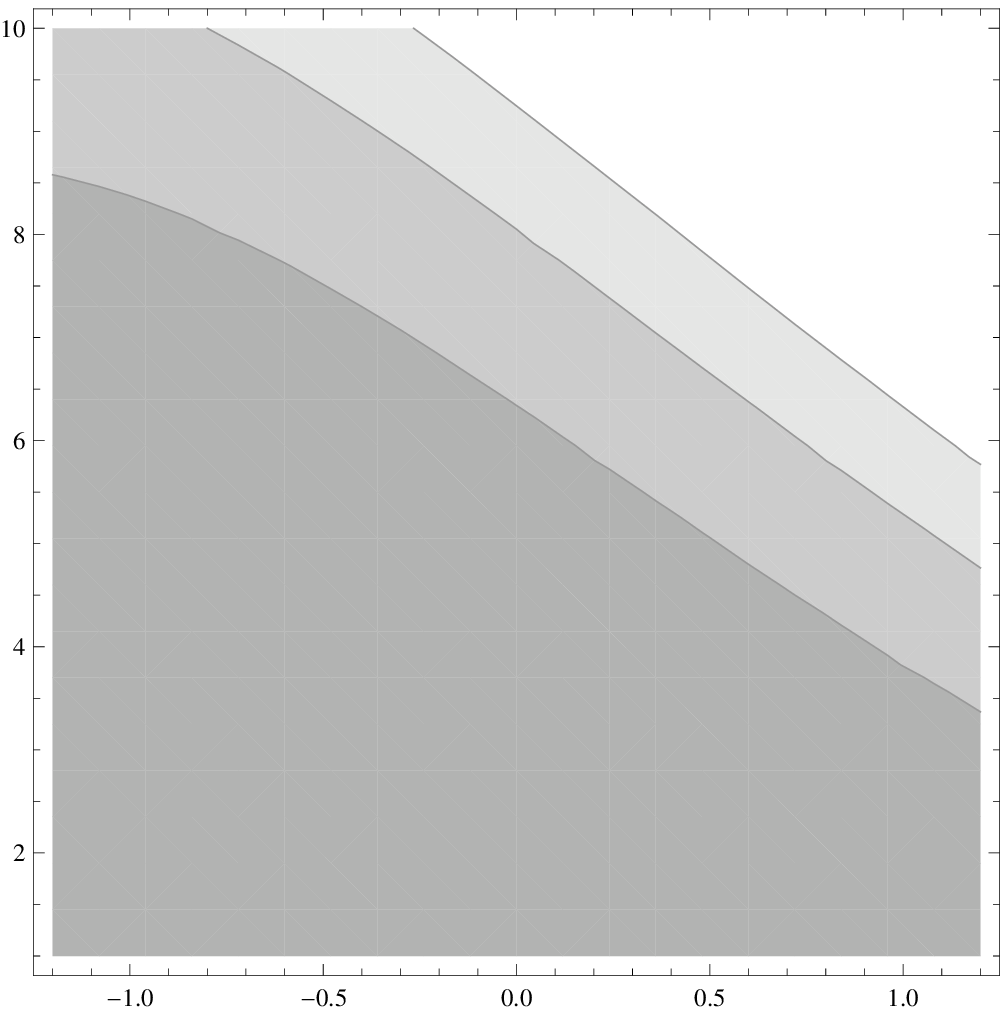}
    \end{tabular}
  \caption{
    LEFT:  Implied volatility in the 3/2 stochastic volatility model \eqref{eq:model.three-halves}
is plotted as a function of $\log$-moneyness $(k-x)$ for four different maturities $t$.  The solid
line corresponds to the implied volatility $\sig$, obtained by computing the exact
price $u$ using \eqref{eq:u.three-halves} and then by solving \eqref{eq:imp.vol.def} numerically.
The dashed line corresponds to our third order implied volatility approximation $\bar{\sig}_3$,
which we compute by summing the terms in \eqref{eq:sig.three-halves}.
    RIGHT: We plot the absolute value of the relative error $|\bar{\sig}_3-\sig|/\sig$ of our third order implied volatility approximation as a function of $\log$-moneyness $(k-x)$ and maturity $t$.  The horizontal axis represents $\log$-moneyness $(k-x)$ and the vertical axis represents maturity $t$.  Ranging from darkest to lightest, the regions above represent relative errors of $<1\%$, $1\%$ to $2\%$, $2\%$ to $3\%$ and $>3\%$.
    We use the following parameters: $\kappa=0.25$, $\theta=0.1$, $\del=0.8$, $\rho=-0.85$ $x = 0.0$, $y =
\log \theta$.
    }
  \label{fig:three-halves}
\end{sidewaysfigure}


\begin{sidewaysfigure}
    \centering
    \begin{tabular}{ | c | c | }
        \hline
        $t=0.1$ & $t=1.0$ \\
        \includegraphics[width=.26\textwidth,height=0.17\textheight]{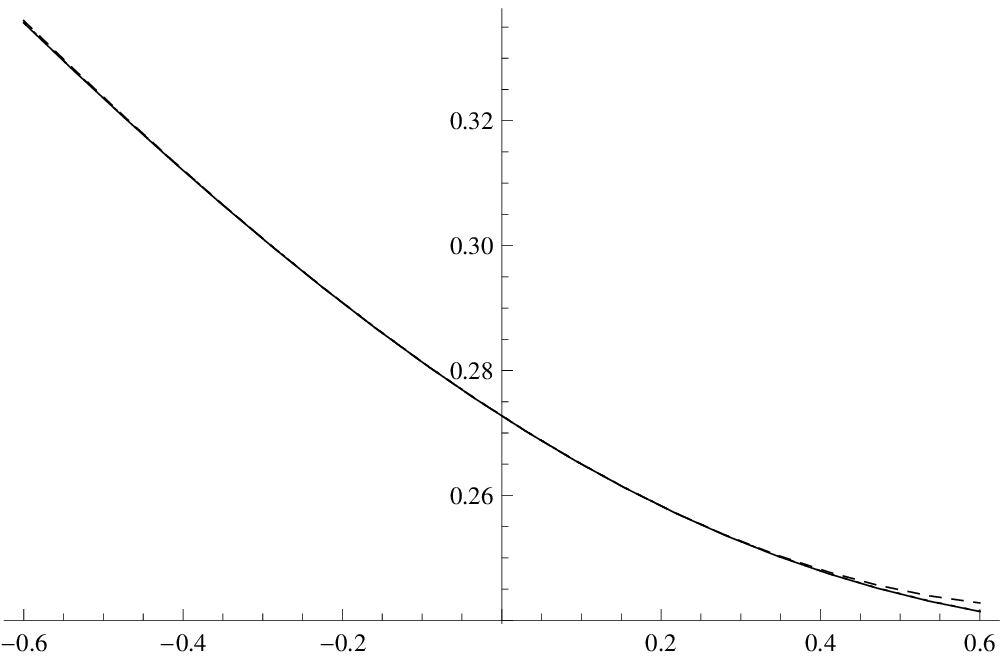} &
        \includegraphics[width=.26\textwidth,height=0.17\textheight]{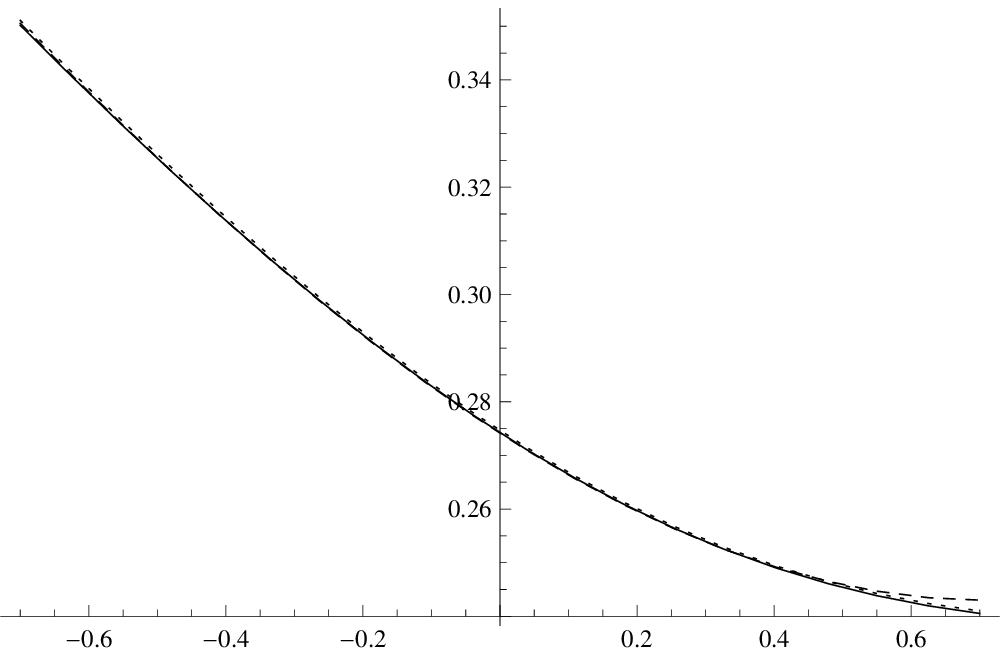}\\ \hline
        $t=5.0$ & $t=10.0$ \\
        \includegraphics[width=.26\textwidth,height=0.17\textheight]{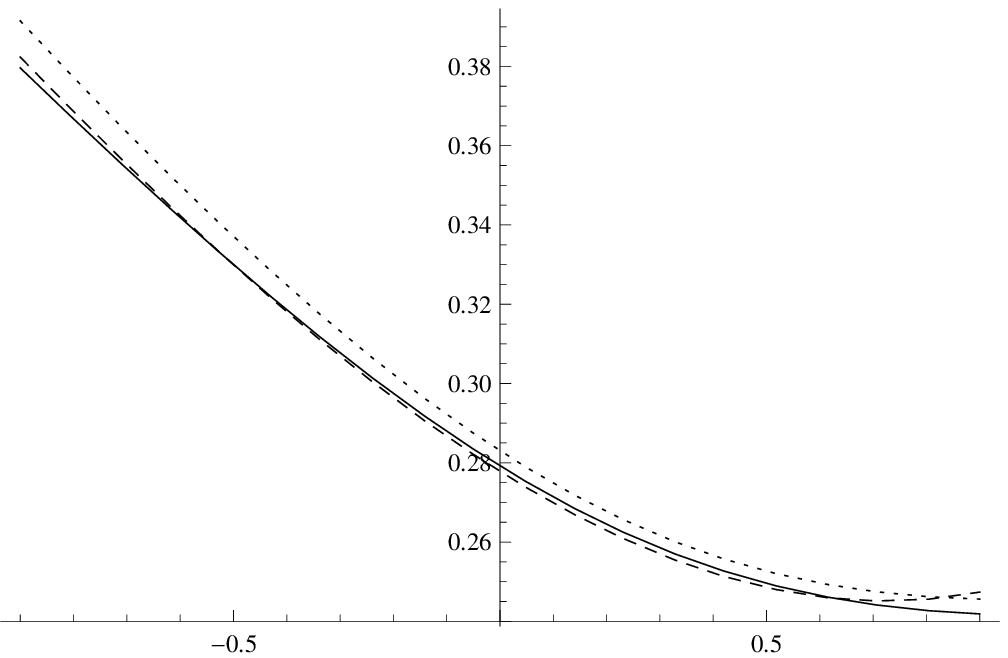} &
        \includegraphics[width=.26\textwidth,height=0.17\textheight]{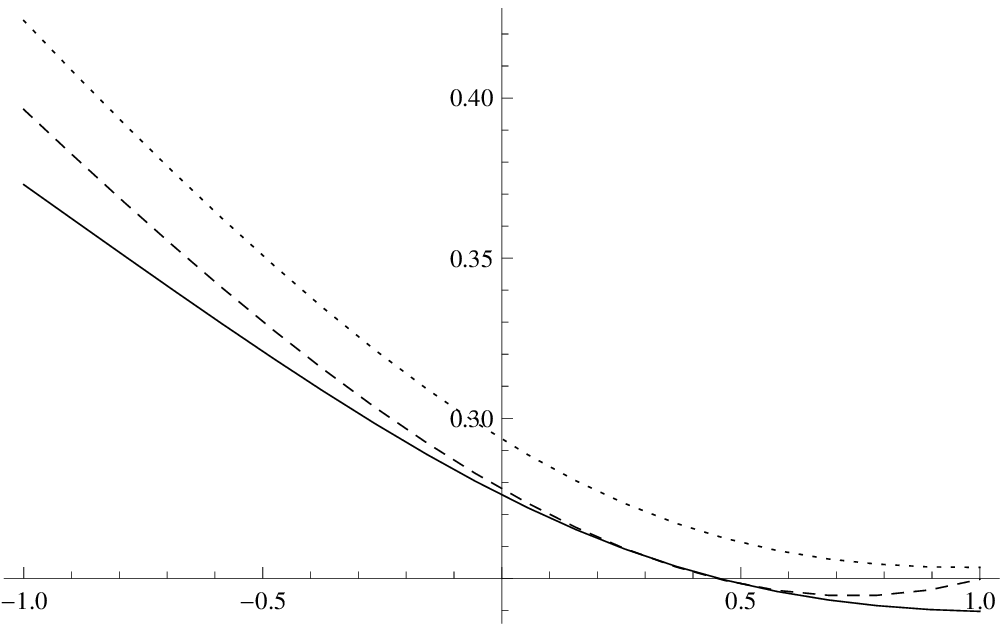}\\ \hline
    \end{tabular}
    \begin{tabular}{ c }
    \includegraphics[width=0.41\textwidth,height=0.40\textheight]{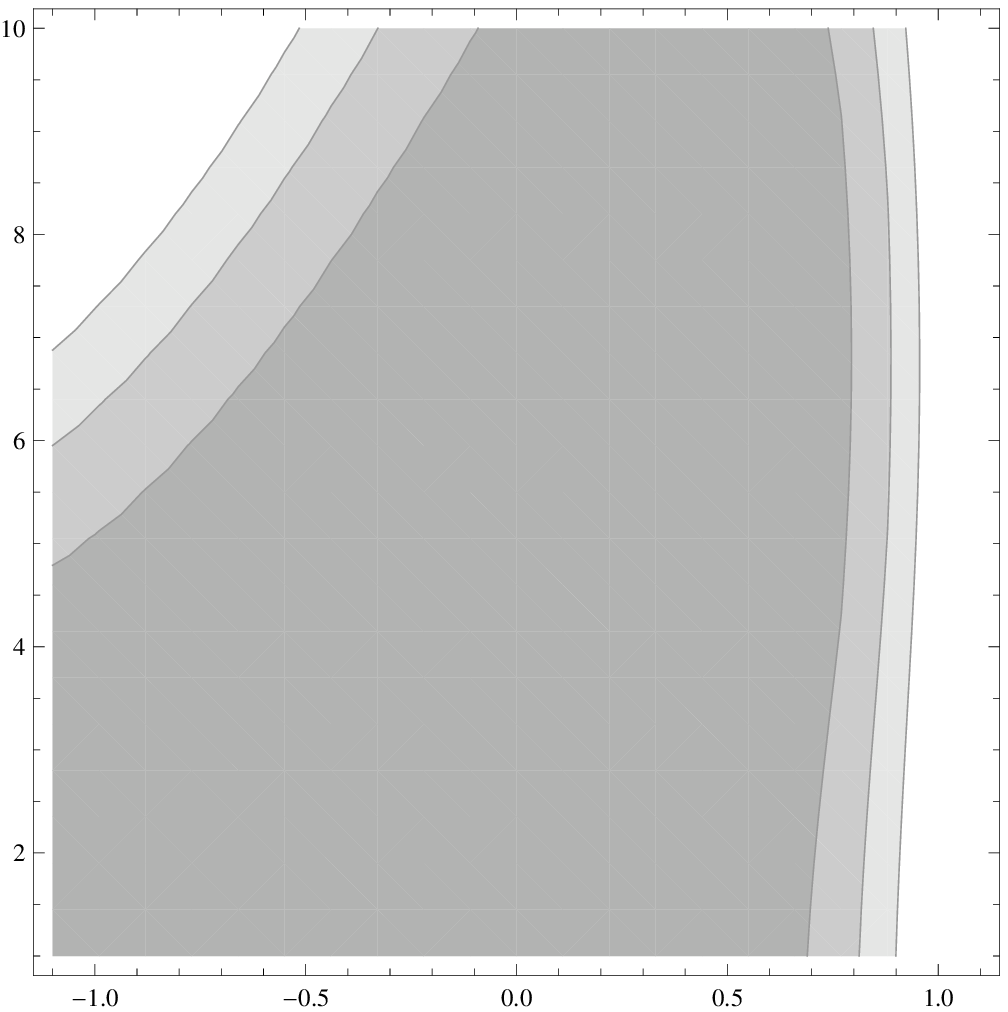}
    \end{tabular}
  \caption{
    LEFT: Implied volatility in the SABR model \eqref{eq:model.SABR} is plotted as a function of $\log$-moneyness $(k-x)$ for four different maturities $t$.  The solid line corresponds to the implied volatility $\sig$, obtained by computing the exact price $u$ using \eqref{eq:u.SABR} and then by solving \eqref{eq:imp.vol.def} numerically.  The dashed line corresponds to our third order implied volatility approximation $\bar{\sig}_3$, which we compute using \eqref{eq:sig.SABR}.  The dotted line corresponds to the implied volatility expansion $\sig^\textrm{HKLW}$ of \cite{sabr}, which is computed using \eqref{eq:sig.hklw}.  For the two shortest maturities, both implied volatility expansions $\bar{\sig}_3$ and $\sig^\textrm{HKLW}$ provide an excellent approximation of the true implied volatility $\sig$.  However, for the two longest maturities, it is clear that our third order expansion $\bar{\sig}_3$ provides a better approximation to the true implied volatility $\sig$ than does the implied volatility expansion $\sig^\textrm{HKLW}$ of \cite{sabr}.
    RIGHT: We plot the absolute value of the relative error $|\bar{\sig}_3-\sig|/\sig$ of our third order implied volatility approximation as a function of $\log$-moneyness $(k-x)$ and maturity $t$.  The horizontal axis represents $\log$-moneyness $(k-x)$ and the vertical axis represents maturity $t$.  Ranging from darkest to lightest, the regions above represent relative errors of $<1\%$, $1\%$ to $2\%$, $2\%$ to $3\%$ and $>3\%$.
    We use the following parameters: $\beta=0.4$, $\del=0.25$, $\rho=0.0$, $x=0.0$, $y=-1.3$.
    }
  \label{fig:sabr}
\end{sidewaysfigure}

\end{document}